
\documentclass[11pt]{article}
\usepackage{graphicx,amsmath,amsfonts,bbm,mathtools,changepage,soul}
\usepackage{algorithm,algorithmic,caption,comment,amssymb,hyperref}
\usepackage[usenames,dvipsnames]{color}
\usepackage[round]{natbib}
\usepackage[height=9in,a4paper,hmargin={3cm,3cm}]{geometry}
\usepackage{natbib}

\setcounter{MaxMatrixCols}{10}
\newcommand\blfootnote[1]{%
  \begingroup
  \renewcommand\thefootnote{}\footnote{#1}%
  \addtocounter{footnote}{-1}%
  \endgroup
}

\definecolor{darkred}{RGB}{150,0,0}
\definecolor{darkgreen}{RGB}{0,150,0}
\definecolor{darkblue}{RGB}{0,0,200}
\hypersetup{colorlinks=true, linkcolor=darkred, citecolor=darkgreen, urlcolor=darkblue}

\numberwithin{equation}{section}
\def \endprf{\hfill {\vrule height6pt width6pt depth0pt}\medskip}
\newenvironment{proof}{\noindent {\bf Proof} }{\endprf\par}
\newcommand{\qed}{{\unskip\nobreak\hfil\penalty50\hskip2em\vadjust{}
           \nobreak\hfil$\Box$\parfillskip=0pt\finalhyphendemerits=0\par}}

\newtheorem{theorem}{\textbf{Theorem}}
\newtheorem{lemma}{\textbf{Lemma}}
\newtheorem{corollary}{\textbf{Corollary}}
\newtheorem{example}{\textbf{Example}}
\newtheorem{definition}{\textbf{Definition}}

\newtheorem{remark}{\textbf{Remark}}

\newtheorem{conjecture}{\textbf{Conjecture}}
\newtheorem{proposition}{\textbf{Proposition}}
\newtheorem{assumption}{\textbf{Assumption}}

\linespread{1.5}
\typeout{TCILATEX Macros for Scientific Word 2.5 <22 Dec 95>.}
\typeout{NOTICE:  This macro file is NOT proprietary and may be 
freely copied and distributed.}
\makeatletter
%
\newcount\@hour\newcount\@minute\chardef\@x10\chardef\@xv60
\def\tcitime{
\def\@time{%
  \@minute\time\@hour\@minute\divide\@hour\@xv
  \ifnum\@hour<\@x 0\fi\the\@hour:%
  \multiply\@hour\@xv\advance\@minute-\@hour
  \ifnum\@minute<\@x 0\fi\the\@minute
  }}%

\@ifundefined{hyperref}{}{}

\@ifundefined{qExtProgCall}{\def\qExtProgCall#1#2#3#4#5#6{\relax}}{}
%
%
%
%
\def\QCTOpt[#1]#2{%
  \def\QCTOptB{#1}
  \def\QCTOptA{#2}
}
\def\QCTNOpt#1{%
  \def\QCTOptA{#1}
  \let\QCTOptB\empty
}
\def\Qct{%
  \@ifnextchar[{%
    \QCTOpt}{\QCTNOpt}
}
\def\QCBOpt[#1]#2{%
  \def\QCBOptB{#1}
  \def\QCBOptA{#2}
}
\def\QCBNOpt#1{%
  \def\QCBOptA{#1}
  \let\QCBOptB\empty
}
\def\Qcb{%
  \@ifnextchar[{%
    \QCBOpt}{\QCBNOpt}
}
\def\PrepCapArgs{%
  \ifx\QCBOptA\empty
    \ifx\QCTOptA\empty
      {}%
    \else
      \ifx\QCTOptB\empty
        {\QCTOptA}%
      \else
        [\QCTOptB]{\QCTOptA}%
      \fi
    \fi
  \else
    \ifx\QCBOptA\empty
      {}%
    \else
      \ifx\QCBOptB\empty
        {\QCBOptA}%
      \else
        [\QCBOptB]{\QCBOptA}%
      \fi
    \fi
  \fi
}
\newcount\GRAPHICSTYPE
\GRAPHICSTYPE=\z@
\def\GRAPHICSPS#1{%
 \ifcase\GRAPHICSTYPE
   \special{ps: #1}%
 \or
   \special{language "PS", include "#1"}%
 \fi
}%
%
%
%
\def\graffile#1#2#3#4{%
    \leavevmode
    \raise -#4 \BOXTHEFRAME{%
        \hbox to #2{\raise #3\hbox to #2{\null #1\hfil}}}%
}%
%
\def\draftbox#1#2#3#4{%
 \leavevmode\raise -#4 \hbox{%
  \frame{\rlap{\protect\tiny #1}\hbox to #2%
   {\vrule height#3 width\z@ depth\z@\hfil}%
  }%
 }%
}%
\newcount\draft
\draft=\z@

\newif\ifwasdraft
\wasdraftfalse

\def\GRAPHIC#1#2#3#4#5{%
 \ifnum\draft=\@ne\draftbox{#2}{#3}{#4}{#5}%
  \else\graffile{#1}{#3}{#4}{#5}%
  \fi
 }%
\def\addtoLaTeXparams#1{%
    \edef\LaTeXparams{\LaTeXparams #1}}%
%

\newif\ifBoxFrame \BoxFramefalse
\newif\ifOverFrame \OverFramefalse
\newif\ifUnderFrame \UnderFramefalse

\def\BOXTHEFRAME#1{%
   \hbox{%
      \ifBoxFrame
         \frame{#1}%
      \else
         {#1}%
      \fi
   }%
}

\def\doFRAMEparams#1{\BoxFramefalse\OverFramefalse\UnderFramefalse\readFRAMEparams#1\end}%
\def\readFRAMEparams#1{%
 \ifx#1\end%
  \let\next=\relax
  \else
  \ifx#1i\dispkind=\z@\fi
  \ifx#1d\dispkind=\@ne\fi
  \ifx#1f\dispkind=\tw@\fi
  \ifx#1t\addtoLaTeXparams{t}\fi
  \ifx#1b\addtoLaTeXparams{b}\fi
  \ifx#1p\addtoLaTeXparams{p}\fi
  \ifx#1h\addtoLaTeXparams{h}\fi
  \ifx#1X\BoxFrametrue\fi
  \ifx#1O\OverFrametrue\fi
  \ifx#1U\UnderFrametrue\fi
  \ifx#1w
    \ifnum\draft=1\wasdrafttrue\else\wasdraftfalse\fi
    \draft=\@ne
  \fi
  \let\next=\readFRAMEparams
  \fi
 \next
 }%
%

\def\IFRAME#1#2#3#4#5#6{%
      \bgroup
      \let\QCTOptA\empty
      \let\QCTOptB\empty
      \let\QCBOptA\empty
      \let\QCBOptB\empty
      #6%
      \parindent=0pt%
      \leftskip=0pt
      \rightskip=0pt
      \setbox0 = \hbox{\QCBOptA}%
      \@tempdima = #1\relax
      \ifOverFrame
          \typeout{This is not implemented yet}%
          \show\HELP
      \else
         \ifdim\wd0>\@tempdima
            \advance\@tempdima by \@tempdima
            \ifdim\wd0 >\@tempdima
               \textwidth=\@tempdima
               \setbox1 =\vbox{%
                  \noindent\hbox to \@tempdima{\hfill\GRAPHIC{#5}{#4}{#1}{#2}{#3}\hfill}\\%
                  \noindent\hbox to \@tempdima{\parbox[b]{\@tempdima}{\QCBOptA}}%
               }%
               \wd1=\@tempdima
            \else
               \textwidth=\wd0
               \setbox1 =\vbox{%
                 \noindent\hbox to \wd0{\hfill\GRAPHIC{#5}{#4}{#1}{#2}{#3}\hfill}\\%
                 \noindent\hbox{\QCBOptA}%
               }%
               \wd1=\wd0
            \fi
         \else
            \ifdim\wd0>0pt
              \hsize=\@tempdima
              \setbox1 =\vbox{%
                \unskip\GRAPHIC{#5}{#4}{#1}{#2}{0pt}%
                \break
                \unskip\hbox to \@tempdima{\hfill \QCBOptA\hfill}%
              }%
              \wd1=\@tempdima
           \else
              \hsize=\@tempdima
              \setbox1 =\vbox{%
                \unskip\GRAPHIC{#5}{#4}{#1}{#2}{0pt}%
              }%
              \wd1=\@tempdima
           \fi
         \fi
         \@tempdimb=\ht1
         \advance\@tempdimb by \dp1
         \advance\@tempdimb by -#2%
         \advance\@tempdimb by #3%
         \leavevmode
         \raise -\@tempdimb \hbox{\box1}%
      \fi
      \egroup%
}%
%
\def\DFRAME#1#2#3#4#5{%
 \begin{center}
     \let\QCTOptA\empty
     \let\QCTOptB\empty
     \let\QCBOptA\empty
     \let\QCBOptB\empty
     \ifOverFrame 
        #5\QCTOptA\par
     \fi
     \GRAPHIC{#4}{#3}{#1}{#2}{\z@}
     \ifUnderFrame 
        \nobreak\par #5\QCBOptA
     \fi
 \end{center}%
 }%
%
\def\FFRAME#1#2#3#4#5#6#7{%
 \begin{figure}[#1]%
  \let\QCTOptA\empty
  \let\QCTOptB\empty
  \let\QCBOptA\empty
  \let\QCBOptB\empty
  \ifOverFrame
    #4
    \ifx\QCTOptA\empty
    \else
      \ifx\QCTOptB\empty
        \caption{\QCTOptA}%
      \else
        \caption[\QCTOptB]{\QCTOptA}%
      \fi
    \fi
    \ifUnderFrame\else
      \label{#5}%
    \fi
  \else
    \UnderFrametrue%
  \fi
  \begin{center}\GRAPHIC{#7}{#6}{#2}{#3}{\z@}\end{center}%
  \ifUnderFrame
    #4
    \ifx\QCBOptA\empty
      \caption{}%
    \else
      \ifx\QCBOptB\empty
        \caption{\QCBOptA}%
      \else
        \caption[\QCBOptB]{\QCBOptA}%
      \fi
    \fi
    \label{#5}%
  \fi
  \end{figure}%
 }%
%
%
%
%
%
\newcount\dispkind%

\def\makeactives{
  \catcode`\"=\active
  \catcode`\;=\active
  \catcode`\:=\active
  \catcode`\'=\active
  \catcode`\~=\active
}
\bgroup
   \makeactives
   \gdef\activesoff{%
      \def"{\string"}
      \def;{\string;}
      \def:{\string:}
      \def'{\string'}
      \def~{\string~}
    }
\egroup

\def\FRAME#1#2#3#4#5#6#7#8{%
 \bgroup
 \@ifundefined{bbl@deactivate}{}{\activesoff}
 \ifnum\draft=\@ne
   \wasdrafttrue
 \else
   \wasdraftfalse%
 \fi
 \def\LaTeXparams{}%
 \dispkind=\z@
 \def\LaTeXparams{}%
 \doFRAMEparams{#1}%
 \ifnum\dispkind=\z@\IFRAME{#2}{#3}{#4}{#7}{#8}{#5}\else
  \ifnum\dispkind=\@ne\DFRAME{#2}{#3}{#7}{#8}{#5}\else
   \ifnum\dispkind=\tw@
    \edef\@tempa{\noexpand\FFRAME{\LaTeXparams}}%
    \@tempa{#2}{#3}{#5}{#6}{#7}{#8}%
    \fi
   \fi
  \fi
  \ifwasdraft\draft=1\else\draft=0\fi{}%
  \egroup
 }%
%

\def\TEXUX#1{"texux"}

%
%
%
%
%
%
%
%
%

%
\long\def\QQQ#1#2{%
     \long\expandafter\def\csname#1\endcsname{#2}}%
\@ifundefined{QTP}{\def\QTP#1{}}{}
\@ifundefined{QEXCLUDE}{\def\QEXCLUDE#1{}}{}
\@ifundefined{Qlb}{}{}
\@ifundefined{Qlt}{}{}
\long\def\QQA#1#2{}%
\def\QTR#1#2{{\csname#1\endcsname #2}}
\def\EXPAND#1[#2]#3{}%
\def\NOEXPAND#1[#2]#3{}%
\def\LaTeXparent#1{}%
\def\ChildStyles#1{}%
\def\ChildDefaults#1{}%
\def\QTagDef#1#2#3{}%
%
\@ifundefined{StyleEditBeginDoc}{}{}
%
\def\QQfnmark#1{\footnotemark}

%
\def\makeatletter\input gnuindex.sty\makeatother\makeindex{\makeatletter\input gnuindex.sty\makeatother\makeindex}%
\@ifundefined{INDEX}{\def\INDEX#1#2{}{}}{}%
\@ifundefined{SUBINDEX}{\def\SUBINDEX#1#2#3{}{}{}}{}%
\@ifundefined{initial}%
   {\def\initial#1{\bigbreak{\raggedright\large\bf #1}\kern 2\p@\penalty3000}}%
   {}%
\@ifundefined{entry}{}{}%
\@ifundefined{primary}{}{}%
\@ifundefined{secondary}{}{}%
\@ifundefined{ZZZ}{}{\makeatletter\input gnuindex.sty\makeatother\makeindex\makeatletter}%
%
\@ifundefined{abstract}{%
 \def\abstract{%
  \if@twocolumn
   \section*{Abstract (Not appropriate in this style!)}%
   \else \small 
   \begin{center}{\bf Abstract\vspace{-.5em}\vspace{\z@}}\end{center}%
   \quotation 
   \fi
  }%
 }{%
 }%
\@ifundefined{endabstract}{\def\endabstract
  {\if@twocolumn\else\endquotation\fi}}{}%
\@ifundefined{maketitle}{\def\maketitle#1{}}{}%
\@ifundefined{affiliation}{\def\affiliation#1{}}{}%
\@ifundefined{proof}{}{}%
\@ifundefined{endproof}{}{}%
\@ifundefined{newfield}{\def\newfield#1#2{}}{}%
\@ifundefined{chapter}{\def\chapter#1{\par(Chapter head:)#1\par }%
 \newcount\c@chapter}{}%
\@ifundefined{part}{\def\part#1{\par(Part head:)#1\par }}{}%
\@ifundefined{section}{\def\section#1{\par(Section head:)#1\par }}{}%
\@ifundefined{subsection}{\def\subsection#1%
 {\par(Subsection head:)#1\par }}{}%
\@ifundefined{subsubsection}{\def\subsubsection#1%
 {\par(Subsubsection head:)#1\par }}{}%
\@ifundefined{paragraph}{\def\paragraph#1%
 {\par(Subsubsubsection head:)#1\par }}{}%
\@ifundefined{subparagraph}{\def\subparagraph#1%
 {\par(Subsubsubsubsection head:)#1\par }}{}%
\@ifundefined{therefore}{}{}%
\@ifundefined{backepsilon}{}{}%
\@ifundefined{yen}{}{}%
\@ifundefined{registered}{%
   \def\registered{\relax\ifmmode{}\r@gistered
                    \else$\m@th\r@gistered$\fi}%
 \def\r@gistered{^{\ooalign
  {\hfil\raise.07ex\hbox{$\scriptstyle\rm\text{R}$}\hfil\crcr
  \mathhexbox20D}}}}{}%
\@ifundefined{Eth}{}{}%
\@ifundefined{eth}{}{}%
\@ifundefined{Thorn}{}{}%
\@ifundefined{thorn}{}{}%
%
\@ifundefined{degree}{}{}%
%
\newdimen\theight
\def\Column{%
 \vadjust{\setbox\z@=\hbox{\scriptsize\quad\quad tcol}%
  \theight=\ht\z@\advance\theight by \dp\z@\advance\theight by \lineskip
  \kern -\theight \vbox to \theight{%
   \rightline{\rlap{\box\z@}}%
   \vss
   }%
  }%
 }%
\def\qed{%
 \ifhmode\unskip\nobreak\fi\ifmmode\ifinner\else\hskip5\p@\fi\fi
 \hbox{\hskip5\p@\vrule width4\p@ height6\p@ depth1.5\p@\hskip\p@}%
 }%
\def\miss{\hbox{\vrule height2\p@ width 2\p@ depth\z@}}%
%
%
\def\tcol#1{{\baselineskip=6\p@ \vcenter{#1}} \Column}  %
%
%
%
%
%

\def\newfmtname{LaTeX2e}
\def\chkcompat{%
   \if@compatibility
   \else
     \usepackage{latexsym}
   \fi
}

\ifx\fmtname\newfmtname
  \DeclareOldFontCommand{\rm}{\normalfont\rmfamily}{\mathrm}
  \DeclareOldFontCommand{\sf}{\normalfont\sffamily}{\mathsf}
  \DeclareOldFontCommand{\tt}{\normalfont\ttfamily}{\mathtt}
  \DeclareOldFontCommand{\bf}{\normalfont\bfseries}{\mathbf}
  \DeclareOldFontCommand{\it}{\normalfont\itshape}{\mathit}
  \DeclareOldFontCommand{\sl}{\normalfont\slshape}{\@nomath\sl}
  \DeclareOldFontCommand{\sc}{\normalfont\scshape}{\@nomath\sc}
  \chkcompat
\fi

%

\def\alpha{\Greekmath 010B }%
\def\beta{\Greekmath 010C }%
\def\gamma{\Greekmath 010D }%
\def\delta{\Greekmath 010E }%
\def\epsilon{\Greekmath 010F }%
\def\zeta{\Greekmath 0110 }%
\def\eta{\Greekmath 0111 }%
\def\theta{\Greekmath 0112 }%
\def\iota{\Greekmath 0113 }%
\def\kappa{\Greekmath 0114 }%
\def\lambda{\Greekmath 0115 }%
\def\mu{\Greekmath 0116 }%
\def\nu{\Greekmath 0117 }%
\def\xi{\Greekmath 0118 }%
\def\pi{\Greekmath 0119 }%
\def\rho{\Greekmath 011A }%
\def\sigma{\Greekmath 011B }%
\def\tau{\Greekmath 011C }%
\def\upsilon{\Greekmath 011D }%
\def\phi{\Greekmath 011E }%
\def\chi{\Greekmath 011F }%
\def\psi{\Greekmath 0120 }%
\def\omega{\Greekmath 0121 }%
\def\varepsilon{\Greekmath 0122 }%
\def\vartheta{\Greekmath 0123 }%
\def\varpi{\Greekmath 0124 }%
\def\varrho{\Greekmath 0125 }%
\def\varsigma{\Greekmath 0126 }%
\def\varphi{\Greekmath 0127 }%

\def\nabla{\Greekmath 0272 }
\def\FindBoldGroup{%
   {\setbox0=\hbox{$\mathbf{x\global\edef\theboldgroup{\the\mathgroup}}$}}%
}

\def\Greekmath#1#2#3#4{%
    \if@compatibility
        \ifnum\mathgroup=\symbold
           \mathchoice{\mbox{\boldmath$\displaystyle\mathchar"#1#2#3#4$}}%
                      {\mbox{\boldmath$\textstyle\mathchar"#1#2#3#4$}}%
                      {\mbox{\boldmath$\scriptstyle\mathchar"#1#2#3#4$}}%
                      {\mbox{\boldmath$\scriptscriptstyle\mathchar"#1#2#3#4$}}%
        \else
           \mathchar"#1#2#3#4%
        \fi 
    \else 
        \FindBoldGroup
        \ifnum\mathgroup=\theboldgroup 
           \mathchoice{\mbox{\boldmath$\displaystyle\mathchar"#1#2#3#4$}}%
                      {\mbox{\boldmath$\textstyle\mathchar"#1#2#3#4$}}%
                      {\mbox{\boldmath$\scriptstyle\mathchar"#1#2#3#4$}}%
                      {\mbox{\boldmath$\scriptscriptstyle\mathchar"#1#2#3#4$}}%
        \else
           \mathchar"#1#2#3#4%
        \fi     	    
	  \fi}

\newif\ifGreekBold  \GreekBoldfalse
\let\SAVEPBF=\pbf
\def\pbf{\GreekBoldtrue\SAVEPBF}%

\@ifundefined{theorem}{\newtheorem{theorem}{Theorem}}{}
\@ifundefined{lemma}{\newtheorem{lemma}[theorem]{Lemma}}{}
\@ifundefined{corollary}{\newtheorem{corollary}[theorem]{Corollary}}{}
\@ifundefined{conjecture}{}{}
\@ifundefined{proposition}{\newtheorem{proposition}[theorem]{Proposition}}{}
\@ifundefined{axiom}{}{}
\@ifundefined{remark}{\newtheorem{remark}{Remark}}{}
\@ifundefined{example}{\newtheorem{example}{Example}}{}
\@ifundefined{exercise}{}{}
\@ifundefined{definition}{\newtheorem{definition}{Definition}}{}

\@ifundefined{mathletters}{%
  \newcounter{equationnumber}  
  \def\mathletters{%
     \addtocounter{equation}{1}
     \edef\@currentlabel{\theequation}%
     \setcounter{equationnumber}{\c@equation}
     \setcounter{equation}{0}%
     \edef\theequation{\@currentlabel\noexpand\alph{equation}}%
  }
  
}{}

\@ifundefined{BibTeX}{%
    \def\BibTeX{{\rm B\kern-.05em{\sc i\kern-.025em b}\kern-.08em
                 T\kern-.1667em\lower.7ex\hbox{E}\kern-.125emX}}}{}%
\@ifundefined{AmS}%
    {\def\AmS{{\protect\usefont{OMS}{cmsy}{m}{n}%
                A\kern-.1667em\lower.5ex\hbox{M}\kern-.125emS}}}{}%
\@ifundefined{AmSTeX}{}{}%
%

%
%
\ifx\ds@amstex\relax
   \message{amstex already loaded}\makeatother 
\else
   \@ifpackageloaded{amstex}%
      {\message{amstex already loaded}\makeatother }
      {}
   \@ifpackageloaded{amsgen}%
      {\message{amsgen already loaded}\makeatother }
      {}
\fi
%
%
%
%
\let\DOTSI\relax
\def\RIfM@{\relax\ifmmode}%
\def\FN@{\futurelet\next}%
\newcount\intno@
\def\iint{\DOTSI\intno@\tw@\FN@\ints@}%
\def\iiint{\DOTSI\intno@\thr@@\FN@\ints@}%
\def\iiiint{\DOTSI\intno@4 \FN@\ints@}%
\def\idotsint{\DOTSI\intno@\z@\FN@\ints@}%
\def\ints@{\findlimits@\ints@@}%
\newif\iflimtoken@
\newif\iflimits@
\def\findlimits@{\limtoken@true\ifx\next\limits\limits@true
 \else\ifx\next\nolimits\limits@false\else
 \limtoken@false\ifx\ilimits@\nolimits\limits@false\else
 \ifinner\limits@false\else\limits@true\fi\fi\fi\fi}%
\def\multint@{\int\ifnum\intno@=\z@\intdots@                          
 \else\intkern@\fi                                                    
 \ifnum\intno@>\tw@\int\intkern@\fi                                   
 \ifnum\intno@>\thr@@\int\intkern@\fi                                 
 \int}
\def\multintlimits@{\intop\ifnum\intno@=\z@\intdots@\else\intkern@\fi
 \ifnum\intno@>\tw@\intop\intkern@\fi
 \ifnum\intno@>\thr@@\intop\intkern@\fi\intop}%
\def\intic@{%
    \mathchoice{\hskip.5em}{\hskip.4em}{\hskip.4em}{\hskip.4em}}%
\def\negintic@{\mathchoice
 {\hskip-.5em}{\hskip-.4em}{\hskip-.4em}{\hskip-.4em}}%
\def\ints@@{\iflimtoken@                                              
 \def\ints@@@{\iflimits@\negintic@
   \mathop{\intic@\multintlimits@}\limits                             
  \else\multint@\nolimits\fi                                          
  \eat@}
 \else                                                                
 \def\ints@@@{\iflimits@\negintic@
  \mathop{\intic@\multintlimits@}\limits\else
  \multint@\nolimits\fi}\fi\ints@@@}%
\def\intkern@{\mathchoice{\!\!\!}{\!\!}{\!\!}{\!\!}}%
\def\plaincdots@{\mathinner{\cdotp\cdotp\cdotp}}%
\def\intdots@{\mathchoice{\plaincdots@}%
 {{\cdotp}\mkern1.5mu{\cdotp}\mkern1.5mu{\cdotp}}%
 {{\cdotp}\mkern1mu{\cdotp}\mkern1mu{\cdotp}}%
 {{\cdotp}\mkern1mu{\cdotp}\mkern1mu{\cdotp}}}%
%
%
%
\def\RIfM@{\relax\protect\ifmmode}
\def\text{\RIfM@\expandafter\text@\else\expandafter\mbox\fi}
\let\nfss@text\text
\def\text@#1{\mathchoice
   {\textdef@\displaystyle\f@size{#1}}%
   {\textdef@\textstyle\tf@size{\firstchoice@false #1}}%
   {\textdef@\textstyle\sf@size{\firstchoice@false #1}}%
   {\textdef@\textstyle \ssf@size{\firstchoice@false #1}}%
   \glb@settings}

\def\textdef@#1#2#3{\hbox{{%
                    \everymath{#1}%
                    \let\f@size#2\selectfont
                    #3}}}
\newif\iffirstchoice@
\firstchoice@true
%
%
%
%
%
\def\Let@{\relax\iffalse{\fi\let\\=\cr\iffalse}\fi}%
\def\vspace@{\def\vspace##1{\crcr\noalign{\vskip##1\relax}}}%
\def\multilimits@{\bgroup\vspace@\Let@
 \baselineskip\fontdimen10 \scriptfont\tw@
 \advance\baselineskip\fontdimen12 \scriptfont\tw@
 \lineskip\thr@@\fontdimen8 \scriptfont\thr@@
 \lineskiplimit\lineskip
 \vbox\bgroup\ialign\bgroup\hfil$\m@th\scriptstyle{##}$\hfil\crcr}%
\def\Sb{_\multilimits@}%
\def\endSb{\crcr\egroup\egroup\egroup}%
\def\Sp{^\multilimits@}%

%
%
%
\newdimen\ex@
\ex@.2326ex
\def\rightarrowfill@#1{$#1\m@th\mathord-\mkern-6mu\cleaders
 \hbox{$#1\mkern-2mu\mathord-\mkern-2mu$}\hfill
 \mkern-6mu\mathord\rightarrow$}%
\def\leftarrowfill@#1{$#1\m@th\mathord\leftarrow\mkern-6mu\cleaders
 \hbox{$#1\mkern-2mu\mathord-\mkern-2mu$}\hfill\mkern-6mu\mathord-$}%
\def\leftrightarrowfill@#1{$#1\m@th\mathord\leftarrow
\mkern-6mu\cleaders
 \hbox{$#1\mkern-2mu\mathord-\mkern-2mu$}\hfill
 \mkern-6mu\mathord\rightarrow$}%
\def\overrightarrow{\mathpalette\overrightarrow@}%
\def\overrightarrow@#1#2{\vbox{\ialign{##\crcr\rightarrowfill@#1\crcr
 \noalign{\kern-\ex@\nointerlineskip}$\m@th\hfil#1#2\hfil$\crcr}}}%

\def\overleftarrow{\mathpalette\overleftarrow@}%
\def\overleftarrow@#1#2{\vbox{\ialign{##\crcr\leftarrowfill@#1\crcr
 \noalign{\kern-\ex@\nointerlineskip}$\m@th\hfil#1#2\hfil$\crcr}}}%
\def\overleftrightarrow{\mathpalette\overleftrightarrow@}%
\def\overleftrightarrow@#1#2{\vbox{\ialign{##\crcr
   \leftrightarrowfill@#1\crcr
 \noalign{\kern-\ex@\nointerlineskip}$\m@th\hfil#1#2\hfil$\crcr}}}%
\def\underrightarrow{\mathpalette\underrightarrow@}%
\def\underrightarrow@#1#2{\vtop{\ialign{##\crcr$\m@th\hfil#1#2\hfil
  $\crcr\noalign{\nointerlineskip}\rightarrowfill@#1\crcr}}}%

\def\underleftarrow{\mathpalette\underleftarrow@}%
\def\underleftarrow@#1#2{\vtop{\ialign{##\crcr$\m@th\hfil#1#2\hfil
  $\crcr\noalign{\nointerlineskip}\leftarrowfill@#1\crcr}}}%
\def\underleftrightarrow{\mathpalette\underleftrightarrow@}%
\def\underleftrightarrow@#1#2{\vtop{\ialign{##\crcr$\m@th
  \hfil#1#2\hfil$\crcr
 \noalign{\nointerlineskip}\leftrightarrowfill@#1\crcr}}}%


\def\qopnamewl@#1{\mathop{\operator@font#1}\nlimits@}
\let\nlimits@\displaylimits
\def\setboxz@h{\setbox\z@\hbox}

\def\varlim@#1#2{\mathop{\vtop{\ialign{##\crcr
 \hfil$#1\m@th\operator@font lim$\hfil\crcr
 \noalign{\nointerlineskip}#2#1\crcr
 \noalign{\nointerlineskip\kern-\ex@}\crcr}}}}

 \def\rightarrowfill@#1{\m@th\setboxz@h{$#1-$}\ht\z@\z@
  $#1\copy\z@\mkern-6mu\cleaders
  \hbox{$#1\mkern-2mu\box\z@\mkern-2mu$}\hfill
  \mkern-6mu\mathord\rightarrow$}
\def\leftarrowfill@#1{\m@th\setboxz@h{$#1-$}\ht\z@\z@
  $#1\mathord\leftarrow\mkern-6mu\cleaders
  \hbox{$#1\mkern-2mu\copy\z@\mkern-2mu$}\hfill
  \mkern-6mu\box\z@$}

\def\projlim{\qopnamewl@{proj\,lim}}
\def\injlim{\qopnamewl@{inj\,lim}}
\def\varinjlim{\mathpalette\varlim@\rightarrowfill@}
\def\varprojlim{\mathpalette\varlim@\leftarrowfill@}
\def\varliminf{\mathpalette\varliminf@{}}
\def\varliminf@#1{\mathop{\underline{\vrule\@depth.2\ex@\@width\z@
   \hbox{$#1\m@th\operator@font lim$}}}}
\def\varlimsup{\mathpalette\varlimsup@{}}
\def\varlimsup@#1{\mathop{\overline
  {\hbox{$#1\m@th\operator@font lim$}}}}

%
%
%
%
%
%
%
%
%
%
%
%
%
%
%
%
%
%
%
%
%
%
%

%
%
%
%
%
%
%
%
%
%
%
%
%
%
%
%
%
%
%
%
%
%

%
%
%
%
%
%
%
%
%
%
%
%
%
%
%
%
%
%
%
%
%
%
%
%
\begingroup \catcode `|=0 \catcode `[= 1
\catcode`]=2 \catcode `\{=12 \catcode `\}=12
\catcode`\\=12 
|gdef|@alignverbatim#1\end{align}[#1|end[align]]
|gdef|@salignverbatim#1\end{align*}[#1|end[align*]]

|gdef|@alignatverbatim#1\end{alignat}[#1|end[alignat]]
|gdef|@salignatverbatim#1\end{alignat*}[#1|end[alignat*]]

|gdef|@xalignatverbatim#1\end{xalignat}[#1|end[xalignat]]
|gdef|@sxalignatverbatim#1\end{xalignat*}[#1|end[xalignat*]]

|gdef|@gatherverbatim#1\end{gather}[#1|end[gather]]
|gdef|@sgatherverbatim#1\end{gather*}[#1|end[gather*]]

|gdef|@gatherverbatim#1\end{gather}[#1|end[gather]]
|gdef|@sgatherverbatim#1\end{gather*}[#1|end[gather*]]

|gdef|@multilineverbatim#1\end{multiline}[#1|end[multiline]]
|gdef|@smultilineverbatim#1\end{multiline*}[#1|end[multiline*]]

|gdef|@arraxverbatim#1\end{arrax}[#1|end[arrax]]
|gdef|@sarraxverbatim#1\end{arrax*}[#1|end[arrax*]]

|gdef|@tabulaxverbatim#1\end{tabulax}[#1|end[tabulax]]
|gdef|@stabulaxverbatim#1\end{tabulax*}[#1|end[tabulax*]]

|endgroup

\def\align{\@verbatim \frenchspacing\@vobeyspaces \@alignverbatim
You are using the "align" environment in a style in which it is not defined.}

\@namedef{align*}{\@verbatim\@salignverbatim
You are using the "align*" environment in a style in which it is not defined.}
\expandafter\let\csname endalign*\endcsname =\endtrivlist

\def\alignat{\@verbatim \frenchspacing\@vobeyspaces \@alignatverbatim
You are using the "alignat" environment in a style in which it is not defined.}

\@namedef{alignat*}{\@verbatim\@salignatverbatim
You are using the "alignat*" environment in a style in which it is not defined.}
\expandafter\let\csname endalignat*\endcsname =\endtrivlist

\def\xalignat{\@verbatim \frenchspacing\@vobeyspaces \@xalignatverbatim
You are using the "xalignat" environment in a style in which it is not defined.}

\@namedef{xalignat*}{\@verbatim\@sxalignatverbatim
You are using the "xalignat*" environment in a style in which it is not defined.}
\expandafter\let\csname endxalignat*\endcsname =\endtrivlist

\def\gather{\@verbatim \frenchspacing\@vobeyspaces \@gatherverbatim
You are using the "gather" environment in a style in which it is not defined.}

\@namedef{gather*}{\@verbatim\@sgatherverbatim
You are using the "gather*" environment in a style in which it is not defined.}
\expandafter\let\csname endgather*\endcsname =\endtrivlist

\def\multiline{\@verbatim \frenchspacing\@vobeyspaces \@multilineverbatim
You are using the "multiline" environment in a style in which it is not defined.}

\@namedef{multiline*}{\@verbatim\@smultilineverbatim
You are using the "multiline*" environment in a style in which it is not defined.}
\expandafter\let\csname endmultiline*\endcsname =\endtrivlist

\def\arrax{\@verbatim \frenchspacing\@vobeyspaces \@arraxverbatim
You are using a type of "array" construct that is only allowed in AmS-LaTeX.}

\def\tabulax{\@verbatim \frenchspacing\@vobeyspaces \@tabulaxverbatim
You are using a type of "tabular" construct that is only allowed in AmS-LaTeX.}

\@namedef{arrax*}{\@verbatim\@sarraxverbatim
You are using a type of "array*" construct that is only allowed in AmS-LaTeX.}
\expandafter\let\csname endarrax*\endcsname =\endtrivlist

\@namedef{tabulax*}{\@verbatim\@stabulaxverbatim
You are using a type of "tabular*" construct that is only allowed in AmS-LaTeX.}
\expandafter\let\csname endtabulax*\endcsname =\endtrivlist


\def\@@eqncr{\let\@tempa\relax
    \ifcase\@eqcnt \def\@tempa{& & &}\or \def\@tempa{& &}%
      \else \def\@tempa{&}\fi
     \@tempa
     \if@eqnsw
        \iftag@
           \@taggnum
        \else
           \@eqnnum\stepcounter{equation}%
        \fi
     \fi
     \global\tag@false
     \global\@eqnswtrue
     \global\@eqcnt\z@\cr}

 \def\endequation{%
     \ifmmode\ifinner 
      \iftag@
        \addtocounter{equation}{-1} 
        $\hfil
           \displaywidth\linewidth\@taggnum\egroup \endtrivlist
        \global\tag@false
        \global\@ignoretrue   
      \else
        $\hfil
           \displaywidth\linewidth\@eqnnum\egroup \endtrivlist
        \global\tag@false
        \global\@ignoretrue 
      \fi
     \else   
      \iftag@
        \addtocounter{equation}{-1} 
        \eqno \hbox{\@taggnum}
        \global\tag@false%
        $$\global\@ignoretrue
      \else
        \eqno \hbox{\@eqnnum}
        $$\global\@ignoretrue
      \fi
     \fi\fi
 } 

 \newif\iftag@ \tag@false
 
 \def\tag{\@ifnextchar*{\@tagstar}{\@tag}}
 \def\@tag#1{%
     \global\tag@true
     \global\def\@taggnum{(#1)}}
 \def\@tagstar*#1{%
     \global\tag@true
     \global\def\@taggnum{#1}%
}


\makeatother

\begin{document}

\title{Optimal dynamic information provision in traffic routing\blfootnote{Research partially supported by the SNSF grant P300P2 177805 and by ARO MURI W911NF1810407. The authors would like to thank A. Makhdoumi and A. Malekian for useful discussions.}}
\author{Emily Meigs\footnote{Operations Research Center, Massachusetts Institute of Technology, Cambridge, MA, 02139 emeigs@mit.edu
}, Francesca Parise\footnote{Department of Electrical Engineering and Computer Science, Massachusetts Institute of Technology, Cambridge,
MA, 02139 parisef@mit.edu}, Asuman Ozdaglar\footnote{Department of Electrical Engineering and Computer Science, Massachusetts Institute of Technology, Cambridge,
MA, 02139 asuman@mit.edu}, Daron Acemoglu\footnote{Department of Economics, Massachusetts Institute of Technology, Cambridge,
MA, 02139 daron@mit.edu}}
\maketitle

\begin{abstract}
We consider a two-road dynamic routing game where the state of one of the
roads (the \textquotedblleft risky road\textquotedblright ) is stochastic
and may change over time. This generates room for experimentation. A central
planner may wish to induce some of the (finite number of atomic) agents to
use the risky road even when the expected cost of travel there is high in
order to obtain accurate information about the state of the road. Since
agents are strategic, we show that in order to generate incentives for
experimentation the central planner however needs to limit the number of
agents using the risky road when the expected cost of travel on the risky
road is low. In particular, because of congestion, too much use of the risky
road when the state is favorable would make experimentation no longer
incentive compatible. We characterize the optimal incentive compatible
recommendation system, first in a two-stage game and then in an
infinite-horizon setting. In both cases, this system induces only partial,
rather than full, information sharing among the agents (otherwise there
would be too much exploitation of the risky road when costs there are low).
\end{abstract}

\section{Introduction}

Road conditions and congestion, which determine drivers' travel times, can
change unpredictably due to accidents, traffic jams, construction, and
weather conditions. GPS technology and smartphone routing applications such
as Waze and Google Maps promise to provide information to drivers (agents)
about real-time road conditions and congestion levels to reduce travel
times. These technologies, however, depend on real-time accurate information
from drivers on the road. This introduces an exploration-exploitation
trade-off: the central planner (CP) or the routing applications need to send
some drivers to roads with unknown and unfavorable conditions to obtain
up-to-date information. In contrast to standard experimentation models, the
choices of the CP are constrained by incentive compatibility of the agents:
a driver will only experiment if he (or she) expects to later benefit from
the information he obtains now. This will typically necessitate other
drivers not to have full information, since otherwise they would also
equally benefit by using the better road and in the process cause congestion
for the experimenter.

We investigate these issues by developing a two-road dynamic routing model
with congestion and a finite number of (atomic) forward-looking agents
(drivers). We analyze this model first in a two-stage setting and then turn
to an infinite-horizon setting where the condition of the risky road changes
according to a two-state Markov chain. In this environment we characterize
the optimal incentive compatible recommendation system for the CP. Incentive
compatibility requires that experimenters be rewarded, which means that the
roads they will be using in the future should not be too congested. This is
the reason why the social planner cannot induce full information among
drivers (for example, by sharing the exact road conditions), since this
would make non-experimenters also take the best roads, leading to excessive
congestion for the experimenters. There is another aspect to incentive
compatibility, however. Non-experimenters need to receive sufficient utility
from following the recommendations of the CP, otherwise they may deviate
from the recommendations, increasing congestion and also reducing the
rewards to experimentation. These two aspects tightly bound the number of
agents that the CP can send to the risky road after seeing favorable
conditions, to both guarantee that experimenters are rewarded and
non-experimenters would not want to deviate.

More formally, we assume that for the \textit{safe} road travel time depends
on the flow via a known affine function. For the \textit{risky} road travel
time is a linear function of the flow with unknown stochastic congestion
coefficient, $\theta $. Each agent minimizes his discounted travel time,
while the CP minimizes the sum of all discounted travel times. When there is
experimentation (meaning that some agent is using the risky road) the CP
learns the state of the road (i.e. the congestion coefficient) at that
time, and then makes recommendations to all agents on the basis of this
information. Agents themselves learn the state of the risky road if they
take it; otherwise they rely on the recommendations sent by the CP and their
observations of the flows to form beliefs.

We first fully characterize the optimal incentive compatible recommendation
system in a two-stage game where the state of the risky road $\theta $ is
either low (L) or high (H), and remains constant over time. This state $%
\theta $ is unknown to both the CP and all the agents at the beginning of
the first stage. Under full information all agents will learn $\theta $
before the second stage (as long as at least one agent takes the risky road
in the first stage), and we show that this leads to insufficient incentives
for experimentation. We then establish that pure private information (where
only the experimenters know the state of the risky road) generates lower
average travel times than full information, under some priors. This is
because private information encourages experimentation. We then characterize
the optimal recommendation system, which induces partial information,
meaning that the CP recommends the use of the risky road when its conditions
are good to some of the non-experimenters. This is beneficial for reducing
their travel time, while still maintaining the incentive compatibility of
experimentation.

We then extend our model to an \textit{infinite-horizon setting}. In this
case we assume the state of the risky road changes at each stage of the game
according to an underlying Markov chain ($\theta _{t}\in \{L,H\}$). This
framework could model potential {lane closures} that persist over time, for
example due to construction, or multi-day events. We show that our general
insights from the two-stage model generalize to this more complex setting. A
new challenge in this case is that the CP must account for the knowledge
that users infer based on their observations of others actions. For example,
an agent on safe that sees many agents switching to risky can infer that
that risky road was low at the previous round and will therefore be low
again with high probability (depending on the underlying transition
probabilities). The CP must consider the agents' ability to infer
information with one step delay and mitigate the possible deviations agents
may take to improve their cost. We characterize an incentive compatible
recommendation scheme that leads to better social cost than full information
and we study optimality of such scheme under different assumptions. We again
find that the CP must balance the amount of information it shares with
agents to account for both efficiency and incentive compatibility.

\subsection{Related Literature}

A classic example of exploration-exploitation trade off is given by the
multi-armed bandit (MAB) problem. The seminal analysis of \cite{gittins1979bandit} focuses on the case of a single forward-looking agent
that can choose between different arms with unknown reward distributions.
This introduces an experimentation-exploitation trade-off for the agent as
he decides which arm to pull in order both to see rewards and improve his information for the future.

A large literature has been devoted to extend the MAB model to different
settings. Among these works, the most closely related to ours are the works
of \cite{bolton1999strategic} and \cite{keller2005strategic}, where multiple
experimenters can learn from one another. These papers show that because of
free riding by agents there will be less experimentation than in the
standard MAB model since agents can learn from the effort of others. Our
setting is different because of three features: first, congestion creates
dependent payoffs across agents; second, we focus on the incentive
compatible experimentation scheme for a central planner; and third,
information in our setting is neither public (as in \cite{bolton1999strategic} and \cite{keller2005strategic}) nor fully private,
since the recommendation system will share some of the information the CP
acquires.

Previous results have shown that, at least for static settings, providing
public information in routing games is not necessarily socially optimal, as
detailed in \cite{acemoglu2018informational} and \cite{liu2016effects}. For
example, \cite{acemoglu2018informational} show that, under certain
conditions, increasing the information of a subset of agents can make these
agents worse off. These results motivate us in considering different models
of information sharing. To this end, we introduce a central planner (CP)
that controls the provision of information between agents.

Our setting is also related to models of Bayesian persuasion, see for
example \cite{kamenica2011bayesian} and \cite{bergemann2016information}.
This framework has been applied to study \textit{static traffic problems} in 
\cite{das2017reducing, tavafoghi2017informational,lotfi2018bayesian,
wu2019information}, social information sharing in \cite{li2017dynamic} and
competition among multiple information providers in \cite{varaiya2019platform}. The key difference between these works and our paper
is that we consider a \textit{dynamic problem} where: i) the game is dynamic
and agents, as well as the CP, are forward looking, ii) agents influence
each other payoff functions because of congestion effects, iii) the
parameters underlying the game change over time, requiring constant
experimentation and iv) the CP depends on agents to gain information through
experimentation and thus has an informational advantage only over agents
that did not experiment.

The closest works to ours are \cite{tavafoghi2017informational} and \cite{li2018sequential}. \cite{tavafoghi2017informational} consider a two-stage,
two-road routing problem. Therein however the CP has perfect information
about the state of the risky road and does not need to rely on agents
experimentation. As a consequence, \cite{tavafoghi2017informational} focus
on the case of non-atomic agents, while we consider atomic (a finite number
of) agents, which implies that agents take into account the information they
generate for their own and others' future use. In addition, observations of
traffic flow in \cite{tavafoghi2017informational} are fully revealing, while
in our setting there is a one step delay due to the fact that the CP's
suggestions are based on previous observations and not on the current state
of the risky road (which is unknown to the CP). As already noted, we also
extend our analysis beyond two stages by looking at an infinite horizon
model where the state of the road changes in time. \cite{li2018sequential}
consider a repeated game, again in a non-atomic setting where the CP does
learn from agents' actions, but agents themselves are not learning -- agents
at each time step only use the public message that is sent by the CP to make
a myopic routing decision.

On a more general note, our results contribute to a growing literature on
social learning. For example, \cite{che2015optimal} and \cite%
{che2017recommender} study how a recommender system may incentivize users to
learn collaboratively on a product, \cite{shah2018bandit} study how
correlated preferences between agents may effect learning, \cite%
{johari2016matching} study a matching problem between heterogeneous jobs and
workers, where the aim is to learn worker types, \cite{iyer2014mean} study
learning in repeated auctions, \cite{bistritz2018characterizing} study
product adoption. 
A recent survey of the literature at the interface of learning,
experimentation and information design can be found in \cite%
{horner2016learning}. The main feature distinguishing the routing problem
addressed in our work and the applications considered in the works above is
the presence of dependent payoffs because of congestion effects, which
fundamentally modify the results since agents do not only affect each other
in their learning process but also in the received payoffs.

The rest of the paper is organized as follows. In Section \ref{section:model}%
, we introduce the routing model. In Section \ref{section:2stage} we
consider a two-stage setting, compare full information, private information,
and partial information and explain how to characterize the optimal
recommendation scheme. Finally, in Section \ref{section:infty} we detail the
infinite time horizon setting and study the optimal, incentive compatible
recommendation scheme. All proofs are given in the Appendix.

\section{Model}

\label{section:model} We consider a dynamic mechanism design problem in
which a central planner (CP) aims to minimize total travel time in a
repeated routing game with $N$ (atomic) agents on two roads. Agents decide
their own routing to minimize their own travel time, and the CP can try to
influence their choices by providing information.

\subsubsection*{\textbf{Congestion model}}

We consider a network with two roads. One of the roads, the \textit{safe road%
}, has a congestion-dependent but non-stochastic affine cost $%
S_{0}+S_{1}(N-x_{R})$ where $S_{0} > 0$, $S_{1} \geq 0$ and $x_{R}$ is the
number of agents on the risky road. The other road, the \textit{risky road},
has a linear cost $\theta _{t}x_{R}$ where $\theta _{t}\in \{L,H\}$ (where $%
L,H$ are scalars with $L, H > 0$) is an unknown congestion parameter that
changes over time according to an underlying Markov chain with switching
probabilities $\gamma _{L}:=\mathbb{P}(\theta _{t}=H|\theta _{t-1}=L)$ and $%
\gamma _{H}:=\mathbb{P}(\theta _{t}=L|\theta _{t-1}=H).$\footnote{%
Our analysis can be easily generalized to a risky road with affine cost $%
\theta_t x_R + R_0$ for known $R_0 \geq 0$. We omit this for simplicity of
exposition.} The parameter $\theta _{t}=L$ represents cases where the
congestion parameter is \textquotedblleft low\textquotedblright\ which means
the risky road is favorable, alternatively $\theta _{t}=H$ means the
parameter is \textquotedblleft high\textquotedblright\ and the safe road is
preferable.

\begin{assumption}[Frequency of switching]
\label{ass:gamma} 
\begin{equation*}
0 \leq \gamma_L, \gamma_H \leq \frac{1}{2}.
\end{equation*}
\end{assumption}
Assumption \ref{ass:gamma} imposes an upper limit on the probability that
the road condition changes. Intuitively this condition suggests that it
might be worthwhile for an agent to experiment, since the road is likely to
remain in the same condition for more than one stage.

\subsubsection*{\textbf{Agent actions and stage cost function}}

At each time $t,$ each agent $i$ chooses an action $\alpha _{t}^{i}\in
\{S,R\}$ corresponding to either taking the safe ($\alpha _{t}^{i}=S$) or
the risky road ($\alpha _{t}^{i}=R$). An agent's realized stage cost is then
given by the travel time he experiences at stage $t$ 
\begin{equation*}
\tilde{g}(\alpha _{t}^{i},\alpha _{t}^{-i},\theta _{t})=%
\begin{cases}
S_{0}+S_{1}(N-x_{t}^{R}(\alpha _{t})) & \mbox{if}\quad \alpha _{t}^{i}=S, \\ 
\theta _{t}x_{t}^{R}(\alpha _{t}) & \mbox{if}\quad \alpha _{t}^{i}=R,%
\end{cases}%
\end{equation*}
where $x_{t}^{R}(\alpha _{t}):=\sum_{i=1}^{N}\mathbbm{1}\{\alpha
_{t}^{i}=R\} $ is the total flow on the risky road at time $t$.

\subsubsection*{\textbf{Information structure}}

We assume that if an agent experiments by using the risky road at time $t$,
he observes the true value of $\theta _{t}$ and this is also directly
observed by the CP (because he communicates it truthfully to the CP or
because of direct observation of his experience by the CP via GPS).
Consequently the CP knows $\theta _{t}$ if and only if at least one agent
takes the risky road at time $t$. Before $t=0$, the CP commits to a
signaling scheme to disseminate information to agents that are on the safe
road and are thus uninformed. From here on we restrict our attention to 
\textit{recommendation schemes}, which are a specific type of signaling
scheme where the signal sent to each agent is a recommendation to take
either the safe ($r_{S}$) or the risky road ($r_{R}$)  and we
refer to such a recommendation scheme as $\pi$.\footnote{%
A formal definition of recommendation schemes for the two stage model is
given in Definition \ref{def:two_stage} and for the infinite horizon model
is given in Definition \ref{def:am}.}

\subsubsection*{\textbf{Cost function}}

\label{subsect:agents_objective}

Drivers are homogeneous and risk-neutral. Each minimizes his expected sum of
travel times over $T$ repetitions of the game 
discounted in time by a factor $\delta \in \lbrack 0,1)$. Let $h_{t-1}^{i}$
be agent $i$'s history after round $t-1$ and before time $t$ 
(based on his observations and on the signals sent by the CP). 
This includes the past actions of the agent, the flows the agent
experienced, the recommendations the agent received, and the state of the
risky road for those times the agent took it. An agent's strategy $\xi
_{t}^{i}(h_{t-1}^{i})$ maps any history to an action $\{S,R\}$. 
The agent's expected cost from time $t$ to time $T$ is given by 
\begin{equation*}
u_{t}^{i}(h_{t-1}^{i})=\mathbb{E}\left[ \sum_{k=t}^{T}\delta ^{k-t}\tilde{g}%
(\xi _{k}^{i}(h_{k-1}^{i}),\xi _{k}^{-i}(h_{k-1}^{-i}),\theta _{k})\right] .
\end{equation*}%
We focus on pure strategies; hence the expectation here is on the state of
the risky road $\theta $ and on the information received from the CP.

Each agent chooses $\xi _{t}^{i}(\cdot )$ to minimize his total expected
cost given the strategies of others and the recommendation scheme the CP
uses. Note that the agent's strategy is chosen \textit{after} the CP has
committed to a recommendation scheme.

The CP's objective is to select a recommendation scheme $\pi$ to minimize
the expected total discounted travel times of agents over $T$ stages. Let $%
\xi _{t}^{i\mid \pi }$ be the strategy agent $i $ uses in equilibrium under
the recommendation scheme $\pi $.

\begin{definition}
\label{def:IC} A recommendation scheme $\pi$ is \textit{incentive compatible}
if $\xi _{t}^{i\mid \pi }=R$ for any agent $i$ that receives a
recommendation $r_{R}$ for stage $t$ and $\xi _{t}^{i\mid \pi }=S$ for any
agent $i$ that receives a recommendation $r_{S}$ for stage $t$.
\end{definition}

Let $\Pi$ be the class of incentive compatible recommendation schemes and $%
g(x^R,\theta)$ be the total cost of a stage when there are $x^R$ agents on
the risky road and the road condition is $\theta$, i.e. 
\begin{equation*}
g(x^R,\theta) = \theta (x^R)^2 + (S_0+S_1(N-x^R))(N-x^R).
\end{equation*}
If no agent is using the risky road, we define $g(0) = (S_0+S_1N) N$ as the
cost does not depend on $\theta$. Before the game begins the CP and all
agents share a common prior on the state of the risky road which we denote
by $\beta =\mathbb{P}(\theta_0 =L)$. The CP wants to choose a scheme $\pi
\in \Pi$ to minimize 
\begin{align}  \label{equation:cp_objective}
V^{\pi}_{T}(\beta):= \mathbb{E} \left[\sum_{k=1}^{T} \delta^k g({x}^R_k,
\theta_k) \mid \mathbb{P}[\theta_0=L]=\beta \right],\quad\text{where } x^R_k
= \sum_{i=1}^N \mathbbm{1} \{ \xi_k^{i\mid\pi} (h^i_{k-1}) = R \}.
\end{align}

\section{The two-stage model}

\label{section:2stage}

We start by considering a two-stage model ($T=2$) and, for simplicity, we
assume that the road condition does not change between the two stages, that
is $\theta_0 = \theta _{1}=\theta _{2}=:\theta $ (i.e. $\gamma _{H}=\gamma
_{L}=0$). We also assume no discounting ($\delta =1$) to simplify the
derived bounds. Define the expected value of $\theta $ at the beginning of
the first stage as 
\begin{equation*}
\mu _{\beta }:=\mathbb{E}[\theta ]=\beta L+(1-\beta )H.
\end{equation*}%
In this context the objective of the CP is to minimize 
\begin{equation*}
V_{2}^{\pi }(\beta ):=\mathbb{E}\left[ g(x_{1}^{R},\theta
)+g(x_{2}^{R},\theta )\mid \mathbb{P}(\theta =L)=\beta \right] .
\end{equation*}%
We adopt the following assumption.

\begin{assumption}[Two stage model]
\label{assumption:2stage} The parameters are such that

\begin{enumerate}
\item $L < S_0 + S_1$,

\item $S_0 + S_1 N < \mu_{\beta}$.
\end{enumerate}
\end{assumption}

Assumption \ref{assumption:2stage}.1 states that if the congestion parameter
is $L$, the risky road is preferable (i.e. the cost of one agent on risky if 
$\theta = L$ is less than the cost of one agent on safe). Assumption \ref%
{assumption:2stage}.2 states that the expected cost of experimentation for
one agent $\mu_{\beta}$ is greater than a fully congested safe road. Note
that agents may nonetheless select the risky road in the first round since,
if $\theta = L$, they can exploit this information in the second round. We
let $x_L^{\QTR{up}{eq}}$ be the myopic equilibrium flow on risky if all
agents know that $\theta = L$.\footnote{%
That is $x_L^{\QTR{up}{eq}}{\ \approx} \min \left\lbrace \frac{S_0 + S_1 N}{%
L+S_1}, N\right\rbrace.$ The approximation comes from the fact that $x_L^{%
\QTR{up}{eq}}$ must be an integer as we work with an atomic model (finite
number of agents).}

\subsection{Full and private information}

We start with two extreme and simple informational scenarios:

\begin{itemize}
\item \textit{Full information} - if any agent takes the risky road in round
one, then all agents learn $\theta$ before round two.

\item \textit{Private information} - any agent that takes the risky road in
round one knows the value of $\theta$ before round two, but any agent that
chose to play safe in the first round has no new information before the
beginning of round two.
\end{itemize}

We first show that in any pure strategy Nash equilibrium, at most one agent
experiments in the first round. We then use this result to characterize the
equilibrium under both full and private information. While the result that
only one agent experiments under full information is very general, the fact
that only one agent experiments under private information is a consequence
of some of the special features of this example. In particular, in a
two-period model, there is only a limited time during which an experimenter
can exploit his information. Since we assume a linear cost function, an
additional experimenter increases the travel time sufficiently such that it
is not worthwhile for two agents to experiment. This result does not apply,
for example, in our infinite-horizon model, studied in the next section.
Nevertheless, we will see that, in that setting too, in the incentive
compatible optimal mechanism, the CP will induce only one agent to
experiment.

\begin{lemma}
{\ \label{lemma:pure_one_exp} Under Assumption \ref{assumption:2stage} and
any information scheme, in any pure strategy Nash equilibrium at most one
agent experiments in the first round.}
\end{lemma}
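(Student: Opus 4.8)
The plan is to argue by contradiction. Suppose that for some information scheme there is a pure-strategy Nash equilibrium in which a set of $k \ge 2$ agents experiment (play $R$) in the first round. I single out one such experimenter, agent $i$, and show that the history-independent strategy ``play $S$ in round one and $S$ in round two'' is a strictly profitable unilateral deviation, which cannot exist in equilibrium and hence forces $k \le 1$. Throughout I hold the strategies of all other agents fixed. Observe that the first-round actions of the remaining $k-1$ experimenters and of the non-experimenters do not depend on agent $i$'s first-round choice, because their round-one histories contain only the common prior; hence after agent $i$ switches to $S$ the first-round risky flow is exactly $k-1$.

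First I account for the first round. Since $\theta$ is unknown when the round-one action is taken, an experimenter's expected round-one cost is $\mathbb{E}[\theta k] = \mu_\beta k$, whereas if agent $i$ plays $S$ his (deterministic) round-one cost becomes $S_0 + S_1(N-k+1) \le S_0 + S_1 N$. The expected first-round saving from deviating is therefore at least $\mu_\beta k - (S_0+S_1 N) > \mu_\beta k - \mu_\beta = \mu_\beta(k-1)$, where the strict inequality is precisely Assumption \ref{assumption:2stage}.2.

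Next I bound the only possible downside of the deviation, namely the value of the information that experimentation would have produced for round two. The key point is that this downside can be controlled uniformly, independently of the information scheme and of how the other agents respond in the second round: by playing $S$ again in round two, agent $i$ guarantees a second-round cost of at most $S_0 + S_1 N$ (the safe cost $S_0 + S_1(N - y)$ is largest when the others' risky flow $y$ is zero, using $S_1 \ge 0$), while under the original strategy his equilibrium second-round cost is nonnegative. Hence the deviation raises agent $i$'s expected second-round cost by at most $S_0 + S_1 N$. Combining the two rounds, the net change in his expected total cost is at most $(S_0 + S_1 N) - \mu_\beta(k-1) \le (S_0 + S_1 N) - \mu_\beta < 0$ for every $k \ge 2$, again by Assumption \ref{assumption:2stage}.2. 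This strict improvement contradicts the equilibrium assumption, so at most one agent experiments.

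I expect the main subtlety, rather than any hard computation, to be setting up the second-round comparison so that it is valid for every information scheme and for every equilibrium continuation of the opponents. Choosing the fully history-independent ``always safe'' deviation resolves both difficulties at once: it requires no information and caps agent $i$'s second-round cost at $S_0 + S_1 N$ no matter what the others do, so I never need to track how the flows others observe in round one feed into their round-two play. The remaining arithmetic is elementary and tightest at $k=2$; notably it uses only Assumption \ref{assumption:2stage}.2 (Assumption \ref{assumption:2stage}.1 is not needed here), and it goes through even when $S_1 = 0$ since all the relevant inequalities follow from the strict bound $\mu_\beta > S_0 + S_1 N$.
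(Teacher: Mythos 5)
Your proof is correct and follows essentially the same route as the paper's: assume $k\ge 2$ experimenters, compare the experimenter's cost (at least $k\mu_\beta$ in round one plus a nonnegative continuation) against the unilateral deviation ``play safe in both rounds'' (at most roughly $2(S_0+S_1N)$), and conclude via Assumption \ref{assumption:2stage}.2. Your bookkeeping as ``first-round saving versus second-round downside'' is just a rearrangement of the paper's total-cost comparison, and your explicit remarks (that opponents' round-one actions are unaffected by the deviation, and that strictness rests solely on $\mu_\beta > S_0+S_1N$ so the argument survives $S_1=0$) only tighten details the paper leaves implicit.
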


\begin{theorem}
\label{thrm:private_vs_full} Under Assumption \ref{assumption:2stage} the
unique pure strategy Nash equilibrium is as follows:\footnote{%
Uniqueness here refers to the total number of agents in each road.}

\begin{itemize}
\item Full information:

\begin{itemize}
\item all agents play safe in both rounds, if 
\begin{equation*}
\beta <\frac{H-(S_{0}+S_{1}N)}{H-L+(S_{0}+S_{1}N)-\frac{g(x_{L}^{\QTR{up}{eq}%
},L)}{N}}=:\beta _{f}
\end{equation*}

\item otherwise, one agent experiments in the first round, and $x_{L}^{%
\QTR{up}{eq}}$ agents use the risky road in the second round if $\theta =L$,
and all play safe if $\theta =H$.
\end{itemize}

The expected cost under equilibrium is 
\begin{equation*}
V_{2}^{full}(\beta ):=%
\begin{cases}
2g(0) & \mbox{if}\quad \beta <\beta _{f} \\ 
g(1,\mu _{\beta })+\beta g(x_{L}^{\QTR{up}{eq}},L)+(1-\beta )g(0) & \mbox{if}%
\quad \beta _{f}\leq \beta.%
\end{cases}%
\end{equation*}

\item Private information:

\begin{itemize}
\item all agents play safe in both rounds if%
\begin{equation*}
\beta <\frac{H-(S_{0}+S_{1}N)}{H+(S_{0}+S_{1}N)-2L}=:\beta _{p}\leq \beta
_{f}
\end{equation*}

\item otherwise, one agent experiments in the first round and uses the risky
road if $\theta =L$ and the safe road if $\theta =H$ in the second round.
All other agents play safe in both rounds.
\end{itemize}

The expected cost under equilibrium is 
\begin{equation*}
V_{2}^{private}(\beta ):=%
\begin{cases}
2g(0) & \mbox{if}\quad \beta <\beta _{p} \\ 
g(1,\mu _{\beta })+\beta g(1,L)+(1-\beta )g(0) & \mbox{if}\quad \beta
_{p}\leq \beta.%
\end{cases}%
\end{equation*}
\end{itemize}
\end{theorem}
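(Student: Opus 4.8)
The plan is to invoke Lemma \ref{lemma:pure_one_exp} to collapse the equilibrium search to two candidate profiles, and then determine, for each information structure, which candidate survives as a function of the prior $\beta$. Since at most one agent experiments in round one, every pure-strategy profile either places all $N$ agents on the safe road in round one (\emph{all safe}) or has exactly one experimenter. I would fix a representative experimenter, say agent $1$, and compare his total expected cost across the two candidates; by symmetry this pins down the equilibrium up to relabeling of agents, which is the sense of uniqueness claimed. The first substantive step is to solve round two, which, being the last stage, is played myopically.

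For round two, under \emph{full} information both the experimenter and the CP reveal $\theta$ to everyone, so round two is the complete-information myopic routing game: in state $L$ its Nash flow is $x_L^{eq}$ agents on risky (each then bearing the common equilibrium cost $g(x_L^{eq},L)/N$), while in state $H$ all agents play safe, using $H > S_0 + S_1 N$, which follows from Assumptions \ref{assumption:2stage}.1--2 (these also force $L < H$). Under \emph{private} information only the experimenter knows $\theta$, so I would first show that an uninformed agent strictly prefers safe in round two, by comparing his expected cost of deviating to risky, $\beta\cdot 2L + (1-\beta)H$, with his cost of staying safe, which is at most $S_0 + S_1 N$, and invoking Assumption \ref{assumption:2stage}.2. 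Consequently the experimenter is alone on risky in state $L$ (cost $L$) and on safe in state $H$.

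With round two resolved, I would write agent $1$'s total expected cost from experimenting as $\mu_\beta$ (round one, alone on risky) plus the round-two continuation, and compare it with his cost from deviating to safe in round one. Since agent $1$ is the only potential experimenter, this deviation makes round one fully congested and leaves round two uninformed, giving total $2(S_0 + S_1 N)$. The only term that differs across the two information structures is the experimenter's state-$L$ continuation: $g(x_L^{eq},L)/N$ under full information (he shares the road with $x_L^{eq}$ agents) versus $L$ under private information (he is alone). Setting each comparison to equality and solving for $\beta$ yields exactly $\beta_f$ and $\beta_p$; experimentation is weakly preferred iff $\beta$ exceeds the relevant threshold. I would then rule out a second experimenter: under full information a co-experimenter gains no information (everyone already learns $\theta$) yet pays the higher round-one cost $2\mu_\beta$, while under private information Assumption \ref{assumption:2stage}.2 makes that $2\mu_\beta$ prohibitive relative to the modest state-$L$ gain; either way no safe agent wishes to switch, so the one-experimenter profile is an equilibrium above the threshold.

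Finally, the all-safe profile is an equilibrium precisely when the same comparison makes experimentation unprofitable, i.e. when $\beta$ lies below the threshold; together with Lemma \ref{lemma:pure_one_exp} this gives the claimed (relabeling-)uniqueness away from the boundary value. The cost formulas $V_2^{full}$ and $V_2^{private}$ then follow by summing the expected stage costs $g(\cdot,\cdot)$ along the equilibrium path, using linearity of $g$ in $\theta$ to replace $\theta$ by $\mu_\beta$ in round one. The ordering $\beta_p \le \beta_f$ follows on inspection: the two fractions share the positive numerator $H - (S_0 + S_1 N)$, and the denominator of $\beta_p$ exceeds that of $\beta_f$ by $g(x_L^{eq},L)/N - L \ge 0$, which is nonnegative because $x_L^{eq}\ge 1$ forces the per-agent cost to be at least $L$. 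I expect the main obstacle to be the atomic/integrality bookkeeping: justifying that each agent's full-information state-$L$ cost is (approximately) the symmetric value $g(x_L^{eq},L)/N$, since $x_L^{eq}$ must be an integer and the risky and safe costs are only approximately equalized at the atomic Nash flow.
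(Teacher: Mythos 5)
Your proposal is correct and follows essentially the same route as the paper's proof: Lemma \ref{lemma:pure_one_exp} reduces the candidates to all-safe versus one experimenter, round two is solved myopically under each information structure, the thresholds $\beta_f,\beta_p$ come from equating the experimenter's cost with $2(S_0+S_1N)$, and $\beta_p\le\beta_f$ reduces to $g(x_L^{\QTR{up}{eq}},L)/N\ge L$. The only point to tighten is that last inequality: it is not implied by $x_L^{\QTR{up}{eq}}\ge 1$ alone — you also need Assumption \ref{assumption:2stage}.1 ($L<S_0+S_1$) to bound the safe-road users' cost $S_0+S_1(N-x_L^{\QTR{up}{eq}})\ge S_0+S_1>L$ when $x_L^{\QTR{up}{eq}}<N$, exactly as in the paper's case analysis.
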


\begin{corollary}
\label{corollary:private_better} If the prior belief $\beta$ is such that 
\begin{align}  \label{eq:mup_bound}
\beta_p \leq \beta < \beta_f,
\end{align}
then in the pure strategy equilibrium there is experimentation under private
information, but not under full information. Consequently private
information has a lower expected cost ($V_2^{\text{private}}(\beta) < V_2^{%
\text{full}}(\beta)$).
\end{corollary}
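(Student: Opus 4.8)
The plan is to treat the two assertions separately and lean on Theorem \ref{thrm:private_vs_full} as much as possible. The experimentation claims are immediate: since $\beta \geq \beta_p$, the characterization of the private-information equilibrium gives one experimenting agent in the first round, while since $\beta < \beta_f$, the full-information equilibrium has all agents playing safe in both rounds, hence no experimentation. What remains is the cost inequality $V_2^{private}(\beta) < V_2^{full}(\beta)$ on the interval $[\beta_p,\beta_f)$, and here I would simply substitute the explicit equilibrium costs the theorem already provides.

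On $[\beta_p,\beta_f)$ those formulas read $V_2^{full}(\beta) = 2g(0)$ and $V_2^{private}(\beta) = g(1,\mu_\beta) + \beta g(1,L) + (1-\beta)g(0)$, so the claim reduces to showing
\begin{equation*}
D(\beta) := [g(1,\mu_\beta) - g(0)] + \beta\,[g(1,L) - g(0)] < 0.
\end{equation*}
The one computation I actually need is the identity $g(1,\theta) - g(0) = \theta - S_0 - S_1(2N-1)$, which follows at once from the definitions of $g(1,\theta)$ and $g(0)$. Writing $A := S_0 + S_1(2N-1)$ and using $\mu_\beta = \beta L + (1-\beta)H$, this collapses $D(\beta)$ to the affine expression
\begin{equation*}
D(\beta) = 2\beta L + (1-\beta)H - (1+\beta)A.
\end{equation*}

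The decisive step — and the one I expect to carry the real content — is to recognize $D(\beta)$ as the experimenter's \emph{private} incentive gap corrected by the \emph{externality} he confers on the others. Let $E(\beta) := 2\beta L + (1-\beta)H - (1+\beta)(S_0+S_1 N)$ be the difference between the designated experimenter's expected two-stage cost when he experiments and when he instead plays safe (so nobody experiments); by the very definition of the private-information threshold, $E(\beta_p) = 0$ and $E(\beta) \leq 0$ for all $\beta \geq \beta_p$. Since $A - (S_0+S_1 N) = S_1(N-1)$, a one-line comparison yields $D(\beta) = E(\beta) - (1+\beta)S_1(N-1)$, where the subtracted term is exactly the congestion relief the lone experimenter grants the remaining $N-1$ safe agents across the two rounds. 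Because $E(\beta) \leq 0$ on $[\beta_p,\beta_f)$ and $(1+\beta)S_1(N-1) \geq 0$, we get $D(\beta) \leq 0$; the inequality is strict since $E(\beta) < 0$ for $\beta > \beta_p$ and the externality term is strictly positive in the nondegenerate regime $S_1 > 0$, $N \geq 2$. The only points needing care are the passage from ``the agent experiments'' to ``$E(\beta) \leq 0$'' — which uses monotonicity of $E$, guaranteed by Assumption \ref{assumption:2stage} forcing $L < H$ and $L < S_0 + S_1 N$ and hence $E'(\beta) = 2L - H - (S_0+S_1 N) < 0$ — and the boundary point $\beta = \beta_p$, where strictness hinges on $S_1(N-1) > 0$.
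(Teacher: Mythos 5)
Your proposal is correct, and at its core it rests on the same decomposition as the paper's proof: the private-vs-full cost difference splits into the experimenter's incentive gap (nonpositive for $\beta \ge \beta_p$) minus the congestion relief that the lone experimenter grants the other $N-1$ agents. The difference is in execution. The paper never touches the closed-form cost expressions: it argues agent by agent in equilibrium --- the experimenter's two-round cost is at most $2(S_0+S_1N)$ by revealed preference (``otherwise he would switch to safe''), each non-experimenter's cost is strictly below $2(S_0+S_1N)$ because of the congestion relief --- and sums these bounds to get a total strictly below $2N(S_0+S_1N)=2g(0)$. You instead substitute the explicit formulas from Theorem \ref{thrm:private_vs_full}, reduce to $D(\beta)=E(\beta)-(1+\beta)S_1(N-1)$, and prove $E\le 0$ on the interval via $E(\beta_p)=0$ together with $E'(\beta)=2L-H-(S_0+S_1N)<0$. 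The monotonicity check is sound (Assumption \ref{assumption:2stage} does force $L<H$ and $L<S_0+S_1N$), but you could have skipped it: Theorem \ref{thrm:private_vs_full} already asserts that experimentation is an equilibrium at every $\beta\ge\beta_p$, and the experimenter's no-deviation condition in that equilibrium is exactly $E(\beta)\le 0$. What your route buys: the externality term $(1+\beta)S_1(N-1)$ is explicit, and you correctly isolate the boundary case --- at $\beta=\beta_p$ strictness requires $S_1>0$ and $N\ge 2$. The paper's proof silently needs the same nondegeneracy (its inequality $2(S_0+S_1(N-1))<2(S_0+S_1N)$ fails when $S_1=0$), so flagging it is a genuine, if minor, gain in precision; in exchange, the paper's argument is shorter and requires no algebra at all.
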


According to the above corollary, there may exist a range of priors where it
is better for the CP to provide no information rather than full information.
Intuitively, this happens because providing the information that $\theta =L$
to all the agents induces congestion in the second round, thus reducing the
value of information. This decreases the incentive of an agent to experiment
in the first round. In other words, full information allows more agents to
free-ride off one agent's experimentation, reducing the payoff of the
experimenter due to congestion effects. The next example illustrates the
costs as a function of the prior belief.

\begin{example}
\label{example:2stage} Suppose $N=40$, $S_{0}=10,S_{1}=1$, $L=0.9$, and $%
H=150$. The comparison of the equilibrium cost for all beliefs satisfying
Assumption~\ref{assumption:2stage} is shown in Figure \ref{figure:2stage}.
For $\beta \in \lbrack 0.50,0.57],$ there is experimentation under private
information, but no experimentation under full information.
\end{example}

The fact that full information, where the conditions of the risky road are
communicated to all agents, is not socially optimal motivates the rest of
our analysis. We will show that some amount of information sharing by the CP
is preferable to private information and characterize the optimal
recommendation scheme.

\subsection{Unconstrained social optimum}

Define $x_L^{\QTR{up}{SO}}, x_H^{\QTR{up}{SO}}$ as the social optimum
integer myopic flows on the risky road when it is known that $\theta = L$ or 
$\theta = H$ respectively, that is, 
\begin{align*}
x_L^{\QTR{up}{SO}}:= \operatornamewithlimits{argmin}_{x \in \{0, 1,...,N\}}
g(x, L), \quad x_H^{\QTR{up}{SO}} := \operatornamewithlimits{argmin}_{x \in
\{0, 1,...,N\}} g(x, H).
\end{align*}

\begin{theorem}
\label{lemma:so_2stage} Under Assumption \ref{assumption:2stage} the social
optimum is given by

\begin{itemize}
\item All agents playing safe in both rounds, if 
\begin{equation*}
\beta \leq \frac{H + g(x_H^{\QTR{up}{SO}},H)-(S_0(N+1)-S_1((2+N)N-1))}{%
H-L+g(x_H^{\QTR{up}{SO}},H)-g(x_L^{\QTR{up}{SO}},L)} := \beta_{\text{SO}}\le
\beta_p
\end{equation*}

\item One agent experimenting in the first round and $x_{L}^{\QTR{up}{SO}}$ (%
$x_H^{\QTR{up}{SO}}$) agents taking the risky road in the second round if $%
\theta=L$ ($\theta=H$), otherwise.
\end{itemize}

The expected cost under the social optimum is then 
\begin{align*}
V_2^{*} := 
\begin{cases}
2 g(0) & \mbox{if} \quad \beta < \beta_{\text{SO}}, \\ 
g(1, \mu_{\beta})+ \beta g(x_L^{\QTR{up}{SO}}, L) + (1-\beta) g(x_H^{%
\QTR{up}{SO}}, H) & \mbox{if}\quad \beta\ge \beta_{\text{SO}}.%
\end{cases}%
\end{align*}
\end{theorem}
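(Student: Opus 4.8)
The plan is to treat the \textit{unconstrained} social optimum as a small two-stage dynamic program in which the CP directly chooses the flow profile (ignoring incentive compatibility), and to exploit the fact that the first-round flow matters for the second stage only through whether it reveals $\theta$. First I would record two structural reductions. Since $g(\cdot,\theta)$ is affine in $\theta$, the expected first-round cost under the prior is $\EE_\theta[g(x_1^R,\theta)]=g(x_1^R,\mu_\beta)$, so round one behaves exactly like a known-parameter stage with effective coefficient $\mu_\beta$. Moreover the state is learned before round two if and only if at least one agent experiments, so the optimal second-round cost depends on $x_1^R$ only through the indicator $\{x_1^R\ge 1\}$: if $x_1^R\ge1$ the CP picks the flow after observing $\theta$, giving $C_2:=\beta\,g(x_L^{\QTR{up}{SO}},L)+(1-\beta)\,g(x_H^{\QTR{up}{SO}},H)$ by definition of $x_L^{\QTR{up}{SO}},x_H^{\QTR{up}{SO}}$; if $x_1^R=0$ the belief is unchanged and the optimal terminal cost is $\min_x g(x,\mu_\beta)$. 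This collapses the problem to comparing a no-experimentation branch ($x_1^R=0$) with an experimentation branch ($x_1^R\ge1$).

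Next I would pin down the optimal flows by convexity and Assumption \ref{assumption:2stage}. The map $x\mapsto g(x,\mu_\beta)=(\mu_\beta+S_1)x^2-(S_0+2S_1N)x+(S_0+S_1N)N$ is a convex quadratic with continuous minimizer $x^\star=\frac{S_0+2S_1N}{2(\mu_\beta+S_1)}$. Assumption \ref{assumption:2stage}.2, namely $\mu_\beta>S_0+S_1N$, forces $x^\star<3/2$ (since $3(S_0+S_1N)>S_0+S_1(2N-3)$), with two consequences: the integer minimizer $\bar x:=\argmin_{x\in\{0,\dots,N\}}g(x,\mu_\beta)$ lies in $\{0,1\}$, and on $\{x\ge1\}$ the minimizer is exactly $x=1$. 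Hence the experimentation branch is optimized by sending a single agent in round one, for total cost $g(1,\mu_\beta)+C_2$ (so the ``one experimenter'' conclusion here follows directly from convexity rather than from Lemma \ref{lemma:pure_one_exp}), while the no-experimentation branch costs $g(0)+g(\bar x,\mu_\beta)$.

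I would then prove that the optimum equals $\min\{2g(0),\,g(1,\mu_\beta)+C_2\}$, splitting on $\bar x$. When $\bar x=0$ (equivalently $\mu_\beta\ge S_0+S_1(2N-1)$, i.e.\ $\beta$ small) the no-experimentation branch equals exactly $2g(0)$, so the claim is just the comparison of the two displayed branches, and any round-one flow $\ge2$ is dominated by $x_1^R=1$. When $\bar x=1$ the true no-experimentation cost is $g(0)+g(1,\mu_\beta)$, but experimentation with one agent weakly dominates it: because the minimum of an average is at least the average of the minima, $C_2\le\min_x g(x,\mu_\beta)=g(1,\mu_\beta)\le g(0)$, whence $g(1,\mu_\beta)+C_2\le g(0)+g(1,\mu_\beta)$ and also $g(1,\mu_\beta)+C_2<2g(0)$; thus the experimentation branch already realizes the minimum and the stated formula remains valid. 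Here Assumption \ref{assumption:2stage}.1, $L<S_0+S_1$, is what guarantees $x_L^{\QTR{up}{SO}}\ge1$, so that learning $\theta=L$ strictly lowers the terminal cost and information has positive value.

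Finally I would identify the threshold. The cost $g(1,\mu_\beta)+C_2$ is affine and strictly decreasing in $\beta$ (both $\mu_\beta$, hence $g(1,\mu_\beta)$, and the terminal term decrease as weight shifts to the cheaper state, using $g(x_L^{\QTR{up}{SO}},L)<g(x_H^{\QTR{up}{SO}},H)$), while $2g(0)$ is constant; the two cross once, and solving the linear equation $2g(0)=g(1,\mu_\beta)+C_2$ for $\beta$ gives $\beta_{\text{SO}}$, with no experimentation optimal for $\beta\le\beta_{\text{SO}}$ and experimentation for $\beta\ge\beta_{\text{SO}}$. Substituting $g(0)=(S_0+S_1N)N$ and $g(1,\mu_\beta)=\mu_\beta+(S_0+S_1(N-1))(N-1)$ and collecting the state-independent constant $2g(0)-(S_0+S_1(N-1))(N-1)$ produces the closed form for $\beta_{\text{SO}}$ in the statement; the asserted ordering $\beta_{\text{SO}}\le\beta_p$ then follows by directly comparing this expression with the formula for $\beta_p$ from Theorem \ref{thrm:private_vs_full}. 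The main obstacle is the case analysis of the third paragraph: showing the optimal experimentation flow is exactly one agent and that the no-experimentation branch collapses to $2g(0)$ (rather than a strictly cheaper state-independent flow) needs the convexity argument together with both parts of Assumption \ref{assumption:2stage}, after which the remaining algebra, including $\beta_{\text{SO}}\le\beta_p$, is routine.
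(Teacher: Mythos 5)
Your proposal is correct and, in its core, follows the same route as the paper's proof: round one is a known-parameter stage with coefficient $\mu_\beta$; the experimentation branch separates into the first-round flow and the two state-contingent second-round flows, which equal $x_L^{\textup{SO}}$ and $x_H^{\textup{SO}}$ by definition; one experimenter is optimal because $g(\cdot,\mu_\beta)$ restricted to $\{1,\dots,N\}$ is minimized at $1$ (the paper phrases this as strict monotonicity of $g(x,\mu_\beta)$ for $x\ge 1$ under Assumption \ref{assumption:2stage}.2, you phrase it via the continuous minimizer of the convex quadratic lying below $3/2$ --- the same fact); and $\beta_{\text{SO}}$ is obtained from indifference between $2g(0)$ and $g(1,\mu_\beta)+C_2$, where $C_2:=\beta g(x_L^{\textup{SO}},L)+(1-\beta)g(x_H^{\textup{SO}},H)$. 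In one respect you are more careful than the paper: its case split (``all safe in both rounds'' versus ``at least one agent on risky in round one'') silently skips the branch $x_1^R=0$ with a positive second-round flow, whereas your domination argument $C_2\le\min_x g(x,\mu_\beta)\le g(0)$ disposes of that branch cleanly.

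The step you cannot wave through is $\beta_{\text{SO}}\le\beta_p$. ``Directly comparing'' the two closed-form thresholds is not routine: they are rational expressions with unrelated numerators and denominators (one involves $g(x_L^{\textup{SO}},L)$ and $g(x_H^{\textup{SO}},H)$, the other does not), and the inequality is not visible from the formulas. The paper avoids this entirely by proving an implication between the defining inequalities: if $\beta\ge\beta_p$, i.e. $2(S_0+S_1N)\ge\mu_\beta+\beta L+(1-\beta)(S_0+S_1N)$, then adding $2(S_0+S_1N)(N-1)$ to both sides and using the elementary bounds $g(x_L^{\textup{SO}},L)\le g(1,L)\le L+(S_0+S_1N)(N-1)$ and $g(x_H^{\textup{SO}},H)\le g(0)$ yields exactly the CP's experimentation-optimality condition, i.e. $\beta\ge\beta_{\text{SO}}$. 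Some argument of this kind (or a genuinely carried-out algebraic comparison) is needed; as written, this part of your proposal is a gap. A final cosmetic remark: if you actually perform the substitution you describe, the state-independent constant enters the numerator as $-\bigl(S_0(N+1)+S_1((2+N)N-1)\bigr)$, so the $S_1$ term has the opposite sign from the one printed in the theorem; the stated formula appears to carry a sign typo, and your algebra will not literally reproduce it (this is not an error on your side).
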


\begin{remark}
\label{rem} Two remarks are in order. First, note that while when $\theta =H$
it is never myopically a best response for an agent to take the risky road,
the previous lemma shows that the CP may still want to send some agents to
the risky road in the second round (if $x_{H}^{\QTR{up}{SO}}\geq 1$) to
reduce congestion on safe for all other agents. Second, note that at least
for any belief $\beta \in \lbrack \beta _{\text{SO}},\beta _{p})$ the social
optimum scheme is not incentive compatible. In fact, since $\beta <\beta _{p}
$ it is not incentive compatible for an agent to experiment in the first
round (under private information the experimenter has the highest possible
gain from experimentation hence if experimentation doesn't happen under
private information it cannot happen under any information scheme).
Nonetheless, for $\beta >\beta _{SO}$ the CP would like to experiment by
sending one agent to the risky road (because knowing the state of the road
is collectively beneficial). In Example \ref{example:2stage}, $\beta _{\text{%
SO}}=0.05$ is significantly lower than $\beta _{p}=0.50$ suggesting that the
social optimum may not be incentive compatible for a large range of beliefs.
\end{remark}

\subsection{Partial information}

\begin{figure}[tbp]
\caption{ Example \protect\ref{example:2stage}. We distinguish four cases
based on the prior $\protect\beta$: A) no experimentation, B)
experimentation under social optimum, C) experimentation under private and
optimal information, D) experimentation under all schemes. }
\label{figure:2stage}\centering
\includegraphics[width=.8\textwidth]{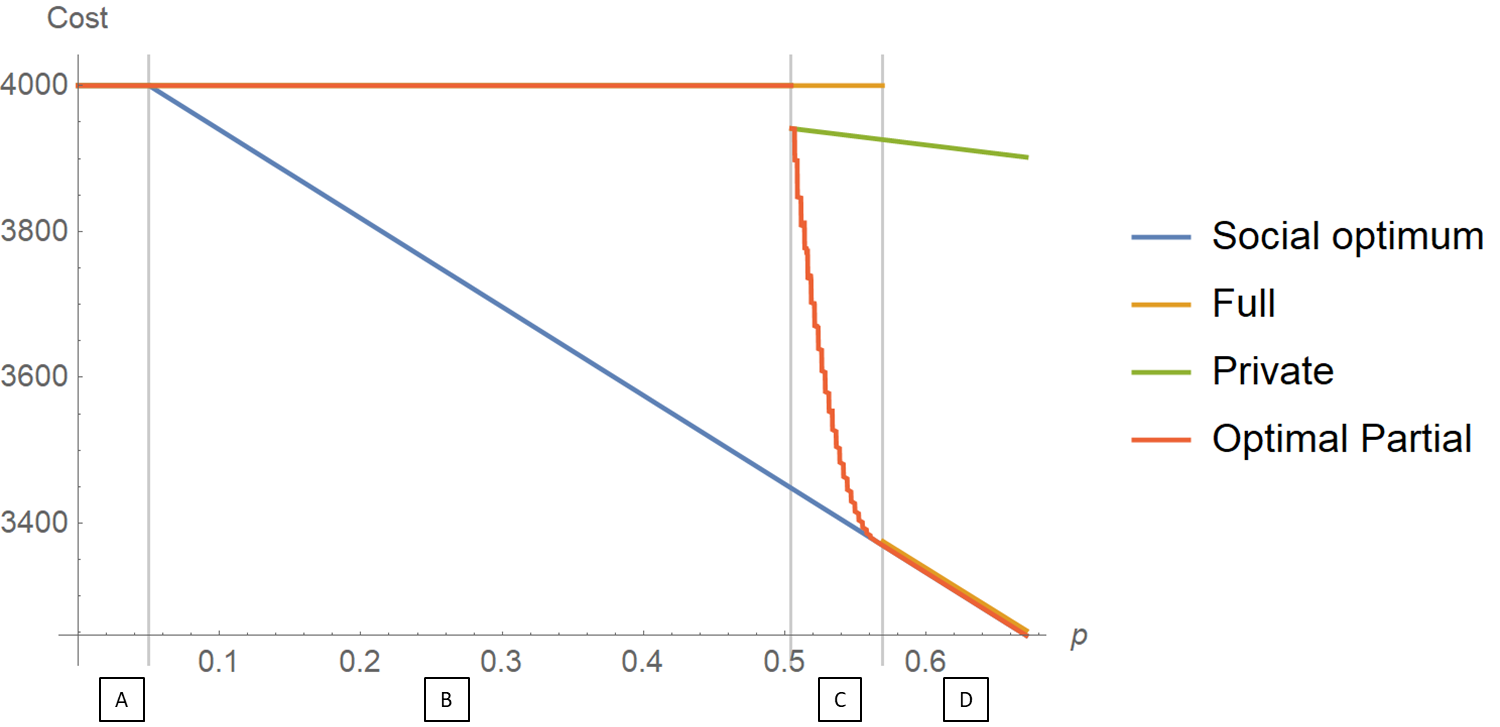}
\end{figure}

The CP can alleviate the problems of full and private information and
achieve a cost that is closer to social optimum by providing recommendations
in a coordinated way. The objective here is to find a balance between

\begin{itemize}
\item providing information to a large enough number of agents in the second
stage, so that the total cost is low when $\theta=L$;

\item providing information to a small enough number of agents in the second
stage to avoid a high level of congestion on the risky road when $\theta =L$
to encourage experimentation in the first round.
\end{itemize}

We refine the CP's recommendation scheme for the two stage model as follows.

\begin{definition}[Two stage recommendation scheme]
\label{def:two_stage} In the two stage model, a deterministic recommendation
scheme is a pair of mappings $(\pi_1, \pi_2)$ where $\pi_t: \{\beta\}
\rightarrow \{ 0, 1, ..., N\}$ maps the CP's belief on the state of the
risky road at time $t$ to the number $\pi_t(\beta)$ of uninformed agents to
whom the CP sends a recommendation of risky before time $t$.\footnote{%
We assume that the $\pi_t(\beta)$ agents to which $r_R$ is sent are chosen
uniformly at random from the set of uninformed agents.} With a slight abuse
of notation we let $\pi_t(L) := \pi_t(1)$ and $\pi_t(H) := \pi_t(0)$.
\end{definition}

Note that this definition restricts attention to recommendation systems that
are \textit{anonymous}, in the sense that the recommendations for all agents
with the same beliefs (i.e. agents that took the safe road) are drawn from
the same distribution.\footnote{%
This could also be replicated with a recommendation system that is fully
anonymous, meaning that all agents receive recommendations from the same
distribution, but those with beliefs determined from their experience of the
risky road (the experimenters) will not follow these recommendations.}
Nevertheless, the recommendation system is potentially \textquotedblleft
interim asymmetric\textquotedblright\ --- meaning that some of these agents
may receive different recommendations.\footnote{%
An alternative is to impose additionally that the scheme is interim
symmetric, but mixed. In this case, all agents would receive the same
stochastic recommendation. Because we have a finite number of agents, this
would induce additional noise in traffic flows, hence we do not focus on
this case.}

Because of Lemma \ref{lemma:pure_one_exp}, in any incentive compatible
scheme it must be $\pi _{1}(\beta )\leq 1 $. We already argued in Remark \ref%
{rem} that in any incentive compatible scheme there cannot be
experimentation if $\beta <\beta _{p}$, hence in this range it must be $\pi
_{1}(\beta )=\pi _{2}(\beta )=0$. If instead $\beta \geq \beta _{p}$, we
show that the optimal incentive compatible scheme selects $\pi _{1}(\beta )=1
$ and values of $\pi _{2}(L),\pi _{2}(H)$ obtained by solving the following
quadratic integer program with quadratic constraints (corresponding to the
incentive compatibility constraints).

\begin{theorem}
\label{thrm:2stageopt} If $\beta \geq \beta _{p}$ the optimal incentive
compatible recommendation scheme is a solution to the following minimization
problem 
\begin{subequations}
\label{eq:opt2}
\begin{align}
\min_{\pi_2(L),\pi_2(H)} \quad &g(1, \mu_{\beta}) + \beta g({x}_2^{R \mid L}, L) + (1-\beta)g({x}_2^{R \mid H}, H) \\ \text{s.t.}\quad &\underbrace{\mathbb{E}_{\theta }[S_{0}+S_{1}(N-{x}_{2}^{R\mid \theta
})\mid \textup{rec. safe}]}_{\textup{follow rec. of safe}}\leq \underbrace{%
\mathbb{E}_{\theta }[\theta ({x}_{2}^{R\mid \theta }+1)\mid \textup{rec. safe}]%
}_{\textup{deviate to risky}},  \label{2stage:ic1} \\
& \underbrace{\mathbb{E}_{\theta }[\theta {x}_{2}^{R\mid \theta }\mid \textup{%
rec. risky}]}_{\textup{follow rec. of risky}}\leq \underbrace{\mathbb{E}%
_{\theta }[S_{0}+S_{1}(N-{x}_{2}^{R\mid \theta }+1)\mid \textup{rec. risky}]}_{%
\textup{deviate to safe}},  \label{2stage:ic2} \\
& \underbrace{\mathbb{E}[\theta ]}_{\substack{ \textup{exp.'s cost} \\ \textup{%
in round 1}}}+\underbrace{\beta (L{x}_{2}^{R\mid L})+(1-\beta
)(S_{0}+S_{1}(N-{x}_{2}^{R\mid H}))}_{\substack{ \textup{experimenter's cost}
\\ \textup{in round 2}}}\leq \underbrace{2(S_{0}+S_{1}N)}_{\substack{ \textup{
all playing safe} \\ \textup{in both rounds}}} \label{2stage:ic3}\\
 &{x}_2^{R \mid L} = \pi_2(L)+1, \ {x}_2^{R \mid H} = \pi_2(H) \notag \\
 & \pi_2(L),\pi_2(H) \in \{0, 1, ..., N\}  \notag
\end{align}%
{Equations \eqref{2stage:ic1}, \eqref{2stage:ic2} and \eqref{2stage:ic3} are
given in implicit form for readability, the explicit form is provided within
the proof.}
\end{subequations}
\end{theorem}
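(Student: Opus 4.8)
The plan is to reduce any incentive compatible scheme to one with a single first-round experimenter, compute the induced expected cost, and then read off the incentive constraints of Definition \ref{def:IC} as exactly \eqref{2stage:ic1}--\eqref{2stage:ic3}. First I would pin down the number of experimenters. By Lemma \ref{lemma:pure_one_exp} any IC scheme has $\pi_1(\beta) \leq 1$, so the only alternative to rule out is $\pi_1(\beta) = 0$. The one-experimenter scheme that reveals nothing in the second round (i.e.\ $\pi_2(L) = \pi_2(H) = 0$) is exactly private information, which by Theorem \ref{thrm:private_vs_full} is incentive compatible and has cost no larger than the no-experimentation cost $2g(0)$ whenever $\beta \geq \beta_p$. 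Since this scheme is feasible for the program below, the program's optimum is weakly better than $2g(0)$, so the optimal IC scheme indeed sets $\pi_1(\beta) = 1$, leaving only $\pi_2(L),\pi_2(H)$ (in the sense of Definition \ref{def:two_stage}) to be chosen.

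Next I would fix the flows and the objective. With one informed experimenter, in the second round he plays risky when $\theta=L$ and safe when $\theta=H$, so the realized risky flows are $x_2^{R\mid L} = \pi_2(L)+1$ and $x_2^{R\mid H} = \pi_2(H)$. Because exactly one agent experiments, the first-round flows (and hence the safe-road observations) are identical in both states, so an uninformed agent enters round two knowing only his own recommendation. The first-round cost $g(1,\theta)$ is affine in $\theta$, so its expectation equals $g(1,\mu_\beta)$; adding $\beta g(x_2^{R\mid L},L) + (1-\beta)g(x_2^{R\mid H},H)$ for the second round reproduces the stated objective.

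The core of the argument is translating incentive compatibility into the three displayed constraints. Since the $\pi_2(\theta)$ risky recommendations are drawn uniformly from the $N-1$ uninformed agents, Bayes' rule gives $\mathbb{P}(\theta = L \mid \text{rec. risky}) \propto \beta\,\pi_2(L)$ and $\mathbb{P}(\theta = L \mid \text{rec. safe}) \propto \beta\,(N-1-\pi_2(L))$. A told-safe agent who deviates to risky raises that road's flow to $x_2^{R\mid\theta}+1$, which yields \eqref{2stage:ic1}; a told-risky agent who deviates to safe lowers it to $x_2^{R\mid\theta}-1$, which yields \eqref{2stage:ic2}; and the designated experimenter, weighing experimentation against playing safe in both rounds, yields \eqref{2stage:ic3}. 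I would justify the continuation cost $2(S_0+S_1N)$ on the right-hand side of \eqref{2stage:ic3} by noting that, absent any experimentation, Assumption \ref{assumption:2stage}.2 makes all agents play safe in the second round as well. Deviations in which a told-safe agent becomes a second experimenter are excluded by Lemma \ref{lemma:pure_one_exp}, so these three conditions are necessary and sufficient, and minimizing the objective over $\pi_2(L),\pi_2(H) \in \{0,\dots,N\}$ subject to them gives the claim.

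The hard part will be the incentive-compatibility step: performing the Bayesian update from the randomized, anonymous recommendations correctly, and then verifying that no further constraints are hidden. In particular I must check that the informed experimenter's own second-round choice is a best response (that $L\,x_2^{R\mid L} \leq S_0 + S_1(N-x_2^{R\mid L}+1)$ when $\theta=L$), and argue that one may restrict to $\pi_2(L) \geq \pi_2(H)$, so that a recommendation of risky is genuinely favorable news about $\theta$ and \eqref{2stage:ic2} is the operative direction. Converting the implicit conditional expectations in \eqref{2stage:ic1}--\eqref{2stage:ic3} into explicit quadratic inequalities in $\pi_2(L),\pi_2(H)$ is then routine but must be done carefully, since these are precisely the quadratic constraints referenced in the statement.
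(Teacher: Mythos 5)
Your plan mirrors the paper's proof almost step for step: Lemma \ref{lemma:pure_one_exp} caps experimentation at one agent; feasibility of the private-information point $\pi_2(L)=\pi_2(H)=0$ handles the comparison with no experimentation when $\beta\ge\beta_p$; the Bayes posteriors you write ($\mathbb{P}(\theta=L\mid \text{rec. risky})\propto \beta\,\pi_2(L)$ and $\mathbb{P}(\theta=L\mid \text{rec. safe})\propto\beta\,(N-1-\pi_2(L))$) are exactly those appearing in the paper's explicit constraints; and the three incentive conditions you extract coincide with \eqref{2stage:ic1}--\eqref{2stage:ic3}.

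There is, however, one genuine gap: the step you assume at the outset and only flag at the end as something ``to check,'' namely that the informed experimenter takes risky when $\theta=L$ and safe when $\theta=H$, so that the second-round flows are $x_2^{R\mid L}=\pi_2(L)+1$ and $x_2^{R\mid H}=\pi_2(H)$. The check you propose, $L\,x_2^{R\mid L}\le S_0+S_1(N-x_2^{R\mid L}+1)$, only establishes sufficiency, i.e., that feasible points of the program are genuinely incentive compatible; in fact that inequality is already implied by \eqref{2stage:ic2}, because if it failed a told-risky uninformed agent would strictly prefer safe in both states (when $\theta=H$ deviation is always better, since $Hx_2^{R\mid H}\ge H> S_0+S_1N\geq S_0+S_1(N-x_2^{R\mid H}+1)$). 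What your plan does not cover is the necessity direction: a priori, an incentive compatible scheme could prescribe that the experimenter remain on safe when $\theta=L$, and such schemes are not parametrized by your program, so you cannot yet conclude that the optimal IC scheme solves it. The paper's ``Type 1'' analysis closes exactly this hole: the experimenter prefers safe only when the risky flow is at least $x_L^{\textup{eq}}$, while incentive compatibility for told-risky uninformed agents forces the flow to be at most $x_L^{\textup{eq}}$; hence any such scheme has flow exactly $x_L^{\textup{eq}}$, replicates the full-information cost, and is weakly dominated because that cost is attained (for $\beta\ge\beta_f$) by the feasible point $\pi_2(L)=x_L^{\textup{eq}}-1$, $\pi_2(H)=0$. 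Without an argument of this kind the characterization is incomplete. Separately, the restriction $\pi_2(L)\ge\pi_2(H)$ you say must be argued is unnecessary (and the theorem does not impose it): \eqref{2stage:ic1} and \eqref{2stage:ic2} are the correct Bayes-updated incentive inequalities regardless of which recommendation constitutes ``good news,'' so nothing about the direction of the news needs to be established.
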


Note, that $\pi _{2}(L)=\pi _{2}(H)=0$ is a feasible solution when $\beta
\geq \beta _{p}$ (and corresponds to private information). Moreover, if $%
\beta >\beta _{f}$, $\pi _{2}(L)=x_{L}^{\QTR{up}{eq}}-1,\pi _{2}(H)=0$ is a
feasible solution and has the same social cost as full information. Hence
private and full information have always at least weakly higher cost than
the optimal incentive compatible scheme. We show in Example \ref%
{example:2stage} that the optimal incentive compatible scheme can be
strictly better than private and full information (see region C in Figure %
\ref{figure:2stage}).

\section{The infinite-horizon model}

\label{section:infty}

We now extend our analysis to an infinite-horizon setting. {For simplicity,
we restrict our attention to the case when the safe road has a fixed cost $%
S_{0}$ (i.e. we set $S_{1}=0$), so that the cost under full information is
simply $S_{0}/(1-\delta )$.}\footnote{{If $S_{1}>0$ a similar argument can
be followed to derive an incentive compatible recommendation scheme. Proving
optimality of such a scheme is more complicated, however. The main technical
difficulty in that scenario is computing the optimal punishment for
deviations. We show below that, in the case where $S_{1}=${$0$, }the optimal
punishment is providing full information.} Finally, note that even though
the travel time on the safe road does not depend on congestion levels, we
still assume that if an agent takes safe at time $t$, he observes $x_{t}^{S}$%
.} We first characterize the social optimum scheme and give an example to
illustrate why it may not be incentive compatible. We then introduce an
incentive compatible recommendation system, prove it achieves better cost
than full information and derive conditions for optimality.

\subsection{Unconstrained social optimum}

\label{section:inf_so}We first derive the \textquotedblleft
unconstrained\textquotedblright\ social optimal, meaning that we ignore the
incentive compatibility constraints of the agents.

Suppose that the CP has a belief, $\beta_{t-1} \in [0, 1]$ about the
probability that state of the road at time $t-1$ was $L$ (we use the
convention $\beta_{t-1}=0$ if $H$ was observed and $\beta_{t-1} = 1$ if $L$
was observed). If the CP had complete control of the agents the myopically
optimal flow to send at time $t$ under a generic belief $\beta_{t-1}=\beta$
would be 
\begin{equation}  \label{xbso}
x_{\beta}^{\QTR{up}{SO}}:= \operatornamewithlimits{argmin}_{x \in \{0, ...,
N\}}\ \mathbb{E} _{\theta_t}\left[ g(x, \theta_t) \mid \beta_{t-1}=\beta %
\right].
\end{equation}
Note that the myopic flow does not depend on $t$ given the properties of
Markov processes. With a slight abuse of notation we set $x_L^{\QTR{up}{SO}%
}:=x_1^{\QTR{up}{SO}}$ and $x_H^{\QTR{up}{SO}}:=x_0^{\QTR{up}{SO}}$.

In the following, we focus on cases where the cost of experimentation is
high from a myopic standpoint. Specifically, we consider cases in which $%
x_H^{\QTR{up}{SO}}=0$, so that myopically the CP has no incentive to send
agents to the risky road after observing H in the last period. Our interest
is determining conditions under which experimentation happens when the CP is
forward looking.

\begin{assumption}[Cost of experimentation]
\label{ass:mul_muh}

Define the expected value of the congestion parameter $\theta_t$, following
an observation of $\theta_{t-1}$ as 
\begin{align*}
\mu_{L} &:= \mathbb{E} [\theta_{t} | \theta_{t-1} = L] = (1-\gamma_L) L +
\gamma_L H, \\
\mu_{H} &:= \mathbb{E} [\theta_{t} | \theta_{t-1} = H] = \gamma_H L +
(1-\gamma_H) H.
\end{align*}
We assume that $S_0>3L$ and 
\begin{align*}
\mu_L &\in [L, (1/3)S_0) \\
\mu_H &\in \left[ S_0, S_0 + \delta \gamma_H \left( \frac{S_0}{3} - \mu_L
\right) \right].
\end{align*}
\end{assumption}

Intuitively, Assumption \ref{ass:mul_muh} imposes that the expected
congestion parameter $\mu _{L}$ following an observation of $L$ is small
enough such that the CP would myopically send two or more agents after
observing $L$ (that is, $x_{L}^{\QTR{up}{SO}}\geq 2$). If this were not the
case, then the CP would send the same flow after $L$ and $H$ making the
problem uninteresting. The assumption also imposes that $\mu _{H}\geq S_{0}$%
, which implies that the CP would myopically send no agent after seeing $H$,
thus making experimentation beneficial only because of forward-looking
incentives---there would be no experimentation with myopic agents. Finally,
the upper bound on $\mu _{H}$ ensures that the forward-looking CP always
find experimentation after seeing $H$ beneficial (rather than sending all
agents to safe for one or more rounds).

\begin{proposition}[Social optimum]
\label{prop:so} Under Assumptions \ref{ass:gamma} and \ref{ass:mul_muh}, $%
x_H^{\QTR{up}{SO}}=0$ and $x_L^{\QTR{up}{SO}} \geq 2$. Let us define the
social optimum recommendation scheme as a function $\pi^{\QTR{up}{SO}}$ that
maps the belief $\beta$ that the CP has about the state of the road at time $%
t-1$ to the number of agents to send to the risky road at time $t$ to
minimize total discounted travel time, that is, 
\begin{equation}  \label{so}
\pi^{\QTR{up}{SO}}(\beta):=\operatornamewithlimits{argmin}_{x \in \{0, ...,
N\}} \mathbb{E}\left[ g(x, \theta_t) + \sum_{k\ge 1} \delta^k g\left(\pi^{%
\QTR{up}{SO}}(\beta_{t-1+k}), \theta_{t+k}\right) \mid \mathbb{P}
[\theta_{t-1} = L] =\beta \right].
\end{equation}
Then 
\begin{align}
\pi^{\QTR{up}{SO}}(\beta)= \max\{1, x^{\QTR{up}{SO}}_{\beta} \},
\end{align}
with $x^{\QTR{up}{SO}}_{\beta}$ as defined in \eqref{xbso}.
\end{proposition}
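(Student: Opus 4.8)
The plan is to prove the two assertions separately: first the pointwise characterization of the myopic optima $x_H^{\QTR{up}{SO}}$ and $x_L^{\QTR{up}{SO}}$, and then the dynamic optimality of the scheme $\pi^{\QTR{up}{SO}}$ by a verification (Bellman) argument. For the first part I would use that, with $S_1=0$, the expected stage cost $\mathbb{E}_{\theta_t}[g(x,\theta_t)\mid\beta_{t-1}=\beta]=\bar\mu_\beta x^2+S_0(N-x)$ is a convex quadratic in $x$ with continuous minimizer $S_0/(2\bar\mu_\beta)$, where $\bar\mu_\beta:=\beta\mu_L+(1-\beta)\mu_H$. At $\beta=0$ Assumption \ref{ass:mul_muh} gives $\bar\mu_0=\mu_H\ge S_0$, so $S_0/(2\mu_H)\le 1/2$ and, since $\mathbb{E}[g(1,\theta_t)\mid H]-g(0)=\mu_H-S_0\ge0$, the integer argmin is $x_H^{\QTR{up}{SO}}=0$. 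At $\beta=1$ we have $\bar\mu_1=\mu_L<S_0/3$, so $S_0/(2\mu_L)>3/2$; comparing integers, $\mathbb{E}[g(2,\theta_t)\mid L]-\mathbb{E}[g(1,\theta_t)\mid L]=3\mu_L-S_0<0$, so $x=1$ is not optimal and $x_L^{\QTR{up}{SO}}\ge2$. These are the only two places the $\mu_L,\mu_H$ bounds enter the first claim.

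For the dynamic claim I would model the problem as a discounted MDP whose state is the CP's posterior belief $\beta\in[0,1]$ about the most recently realized $\theta$. Since the per-stage costs are bounded ($x\in\{0,\dots,N\}$, $\theta\in\{L,H\}$), the Bellman operator is a contraction, so it suffices to verify the Bellman optimality equation for the candidate policy, i.e. to rule out one-shot deviations. The key structural observation is that any flow $x\ge1$ reveals $\theta_t$ exactly, so the continuation value following an experimenting action depends only on the realized state and not on the particular $x$: writing it as $\mathbb{P}(\theta_t=L\mid\beta)V(1)+\mathbb{P}(\theta_t=H\mid\beta)V(0)$, minimizing over $x\ge1$ reduces to minimizing the one-stage cost, whose constrained integer minimizer is exactly $\max\{1,x_\beta^{\QTR{up}{SO}}\}$ (from \eqref{xbso}). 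The whole claim then reduces to showing that the value of the best experimenting action, $V^{\text{exp}}(\beta)$, is no larger than the value $V^{\text{no}}(\beta)=S_0N+\delta V(T(\beta))$ of the only non-experimenting action $x=0$, where $T(\beta)=\beta(1-\gamma_L)+(1-\beta)\gamma_H$ is the one-step belief update.

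When $x_\beta^{\QTR{up}{SO}}\ge1$ both terms favour experimentation: the experimenting stage cost is at most $g(0)=S_0N$, and the continuation is no larger because the value function is concave in the belief (information has nonnegative value), so $V^{\text{exp}}(\beta)\le V^{\text{no}}(\beta)$ is immediate. The substantive region is $\{x_\beta^{\QTR{up}{SO}}=0\}$ (equivalently $\bar\mu_\beta\ge S_0$), whose worst point is $\beta=0$ because $\bar\mu_\beta$ decreases and $T(\beta)$ increases in $\beta$, so $\beta=0$ simultaneously has the highest experimentation cost and the lowest chance of observing $L$. There experimentation ($x=1$) costs $\mu_H-S_0\ge0$ more per stage than $x=0$, and this immediate loss must be offset by the gain of learning the state one period sooner. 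I would make this precise by solving the $2\times2$ linear system for $V(0),V(1)$ determined by the candidate policy on the recurrent states $\{0,1\}$ (using $c_H=\mu_H+S_0(N-1)$ and $c_L=\mu_L(x_L^{\QTR{up}{SO}})^2+S_0(N-x_L^{\QTR{up}{SO}})$) and reducing $V^{\text{exp}}(0)\le S_0N+\delta V^{\text{exp}}(\gamma_H)$ to an explicit scalar inequality.

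The main obstacle is exactly this last inequality at $\beta=0$, which is a genuine value-of-information bound. The forward-looking gain arises because once $L$ is revealed the planner may route at least two agents, and the marginal value of that second agent is $\mathbb{E}[g(1,\theta)\mid L]-\mathbb{E}[g(2,\theta)\mid L]=S_0-3\mu_L=3(\tfrac{S_0}{3}-\mu_L)$, which is strictly positive precisely because $\mu_L<S_0/3$ (equivalently $x_L^{\QTR{up}{SO}}\ge2$). Weighting this gain by the discount $\delta$ and by the probability $\gamma_H$ that the road turns low, and matching it against the per-stage experimentation cost $\mu_H-S_0$, the required inequality is implied by the upper bound $\mu_H\le S_0+\delta\gamma_H(\tfrac{S_0}{3}-\mu_L)$ of Assumption \ref{ass:mul_muh}; this assumption is therefore exactly the condition that makes experimentation after $H$ optimal. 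Finally, the monotonicity of $\bar\mu_\beta$ and $T(\beta)$ in $\beta$ propagates the binding $\beta=0$ inequality to the entire region $\{x_\beta^{\QTR{up}{SO}}=0\}$, completing the verification of $\pi^{\QTR{up}{SO}}(\beta)=\max\{1,x_\beta^{\QTR{up}{SO}}\}$ for every $\beta\in[0,1]$.
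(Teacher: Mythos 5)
Your overall route is the same as the paper's: the same two integer comparisons give $x_H^{\textup{SO}}=0$ (from $\mu_H\ge S_0$) and $x_L^{\textup{SO}}\ge 2$ (from $\mu_L<S_0/3$), and the dynamic claim is verified by a one-shot-deviation argument in which deviations among experimenting actions $x\ge 1$ are dismissed because they share the same continuation, so the only substantive check is the deviation to $x=0$ in the region $\{x^{\textup{SO}}_\beta=0\}$, with worst case $\beta=0$, reduced to $\mu_H-S_0\le\delta\gamma_H\bigl[g(1,\mu_L)-g(x_L^{\textup{SO}},\mu_L)\bigr]$ and hence to Assumption \ref{ass:mul_muh}. (Your concavity remark for the region $x^{\textup{SO}}_\beta\ge1$ is asserted rather than proved, but it is correct and cheap here: the candidate policy always experiments, so its value $V^\pi(\beta)$ equals $\min_{x\ge1}g(x,\mu_\beta)$ plus an affine function of $\beta$, hence is concave; this actually handles the deviation to $x=0$ in that region more explicitly than the paper does.)

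The genuine gap is in the step ``reducing $V^{\text{exp}}(0)\le S_0N+\delta V^{\text{exp}}(\gamma_H)$ to an explicit scalar inequality.'' The deviation value involves the candidate policy's value at the \emph{interior} belief $T(0)=\gamma_H$, and to evaluate it you must know which action the candidate takes there, i.e.\ you must prove $x^{\textup{SO}}_{\gamma_H}\le 1$, so that $\pi^{\textup{SO}}(\gamma_H)=1$; moreover your monotonicity propagation to the whole region $\{x^{\textup{SO}}_\beta=0\}$ needs $x^{\textup{SO}}_{T(\beta)}\le 1$ for \emph{every} such $\beta$, otherwise the inequality being propagated changes form. This is exactly the paper's Lemma \ref{beta}, which you never state or prove. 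It matters: if $x^{\textup{SO}}_{T(\beta)}\ge 2$ were possible, the post-deviation stage cost would be $g\bigl(x^{\textup{SO}}_{T(\beta)},\mu_{T(\beta)}\bigr)<g\bigl(1,\mu_{T(\beta)}\bigr)$, the deviation would be strictly cheaper than your formula assumes, and matching against Assumption \ref{ass:mul_muh} would no longer certify optimality. Nor can your concavity tool close this step, since it bounds in the wrong direction: concavity gives $V^\pi(\gamma_H)\ge\gamma_H V^\pi(1)+(1-\gamma_H)V^\pi(0)$, so the deviation is bounded below by $S_0N+\delta\bigl[\gamma_H V^\pi(1)+(1-\gamma_H)V^\pi(0)\bigr]$, which under $\mu_H\ge S_0$ is \emph{weakly smaller} than the candidate's value $g(1,\mu_H)+\delta\bigl[\gamma_H V^\pi(1)+(1-\gamma_H)V^\pi(0)\bigr]$ --- the informational gain is exactly what that bound erases. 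The missing lemma is true and short (at $\beta=0$: $\mu_{\gamma_H}\ge(1-\gamma_H)\mu_H\ge S_0/2>S_0/3$, hence $g(2,\mu_{\gamma_H})>g(1,\mu_{\gamma_H})$; the paper's Lemma \ref{beta} gives the general-$\beta$ version), so the gap is fillable, but as written your verification is incomplete without it.
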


Under the social optimum recommendation scheme derived above the CP sends
one agent (the experimenter) if the state of the risky road was $H$ at the
previous step (to explore) and $x_{L}^{\QTR{up}{SO}}\geq 2$ if it was $L$
(to exploit). Hence, under $\pi^{\QTR{up}{SO}}$ the CP always knows the
state of the risky road. The next example, however, shows that this scheme
is not necessarily incentive compatible. In particular, when agents make
their own routing decisions, the CP may not be able to send $x_{L}^{\QTR{up}{%
SO}}$ agents when the state is $L$.

\begin{example}
\label{example:infty_notic} Take the extreme case where $\gamma_L = 0$. Then 
$x_L^{\QTR{up}{SO}} \approx \frac{S_0}{2 {L}}$ and $x_L^{\QTR{up}{eq}}
\approx \frac{S_0}{{L}}$.\footnote{%
Again, the approximation comes from the integer constraint of our atomic
model.} Suppose that at time $t-1$ the risky road state changes from $H$ to $%
L$. Since $\gamma_L = 0$ this will be the state of the risky road from that
point forward. According to $\pi^{\QTR{up}{SO}}$, at time $t$, the CP sends $%
x_L^{\QTR{up}{SO}}$ drivers to risky to exploit the low state. After time $t$%
, under $\pi^{\QTR{up}{SO}}$, agents that were on risky at time $t$ should
remain on risky forever and agents that were on safe at time $t$ should
remain on safe forever.

However, consider an agent on safe at time $t$. After observing the flow $%
x_L^{\QTR{up}{SO}}$ at time $t$, this agent can infer that $\theta$ has
changed to $L$. Hence at time $t+1$ he knows that

\begin{enumerate}
\item[-] if he remains on safe, as prescribed by $\pi ^{\QTR{up}{SO}}$, he
will experience a cost of $S_{0}$ for all future times;

\item[-] if he switches to risky, he will experience a cost of $\approx L(%
\frac{S_{0}}{2L}+1)=\frac{S_{0}}{2}+L$ for all future times;
\end{enumerate}

Under Assumption \ref{ass:mul_muh}, $L<\frac{S_{0}}{3}<\frac{S_{0}}{2}$.
Hence following $\pi ^{\QTR{up}{SO}}$ is not incentive compatible for the
agent.
\end{example}

In the next sections, our objective is to derive incentive compatible
recommendation schemes that achieve lower cost than providing full
information.

\subsection{Partial information: Incentive compatibility}

Example \ref{example:infty_notic} shows that the social optimum scheme $\pi
^{\QTR{up}{SO}}$ may not be incentive compatible because it does not take
into account the fact that agents that are on the safe road can infer the
state $\theta _{t-2}$ from the flow observed at time $t-1$. For this reason,
from here on we consider recommendation schemes where the CP conditions his
recommendations not only on $\theta _{t-1}$ but also on $\theta _{t-2}$. In
principle, the CP could even condition on further past values of the state $%
\theta $. Though we are not able to rule out formally that conditioning on $%
(\theta_{t-2}, \theta_{t-1})$ is optimal without loss of generality, in what
follows we simplify the analysis of incentive compatible recommendation
schemes by assuming that the CP will condition only on $(\theta
_{t-2},\theta _{t-1})$ and thus the relevant state can be summarized by
equilibrium path beliefs $(\beta _{t-2},\beta _{t-1})$. Based on Proposition %
\ref{prop:so}, we also restrict attention to schemes which do involve
experimentation for all sample paths (meaning that the CP always prefers to
send one agent on the risky road).

\begin{definition}[Infinite horizon recommendation scheme]
\label{def:am} A recommendation scheme is defined as a map $\pi :[0,1]\times
\lbrack 0,1]\rightarrow \{1,2,...,N\}$ which maps the belief of the CP on
the state of the risky road at time $t-2$ and $t-1$ (i.e. $\beta
_{t-2},\beta _{t-1}$) to the number of agents to whom the CP sends a
recommendation to take the risky road, $r_{R}$. If $\theta _{t-1}=L$, we
assume that all agents that were on the risky road at time $t-1$ receive a
recommendation to remain on the risky road at time $t$; the remaining
recommendations (i.e. $\pi (\cdot ,L)-x_{t-1}^{R}$) are sent to a random
subset of the agents on safe. If instead $\theta _{t-1}=H$ then
recommendations are sent to a random subset of all the agents. In both cases
agents that do not receive a recommendation of risky receive a
recommendation of safe, $r_{S}$. Finally, if any agent deviates, the CP
provides full information to all agents from then on.\footnote{%
Intuitively, full information is the worst incentive compatible punishment,
for deviation, that the CP can impose. In fact under full information, the
expected cost per round of each agent is $S_{0}$. No punishment can lead to
higher cost and be incentive compatible because agents can always switch to
play safe and achieve a cost of $S_{0}$.} We denote the set of all
recommendation schemes of this form as $\hat{\Pi}$. 
\end{definition}

{We want to stress that the assumption of restricted history applies only to
the CP, not to agents. Consequently, we are in no way restricting the
optimal behavior of the agents, who can condition their actions on all of
their past information.}

\noindent Any scheme $\pi \in \hat{\Pi}$ can be parametrized as follows 
\begin{equation*}
\pi (\beta _{t-2},\beta _{t-1})=%
\begin{cases}
a & \quad \QTR{up}{if}\ [\beta _{t-2},\beta _{t-1}]=[L,H] \\ 
b & \quad \QTR{up}{if}\ [\beta _{t-2},\beta _{t-1}]=[H,H] \\ 
c & \quad \QTR{up}{if}\ [\beta _{t-2},\beta _{t-1}]=[H,L] \\ 
d & \quad \QTR{up}{if}\ [\beta _{t-2},\beta _{t-1}]=[L,L].%
\end{cases}%
\end{equation*}%
Note that one should also specify the values of $\pi (\beta _{t-2},\beta
_{t-1})$ for values of $\beta_{1},\beta _{2}\in (0,1)$ but in the schemes we
consider this is not relevant in view of the fact that the CP always sends
at least one agent to experiment and thus knows the state of the risky road
in the previous period. For simplicity we denote a generic scheme $\pi $ of
this form as $\pi _{a,b,c,d}$ and the associated social cost as $V_{a,b,c,d}$
(we consider costs starting from $\theta _{0}=H$ since this
choice induces the lowest possible belief and is thus the most difficult
scenario for experimentation).

Example~\ref{example:infty_notic} showed that $\pi^{\QTR{up}{SO}}$ may not
be incentive compatible because if agents on the safe road see the flow $%
N-x_{L}^{\QTR{up}{SO}}$ they can infer that the road switched to $L$ at time 
$t-2$ and may have an incentive to deviate. This intuition motivates us to
focus in particular on a subclass of $\hat{\Pi}$ consisting of schemes
obtained by the following modification of the social optimum policy (see
also Table \ref{fig:scheme_so}). If $\theta _{t-1}=H$ the CP sends one agent
(to experiment) exactly as in the social optimum (i.e. we set $a=b=1$). If
instead $\theta _{t-1}=L$ then the CP sends two possibly different flows $c$
and $d$ depending on whether the road has just switched to $L$ or whether it
was $L$ also in the previous period (in which case the agents on safe can
infer the road changed at time $t-2$ from the flow observed at $t-1$).

\begin{definition}
\label{ab1} Consider a scheme $\pi_{a,b,c,d} \in \hat{\Pi}$ with $a=b=1$ and 
$1 < c \leq d$ and denote this for simplicity as $\pi_{c,d}$.
\end{definition}

Since $c>1$, under any scheme $\pi_{c,d}$, agents can learn the state of the
risky road at $t-2$ by observing the flow at $t-1$. Exploiting this fact, we
show that agents can summarize their history $h_{t-1}$ with a smaller state $%
z_t^i$, as detailed next.

\begin{lemma}
\label{lemma:agent_state} Under $\pi_{c,d}$ and given that
other agents follow their recommendations, an agent can evaluate if a
recommendation is incentive compatible using only the information
\begin{equation*}
z_t^i := [x_{t-1}, \beta^i_{t-1}, r_{t-1}^i]
\end{equation*}
where

\begin{itemize}
\item $x_{t-1} \in X:=\{0, 1, ..., N \}$ is the flow observed on the risky
road at the previous time (even if an agent is on safe he can infer $x_{t-1}$
as $N$ minus the flow on safe, hence this is common information);

\item $\beta^i_{t-1} \in \mathcal{B} :=\{ L, H, U\}$ encodes the information
that agent $i$ has about the state of the road at $t-1$. If the agent was on
the risky road at $t-1$, then he knows the true realization ($L$ or $H$),
while if he was on the safe road we denote the fact that he does not know
the state with the symbol $U$ (Unobserved);\footnote{%
We simply use the symbol $U$ instead of specifying the belief with a number
in $(0,1)$.}

\item $r_{t-1}^i \in \Lambda := \{r_S, r_R \}$ is the recommendation an
agent receives between round $t-1$ and $t$.
\end{itemize}

Specifically, let $h_{t-1}^i$ be the entire history of the agent
up to and including time $t-1$. For any $\alpha^i_t\in\{S,R\}$ it holds
\begin{align*}
\mathbb{E}  &\left[ \tilde{g}(\alpha^i_t, \pi_{c,d}^{-i,t}, \theta_t) +
\sum_{k=t+1}^{\infty} \delta^{k-t} \tilde{g} ( \pi_{c,d}^{i,k} ,
\pi_{c,d}^{-i,k}, \theta_k) \mid h^i_{t-1} \right] \\
&= \mathbb{E}  \left[ \tilde{g} (\alpha^i_t, \pi_{c,d}^{-i,t}, \theta_t) +
\sum_{k=t+1}^{\infty} \delta^{k-t} \tilde{g} ( \pi_{c,d}^{i,k},
\pi_{c,d}^{-i,k}, \theta_k) \mid z^i_{t} \right]
\end{align*}
where $\pi^{i,k}_{c,d}$ is the recommendation sent at time $k>t$ by the CP
if agent $i$ takes action $\alpha^i_t$ at time $t$ and follows the
recommendations from there on, while $\pi^{-i,k}_{c,d}$ denotes the
recommendations sent to all other agents.
\end{lemma}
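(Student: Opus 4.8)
The plan is to prove that $z_t^i$ is a \emph{sufficient statistic} for the continuation cost: I will show that, conditional on the full history $h_{t-1}^i$, the law of everything entering the right-hand side depends on $h_{t-1}^i$ only through $z_t^i$. The claimed equality of conditional expectations then follows by the tower property, writing $\mathbb{E}[\,\cdot \mid h_{t-1}^i] = \mathbb{E}[\,\mathbb{E}[\,\cdot \mid \text{state path}] \mid h_{t-1}^i]$ and observing that both the inner function and the conditional law of the state path are $z_t^i$-measurable.

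First I would argue that, given the realized state path, the continuation cost is a deterministic function of $\alpha_t^i$, $r_{t-1}^i$, and the states $(\theta_{t-2},\theta_{t-1},\theta_t,\theta_{t+1},\dots)$. This uses three features of any scheme $\pi_{c,d}$: (i) because $S_1=0$, every safe-road stage cost equals the constant $S_0$, so only risky-road stages require a flow; (ii) on the equilibrium path all agents follow their recommendations, and since $a=b=1$ and $c>1$ there is always experimentation, so the CP always knows the previous true state and every recommended flow $\pi_{c,d}(\theta_{k-2},\theta_{k-1})$ — hence every realized flow — is a deterministic function of the state path; and (iii) if $\alpha_t^i$ deviates from $r_{t-1}^i$, the punishment clause of Definition \ref{def:am} switches the continuation to full information, contributing the deterministic tail $\sum_{k\ge t+1}\delta^{k-t} S_0$. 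Which branch applies is fixed by $(\alpha_t^i,r_{t-1}^i)$, the time-$t$ cost is either $S_0$ (if $\alpha_t^i=S$) or $\theta_t x_t^R$ with $x_t^R$ read off from $\pi_{c,d}(\theta_{t-2},\theta_{t-1})$ and $r_{t-1}^i$ (if $\alpha_t^i=R$), and the only residual randomness — the identity of which agents are recommended risky in future rounds — has a law determined by the future states alone.

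The core step is to show that the posterior law of $(\theta_{t-2},\theta_{t-1})$ given $h_{t-1}^i$, and hence by the Markov property the law of the whole future path $(\theta_k)_{k\ge t}$, is $z_t^i$-measurable. Here I use that $c,d>1$ forces $x_{t-1}=1$ exactly when $\theta_{t-2}=H$ and $x_{t-1}>1$ exactly when $\theta_{t-2}=L$, so $x_{t-1}$ pins down $\theta_{t-2}$ and, with it, the number $N-x_{t-1}$ of safe agents. For $\theta_{t-1}$: if $\beta_{t-1}^i\in\{L,H\}$ the agent was on risky and observed it, so the law is a point mass; if $\beta_{t-1}^i=U$ the agent was on safe, and I would compute the posterior by Bayes' rule, combining the one-step Markov prior $\mathbb{P}(\theta_{t-1}\mid\theta_{t-2})$ (fixed by $\gamma_L,\gamma_H$ and the recovered $\theta_{t-2}$) with the likelihood of receiving $r_{t-1}^i$ under the randomized assignment of Definition \ref{def:am} — a likelihood depending only on $\theta_{t-1}$, on $N-x_{t-1}$, and on the recommended count $\pi_{c,d}(\theta_{t-2},\theta_{t-1})$, all encoded in $z_t^i$.

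The main obstacle I anticipate is the $\beta_{t-1}^i=U$ case: verifying that this posterior genuinely collapses onto $(x_{t-1},r_{t-1}^i)$ and is not contaminated by older history. This requires (a) carefully writing the likelihood of the recommendation event under the random-subset selection, and (b) a Markov screening argument showing that, conditional on $\theta_{t-2}$, the state $\theta_{t-1}$ is independent of $(\theta_{t-3},\theta_{t-4},\dots)$ — so that the earlier flows and any states the agent learned from being on risky in prior rounds add nothing — together with the fact that the assignment randomization is independent across time and of $\theta$. Once these posteriors are shown to be $z_t^i$-measurable, combining them with the deterministic cost function of the second paragraph via the tower property yields the stated identity.
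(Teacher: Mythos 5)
Your proposal is correct and follows essentially the same route as the paper's (much terser) one-paragraph proof: the observed flow $x_{t-1}$ pins down $\theta_{t-2}$ because $a=b=1<c\le d$, agents who were on risky know $\theta_{t-1}$ exactly while safe-road agents form a Bayes posterior from $(x_{t-1},r_{t-1}^i)$ alone, and the Markov property of $\theta$ plus stationarity of $\pi_{c,d}$ screen off all history before $\theta_{t-2}$ --- your tower-property scaffolding and explicit likelihood computation just fill in details the paper leaves implicit. The only item the paper covers that you omit is the (trivial) off-path case, where after any deviation the CP provides full information so all agents have symmetric information and $z_t^i$ remains sufficient; this lies outside the on-path incentive check your argument addresses and does not constitute a gap.
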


Intuitively, the flow $x_{t-1}$ is a summary of all that happened up to $%
\theta _{t-2}$ (this is common information) and the combination of $\beta
_{t-1}^{i}$ and $r_{t-1}^{i}$ adds personalized information about an agent's
knowledge of $\theta _{t-1}$ before time $t$. {\ Note that if road
congestion were unobserved ($U$), under $\pi _{c,d}$ the combination of $%
x_{t-1}$ and $r_{t-1}^{i}$ would be enough to provide a unique belief on the
state of the risky road. That is any agents with the same state $%
z_{t}^{j}=z_{t}^{i}$ have the same belief on the state of the risky road.}

Our first main result is to derive sufficient conditions on $c,d$ so that $%
\pi_{c,d}$ is incentive compatible.

\begin{proposition}[Symmetric equilibrium]
\label{prop:eq}\ Suppose that Assumptions \ref{ass:gamma} and \ref%
{ass:mul_muh} hold. Additionally, assume that $c, d$ are such that

\begin{enumerate}
\item $x_L^{\QTR{up}{SO}}\le c\le d\le x_L^{\QTR{up}{eq}}$

\item $g(c,\mu_L)\le g(2,\mu_L)$

\item the pair $(c,d)$ is such that agents that are on safe and receive a
recommendation of $r_{S}$ after observing flow $d$ on risky will follow the
recommendation, that is, 
\begin{equation} \label{eq:def_xll_simple}
\underbrace{u([d,U,r_{S}])}_{\textup{cost of following}}\leq \underbrace{%
p_{d,S}\mu _{L}(d+1)+(1-p_{d,S})2\mu _{H}}_{\substack{ \textup{%
expected stage cost of deviating} \\ \textup{to risky}}}+\underbrace{\frac{%
\delta }{1-\delta }S_{0}}_{\substack{ \textup{continuation cost} \\ \textup{%
of deviating}}}
\end{equation}%
where $p_{d,S}=\mathbb{P}(\theta _{t-1}=L\mid z_{t}^{i}=[d,U,r_{S}])$.{\ The
constraint \eqref{eq:def_xll_simple} is written implicitly for readability
and an explicit formula is provided in \eqref{eq:def_xll} in the Appendix.}
\end{enumerate}

Then, the recommendation scheme $\pi_{c,d}$ induces the symmetric
equilibrium 
\begin{align}  \label{type1:H}
\xi^i_{ \pi_{c,d}}(z^i_t) = \xi^i_{\pi_{c,d}}([x_{t-1}, \beta^i_{t-1},
r^i_{t-1}]) = 
\begin{cases}
R & \text{if}\ r^i_{t-1} = r_R, \\ 
S & \text{otherwise},%
\end{cases}%
\end{align}
and is thus incentive compatible.
\end{proposition}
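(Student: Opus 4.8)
The plan is to verify incentive compatibility through a one-shot deviation argument that exploits the punishment structure built into Definition~\ref{def:am}: any deviation from the recommendations triggers the full-information continuation. First I would record that under full information every agent obtains exactly $S_0$ per period (each playing the myopic equilibrium, and no agent can do better since safe always guarantees $S_0$), so this continuation is self-enforcing and, discounted from the following stage, contributes a fixed cost $\frac{\delta}{1-\delta}S_0$ after any deviation. Since stage payoffs are bounded and $\delta<1$, the one-shot deviation principle applies, and it suffices to show that at every on-path information set no agent can strictly lower her cost by disobeying a single recommendation and thereafter best-responding, i.e.\ facing full information. Thus each check reduces to comparing the value of obedience against the stage cost of the deviating action plus $\frac{\delta}{1-\delta}S_0$.

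Next I would use Lemma~\ref{lemma:agent_state} to restrict attention to the finitely many on-path states $z_t^i=[x_{t-1},\beta_{t-1}^i,r_{t-1}^i]$. Because $a=b=1$ and $1<c\le d$, the observed flow $x_{t-1}\in\{1,c,d\}$ pins down $\theta_{t-2}$ (namely $x_{t-1}=1$ iff $\theta_{t-2}=H$, while $x_{t-1}\in\{c,d\}$ iff $\theta_{t-2}=L$); the Markov kernel constrained by Assumption~\ref{ass:gamma}, together with the private signal $\beta_{t-1}^i$ and the recommendation $r_{t-1}^i$, then determines the posterior $\mathbb{P}(\theta_{t-1}=L\mid z_t^i)$ and hence $\mathbb{P}(\theta_t=L\mid z_t^i)$. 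I would compute these posteriors explicitly, in particular $p_{d,S}$, carefully tracking the selection rule of Definition~\ref{def:am} (continuing risky agents are always re-recommended $r_R$, and the residual recommendations are sent to a uniformly random subset of the safe agents). In parallel I would set up and solve the stationary recursion for the value of obedience $u(z)$ on these states, the crucial quantity being $u([d,U,r_S])$, which enters condition~3.

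With posteriors and continuation values in hand I would dispatch the two families of constraints. For agents recommended $r_R$ after $\theta_{t-1}=L$, condition~1 ($c\le d\le x_L^{\QTR{up}{eq}}$) keeps the risky road no more costly than the safe road, so the total value of obedience stays at or below $\frac{S_0}{1-\delta}$, exactly the value of deviating to safe (stage cost $S_0$ plus the full-information continuation); hence deviation cannot help. The experimenter, recommended $r_R$ after $\theta_{t-1}=H$, is the genuinely forward-looking case: obedience costs roughly $\mu_H\ge S_0$ now but is compensated by the discounted benefit of riding the cheap risky road once $\theta=L$ is revealed, and here I would invoke the upper bound on $\mu_H$ in Assumption~\ref{ass:mul_muh} (together with $S_0>3L$, $\mu_L<S_0/3$, and conditions~1--2 controlling the favorable continuation) to force the net gain from deviating to safe to be non-positive. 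For agents recommended $r_S$ the temptation is to switch to risky; I would argue that $[d,U,r_S]$ is the binding state because it simultaneously maximizes the posterior on $\theta_t=L$ and minimizes the cost of switching, so condition~3 (inequality~\eqref{eq:def_xll_simple}) handles it directly, while the remaining $r_S$ states (flow $1$ or flow $c$) carry a weakly smaller posterior on the low state and/or a weakly larger deviation cost and are therefore implied.

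I expect the experimenter's forward-looking constraint to be the main obstacle. Unlike the myopic checks, it requires an exact expression for the experimenter's continuation value under $\pi_{c,d}$ as a function of $(\gamma_L,\gamma_H,\delta,c,d)$, and the comparison to the full-information punishment is tight; this is precisely what the narrow window for $\mu_H$ in Assumption~\ref{ass:mul_muh} is calibrated to guarantee. A secondary difficulty is the bookkeeping of the posteriors under the randomized selection rule and the verification that no other $r_S$ state binds before $[d,U,r_S]$; both are conceptually routine but must be carried out state by state.
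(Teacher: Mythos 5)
Your overall architecture is essentially the paper's: one-shot deviations punished by reversion to full information (so any deviation costs its stage cost plus $\frac{\delta}{1-\delta}S_0$), reduction to the finite state space of Lemma \ref{lemma:agent_state}, explicit posteriors and stationary obedience values (the paper's Lemmas \ref{lem:prob} and \ref{lemma:closed}), the experimenter's constraint settled by the upper bound on $\mu_H$ in Assumption \ref{ass:mul_muh}, and condition \eqref{eq:def_xll_simple} invoked at the state $[d,U,r_S]$. The gap is in how you dispatch the remaining $r_S$ states, which you claim are ``implied'' because they carry a weakly smaller posterior on $L$ and/or a weakly larger deviation cost. First, a smaller posterior alone does not imply the constraint: you also need that the net gain from deviating is increasing in the posterior on $L$, i.e. the ordering $(F_2-D_2)\le(F_1-D_1)$ between the follow-minus-deviate gaps conditional on $\theta_{t-1}=H$ and on $\theta_{t-1}=L$. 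The paper proves exactly this in order to transfer \eqref{eq:def_xll_simple} from $[d,U,r_S]$ to $[c,U,r_S]$ (using $p_{c,S}\le p_{d,S}$, Lemma \ref{lemma:punish}, and $\mu_L(d+1)\le 2S_0\le 2\mu_H$); it is a substantive step, not bookkeeping.

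Second, and more seriously, the domination fails outright at $[1,U,r_S]$: conditional on $\theta_{t-1}=L$, a deviator there joins a risky flow of only $c+1\le d+1$ agents, so the deviation stage cost is weakly \emph{smaller} than at $[d,U,r_S]$, not larger; the obedience value also differs, since such an agent may later be recruited into the exploiter flow. This state cannot be inherited from condition \eqref{eq:def_xll_simple}; the paper verifies it directly, bounding the continuation terms via Lemma \ref{lemma:punish} and then reducing the check to the stage inequality $S_0 \le p_{1,S}\,\mu_L(c+1)+(1-p_{1,S})\,2\mu_H$, which is proved as a standalone lemma using $c+1\ge \frac{S_0}{2\mu_L}$ (from your condition 1 via Lemma \ref{lemma:mlxl}), $\mu_H\ge S_0$, and $\gamma_H\le \frac12$. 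A smaller omission: you never check the uninformed agents who receive $r_R$ (states $[x_{t-1},U,r_R]$, the paper's Type 4); the repair is easy --- obeying is a best response conditional on either realization of $\theta_{t-1}$, hence under any posterior --- but it should be stated, since at those information sets the agent cannot condition on the true previous state as your plan implicitly does.
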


The intuition behind the conditions derived in the Proposition \ref{prop:eq}
are given next:

\begin{enumerate}
\item after the road switches to $L$ for the first time the CP sends at
least the social optimum number of agents and he possibly increases the flow
after that, but no more than the myopic equilibrium flow;

\item the flow sent by the CP on the risky road after the road switches to $L
$ for the first time leads to a no worse stage cost than sending just two
agents;

\item $d$ is large enough so that agents that are on safe and infer $%
\theta_{t-2}=L$ follow the recommendation to remain on safe (thus addressing
the issue identified in Example~\ref{example:infty_notic}).
\end{enumerate}

The proof of Proposition \ref{prop:eq} is presented in the Appendix. Here we
provide some intuition. The first fundamental observation that we make is
that, since the experimenter after $\theta _{t-1}=H$ is chosen at random
among all the agents, each agent has the same continuation cost (which we
term $\bar{v}$) after he observes the risky road switching from $L$ to $H$.
Because of this we can divide the infinite horizon into periods (by defining
the beginning of a new period as the time immediately after the risky road
switches from $L$ to $H$) and study incentive compatibility only until the
end of the current period. The division in periods and the number of agents
taking the risky road under the schemes $\pi^{\QTR{up}{SO}}$ and $\pi _{c,d}$
for each period are illustrated in Table \ref{fig:scheme_so}.

\begin{table}[]
\begin{center}
$\hdots L H | \underbrace{H \hdots H L \hdots L H}_{period} | \underbrace{H %
\hdots H L \hdots L H}_{period} | \underbrace{H \hdots H L \hdots L H}%
_{period}| \underbrace{H \hdots H L \hdots L H}_{period} | H \hdots$\\[0.3cm]
\begin{tabular}{|c|c|cccc|c|cccc|}
\hline
State of the risky road & H & H & $\hdots$ & H & L & L & L & $\hdots$ & L & H
\\[0.1cm] \hline
$\pi^{\QTR{up}{SO}}$ (Social optimum) & - & 1 & $\hdots$ & 1 & 1 & $x_{L}^{%
\QTR{up}{SO}}$ & $x_{L}^{\QTR{up}{SO}}$ & $\hdots$ & $x_{L}^{\QTR{up}{SO}}$
& $x_{L}^{\QTR{up}{SO}}$ \\[0.1cm] 
$\pi_{c,d}$ (Proposition \ref{prop:eq}) & - & 1 & $\hdots$ & 1 & 1 & $c$ & $%
d $ & $\hdots$ & $d$ & $d$ \\[0.1cm] \hline
\end{tabular}%
\end{center}
\caption{{Comparison between the flows on the risky road under schemes $%
\protect\pi^{\textup{SO}}$ and $\protect\pi_{c,d}$ for one period.}}
\label{fig:scheme_so}
\end{table}

To prove incentive compatibility of $\pi _{c,d}$ we then need to show that
no agent can improve his cost by a unilateral deviation. To this end, we
divide the agents into four types:

\begin{enumerate}
\item \ul{Agents that took the risky road at time $t-1$ and saw $L$}: under $\pi _{c,d}$ these agents receive a recommendation of risky. Since the
probability that the road changes from $L$ to $H$ in one step is $\gamma
_{L}\leq \frac{1}{2}$, one should expect that following such a
recommendation is incentive compatible. In particular, we show that the
stage cost obtained by following the recommendation is less than $S_{0}$ (as
proven in Lemma \ref{lemma:mlxl} in the Appendix), hence any deviation will
increase both current cost and also continuation cost (because it leads to
lower information than using the risky road and is thus not profitable).

\item \underline{that took the risky road at time $t-1$ and saw $H$}: there
are two cases, either the agent receives a recommendation to take the safe
road (which is intuitively incentive compatible since the probability that
the road changes from $H$ to $L$ in one step is $\gamma _{H}\leq \frac{1}{2}$) or the agent receives a recommendation to take the risky road. The only
case when the latter happens is if the agent is selected to be the next
experimenter. In this case we show that, even though experimentation is
costly in terms of current payoffs, the continuation cost is lower from
experimenting than from deviating (recall that after any deviation the CP
provides full information in all future periods). This makes being the
experimenter incentive compatible.

\item \ul{Agents that took the safe road at time $t-1$ and received a
recommendation to remain on the safe road}: as noted in Example~\ref%
{example:infty_notic} from observing $x_{t-1}=c>1$ or $x_{t-1}=d>1$ these
agents can infer $\theta _{t-2}=L$. Condition \eqref{eq:def_xll_simple}
guarantees it is incentive compatible for an agent that observed $x_{t-1}=d$
to follow a recommendation of using the safe road. We show that this
condition implies incentive compatibility also for the case $x_{t-1}=c$. The
only remaining possibility is when $x_{t-1}=1$, in this case the agent can
infer $\theta _{t-2}=H$ and incentive compatibility is immediate.

\item 
\ul{Agents that took the safe road at time $t-1$ and received a
recommendation to take the risky road}: incentive compatibility in this case
follows with the same argument as in cases 1 and 2. Indeed, the agent has
either been chosen to benefit from using the risky road when the state is $L$
(which is incentive compatible by the discussion for case 1) or he has been
chosen as an experimenter (which is incentive compatible by the discussion
for case 2).
\end{enumerate}

\subsection{Partial information: Optimality}

Motivated by Example \ref{example:infty_notic}, we consider a specific
scheme among those that are incentive compatible according to Proposition %
\ref{prop:eq}. Specifically, for the period immediately after the road
switched from $H$ to $L$ we assume that the CP sends the flow $c=x_L^{%
\QTR{up}{SO}}$ exactly as in the social optimum (intuitively this is
possible, because agents on safe are unaware that the road condition
changed). For all subsequent periods the CP sends the minimum number of
agents to maintain incentive compatibility (i.e. to satisfy %
\eqref{eq:def_xll_simple} for $c=x_L^{\QTR{up}{SO}}$). We denote this flow
by $x_{LL}$.

\begin{definition}
We define the scheme $\pi^*\in \hat{\Pi}$ as follows 
\begin{align}
\pi^*(\beta_{t-2}, \beta_{t-1}) = 
\begin{cases}
1 & \text{if}\ \beta_{t-1} = H, \\ 
x_L^{\QTR{up}{SO}} & \text{if}\ \beta_{t-2} = H, \beta_{t-1} = L, \\ 
x_{LL} & \text{if}\ \beta_{t-2} = L, \beta_{t-1} = L,%
\end{cases}%
\end{align}
where $x_{LL} := \max \{x_L^{\QTR{up}{SO}}, \bar{x}_{LL} \}$ with $\bar{x}%
_{LL}$ being the smallest integer such that $d=\bar{x}_{LL}$ satisfies %
\eqref{eq:def_xll_simple} for $c=x_L^{\QTR{up}{SO}}$. In other words, $%
\pi^*=\pi_{(x_L^{\QTR{up}{SO}},x_{LL})}$.
\end{definition}

\begin{corollary}
\label{corollary:star_ic} Suppose that Assumptions \ref{ass:gamma} and \ref%
{ass:mul_muh} hold. The scheme $\pi^*\in \hat{\Pi}$ is incentive compatible
and achieves strictly lower social cost than full information.
\end{corollary}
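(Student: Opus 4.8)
The plan is to establish the two assertions separately: incentive compatibility follows almost directly from Proposition \ref{prop:eq}, while the strict cost improvement requires a direct comparison of discounted stage costs. For incentive compatibility, since $\pi^*=\pi_{(x_L^{\QTR{up}{SO}},x_{LL})}$, it suffices to check that the pair $(c,d)=(x_L^{\QTR{up}{SO}},x_{LL})$ satisfies the three hypotheses of Proposition \ref{prop:eq}. Condition~2 is immediate: by \eqref{xbso}, $x_L^{\QTR{up}{SO}}$ minimizes $x\mapsto\EE[g(x,\theta_t)\mid\theta_{t-1}=L]=g(x,\mu_L)$, so in particular $g(x_L^{\QTR{up}{SO}},\mu_L)\le g(2,\mu_L)$. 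Condition~3 holds by construction of $x_{LL}$: here $\bar{x}_{LL}$ is the smallest integer $d$ for which \eqref{eq:def_xll_simple} holds at $c=x_L^{\QTR{up}{SO}}$, and because the right-hand side of \eqref{eq:def_xll_simple} increases in $d$ (a larger recommended flow makes the contemplated deviation to risky more congested, hence less attractive), the constraint persists at $d=x_{LL}=\max\{x_L^{\QTR{up}{SO}},\bar{x}_{LL}\}\ge\bar{x}_{LL}$. The one verification needing an argument is the upper bound in Condition~1, namely $x_{LL}\le x_L^{\QTR{up}{eq}}$: I would observe that at $d=x_L^{\QTR{up}{eq}}$ a safe agent who switches to risky faces, in expectation, a flow that makes the switch at best break-even, so \eqref{eq:def_xll_simple} holds there and thus $\bar{x}_{LL}\le x_L^{\QTR{up}{eq}}$; combined with the standard fact $x_L^{\QTR{up}{SO}}\le x_L^{\QTR{up}{eq}}$ this yields $x_{LL}\le x_L^{\QTR{up}{eq}}$. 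Proposition \ref{prop:eq} then gives incentive compatibility.

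For the comparison with full information, the key preliminary observation is that committing to full information destroys the incentive to experiment (every agent free-rides), so all agents play safe in every stage and $V^{\text{full}}$ equals the all-safe cost, with stage cost $g(0)=S_0N$ at each stage. I would then write $V^{\text{full}}-V^{\pi^*}$ as a discounted sum of per-stage expected differences, conditioning on $(\theta_{t-2},\theta_{t-1})$ and using $\EE[g(x,\theta_t)\mid\theta_{t-1}=L]=g(x,\mu_L)$ and $\EE[g(x,\theta_t)\mid\theta_{t-1}=H]=g(x,\mu_H)$. The per-stage difference $g(0)-g(\pi^*(\cdot),\cdot)$ is then: strictly positive in the $[H,L]$ states, where $\pi^*$ sends the social-optimum flow $x_L^{\QTR{up}{SO}}$ and $g(x_L^{\QTR{up}{SO}},\mu_L)<g(0)$ because $x_L^{\QTR{up}{SO}}$ lies strictly below the break-even level $S_0/\mu_L$; nonnegative in the $[L,L]$ states, since $x_{LL}\le x_L^{\QTR{up}{eq}}$ keeps the risky flow at or below break-even; and nonpositive in the $H$ states, where the single experimenter incurs the extra expected cost $\mu_H-S_0\ge 0$.

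The crux, and the step I expect to be the main obstacle, is showing that the discounted exploitation savings strictly dominate the discounted experimentation cost. This is precisely where the upper bound $\mu_H\le S_0+\delta\gamma_H(S_0/3-\mu_L)$ of Assumption \ref{ass:mul_muh} enters: it caps the per-stage experimentation cost $\mu_H-S_0$ so that, over a renewal cycle consisting of an $H$-phase (of expected length $1/\gamma_H$) terminated by a switch to $L$, the accumulated experimentation cost is outweighed by the savings realized the instant the road switches and $\pi^*$ exploits with $x_L^{\QTR{up}{SO}}$. I would formalize this with a cycle/renewal decomposition of the stationary Markov reward, mirroring the reasoning that already justifies experimentation after $H$ in Proposition \ref{prop:so}; since $\pi^*$ incurs the same experimentation cost and the same first-$L$-step savings as $\pi^{\QTR{up}{SO}}$, and only concedes (weakly) in the later $[L,L]$ stages while still staying below break-even there, the same bound delivers $V^{\pi^*}<V^{\text{full}}$, which is the desired strict inequality.
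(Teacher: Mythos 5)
Your overall strategy coincides with the paper's on both halves. For incentive compatibility you check the hypotheses of Proposition \ref{prop:eq} for $(c,d)=(x_L^{\textup{SO}},x_{LL})$, and your argument that $\bar{x}_{LL}\le x_L^{\textup{eq}}$ (show that $d=x_L^{\textup{eq}}$ already satisfies \eqref{eq:def_xll_simple} because deviating to risky is at best break-even there) is exactly the paper's Lemma \ref{lemma:xll}. For the cost comparison, your renewal decomposition is an unrolled version of the paper's closed-form computation of $\bar v$ and of the inequality $\bar v<S_0/(1-\delta)$ (part 1 of Lemma \ref{lemma:punish}): the operative sufficient inequality, pairing each $H$-stage loss $\mu_H-S_0$ with the discounted, $\gamma_H$-weighted first-$L$-stage savings and invoking the upper bound on $\mu_H$ in Assumption \ref{ass:mul_muh}, is the same $\mu_H-S_0<\delta\gamma_H\bigl(S_0N-g(x_L^{\textup{SO}},\mu_L)\bigr)$ in both treatments.

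Three loose ends are worth fixing. First, your premise that full information ``destroys experimentation, so all agents play safe in every stage'' is neither what the paper uses nor correct in general: with no one on the risky road the common belief drifts toward the stationary distribution, and for admissible parameters (e.g.\ $\gamma_L$ small) the expected risky cost under that belief falls below $S_0$, at which point agents myopically take the risky road, so all-safe-forever is not an equilibrium. What is true, and what the paper uses (Remark \ref{rmrk:full_info} and the opening of Section \ref{section:infty}), is that under full information congestion adjusts until each agent's expected cost is exactly $S_0$ per stage; this yields the same benchmark $NS_0/(1-\delta)$, so nothing downstream in your comparison breaks, but the justification should be replaced. Second, your monotonicity claim for Condition 3 --- that the right-hand side of \eqref{eq:def_xll_simple} increases in $d$, hence the constraint persists at $x_{LL}\ge\bar{x}_{LL}$ --- is not valid as stated, because the left-hand side $u^*([d,U,r_S])$ also varies with $d$ (through $\bar v$ and the continuation values); this matters only in the case $\bar{x}_{LL}<x_L^{\textup{SO}}$, which the paper also passes over silently, but you should not present it as a proved step. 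Third, your break-even argument at $d=x_L^{\textup{eq}}$ compares stage costs, whereas \eqref{eq:def_xll_simple} compares total (stage plus continuation) costs; closing it requires the bound $u^*([d,U,r_S])\le S_0/(1-\delta)$ on the cost of following, which is precisely the role Lemma \ref{lemma:punish} plays inside the paper's proof of Lemma \ref{lemma:xll}.
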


This corollary follows immediately from Proposition \ref{prop:eq} upon
noting that the pair $c=x_{L}^{\QTR{up}{SO}}$ and $d=x_{LL}$ satisfy the
assumptions of that proposition (we prove in Lemma \ref{lemma:xll} in the
Appendix that $\bar{x}_{LL}\leq x_{L}^{\QTR{up}{eq}}$). The fact that the
social cost is strictly less than full information is proven in point 1 of
Lemma \ref{lemma:punish} (in the Appendix).

We next derive sufficient conditions for the scheme $\pi^*$ to be not only
incentive compatible, but also optimal. We consider two different regimes
depending on the discount factor $\delta$ used by the agents to weight
future travel times.

\subsubsection{Large $\protect\delta$}

We show that as $\delta \rightarrow 1$, $x_{LL}\rightarrow x_{L}^{\QTR{up}{SO%
}}$. In other words, the cost under $\pi ^{\ast }$ converges to the cost of
the social optimum as $\delta \rightarrow 1$. Define the social cost
starting from belief $\beta =H$ under the social optimum and $\pi ^{\ast }$
as $V_{H}^{\QTR{up}{SO}}$ and $V_{H}^{\pi ^{\ast }}$, respectively.

\begin{proposition}
\label{prop:ratio} Suppose that Assumptions \ref{ass:gamma} and \ref%
{ass:mul_muh} hold and assume $\gamma_{H},\gamma _{L}>0$. Then the cost
under $\pi ^{\ast }$ approaches the social optimum as $\delta \rightarrow 1$%
. Formally, 
\begin{equation*}
\lim_{\delta \rightarrow 1}\frac{V_{H}^{\pi ^{\ast }}}{V_{H}^{\QTR{up}{SO}}}%
=1.
\end{equation*}
\end{proposition}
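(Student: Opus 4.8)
The proposition asserts that as the discount factor $\delta \to 1$, the ratio of the social cost under the scheme $\pi^*$ to the social cost under the unconstrained social optimum $\pi^{\QTR{up}{SO}}$ tends to $1$, starting from belief $\beta = H$. The key structural fact is that both schemes send one experimenter when $\theta_{t-1} = H$ and $x_L^{\QTR{up}{SO}}$ agents in the first $L$-period after a switch; they differ only in the flow sent during sustained $L$ runs ($x_{LL}$ versus $x_L^{\QTR{up}{SO}}$). So the whole question reduces to showing $x_{LL} \to x_L^{\QTR{up}{SO}}$ as $\delta \to 1$, and then propagating this into the cost comparison.

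**The plan.** First I would recall the definition $x_{LL} = \max\{x_L^{\QTR{up}{SO}}, \bar{x}_{LL}\}$, where $\bar{x}_{LL}$ is the smallest integer $d$ satisfying the incentive constraint \eqref{eq:def_xll_simple} with $c = x_L^{\QTR{up}{SO}}$. Since $x_{LL} \geq x_L^{\QTR{up}{SO}}$ always, it suffices to show that for $\delta$ close enough to $1$, the choice $d = x_L^{\QTR{up}{SO}}$ already satisfies \eqref{eq:def_xll_simple}, forcing $\bar{x}_{LL} \leq x_L^{\QTR{up}{SO}}$ and hence $x_{LL} = x_L^{\QTR{up}{SO}}$. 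The structure of \eqref{eq:def_xll_simple} makes this plausible: the left side $u([d,U,r_S])$ is the cost of following the safe recommendation, while the right side contains the continuation term $\tfrac{\delta}{1-\delta} S_0$, which is the punishment of full information. As $\delta \to 1$ this punishment term blows up, so the deviation-to-risky side dominates and the constraint becomes slack. I would make this rigorous by bounding the per-stage cost of following the recommendation uniformly (it is bounded above by some constant independent of $\delta$, since flows are in $\{0,\dots,N\}$ and costs $g(x,\theta)$ are bounded), so that the left side grows at rate $O(1/(1-\delta))$ but with a coefficient strictly smaller than $S_0$, which is the growth coefficient on the right. The gap in coefficients is what guarantees the inequality for large $\delta$; here I would use Assumption \ref{ass:mul_muh}, specifically that following the recommendation on the risky road when $\theta = L$ yields stage cost below $S_0$ (as established in Lemma \ref{lemma:mlxl}), so the continuation cost of compliance is strictly below $\tfrac{\delta}{1-\delta} S_0$.

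**Propagating to the cost ratio.** Once $x_{LL} = x_L^{\QTR{up}{SO}}$ for $\delta$ near $1$, the two schemes $\pi^*$ and $\pi^{\QTR{up}{SO}}$ prescribe \emph{identical} flows on every sample path (both send $1$ after $H$, and both send $x_L^{\QTR{up}{SO}}$ throughout every $L$-run). Therefore $V_H^{\pi^*} = V_H^{\QTR{up}{SO}}$ exactly for all $\delta$ sufficiently close to $1$, which immediately gives $\lim_{\delta \to 1} V_H^{\pi^*}/V_H^{\QTR{up}{SO}} = 1$. I would note that the denominator $V_H^{\QTR{up}{SO}}$ is bounded away from zero (it is at least the discounted cost of a single always-experimenting agent, which is positive), so the ratio is well-defined. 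The hypothesis $\gamma_H, \gamma_L > 0$ ensures the Markov chain genuinely mixes so that both $H$- and $L$-states recur with positive frequency and the cost expressions are the nondegenerate infinite sums rather than collapsing.

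**The main obstacle.** The delicate step is establishing that the left-hand side of \eqref{eq:def_xll_simple} grows with a coefficient strictly below $S_0$ as $\delta \to 1$. The quantity $u([d,U,r_S])$ is itself a discounted infinite-horizon value under the scheme $\pi^*$, so it depends on $\delta$ in a coupled way; I cannot simply treat it as a fixed constant times $1/(1-\delta)$. The clean way to handle this is to observe that an agent following $r_S$ experiences, along the equilibrium path, a long-run average per-stage cost equal to the stationary average of the flows he faces, and under Assumption \ref{ass:mul_muh} this average is strictly less than $S_0$ because experimentation on the favorable road reduces his cost relative to always paying $S_0$ on safe. Formally I would show $\lim_{\delta \to 1}(1-\delta)\,u([d,U,r_S]) < S_0$ via the Abelian/Tauberian relation between discounted and long-run-average costs, and compare against $\lim_{\delta \to 1}(1-\delta)\cdot\tfrac{\delta}{1-\delta}S_0 = S_0$. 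The strict inequality in the limit, together with the finite extra stage terms on the right of \eqref{eq:def_xll_simple}, closes the argument for all large enough $\delta$. This Abelian-limit step is where the real work lies; everything downstream is bookkeeping.
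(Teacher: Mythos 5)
Your skeleton coincides with the paper's own proof: reduce the proposition to showing that for $\delta$ close to $1$ the choice $c=d=x_L^{\textup{SO}}$ already satisfies \eqref{eq:def_xll_simple}, so that $x_{LL}=x_L^{\textup{SO}}$, the schemes $\pi^*$ and $\pi^{\textup{SO}}$ prescribe identical flows, and the ratio equals $1$ exactly near $\delta=1$; then carry out the comparison by multiplying the constraint by $(1-\delta)$ and comparing limiting coefficients. The gap is that the one step carrying all of the mathematical content --- the \emph{strict} inequality $\lim_{\delta\to1}(1-\delta)\,u^*([d,U,r_S])<S_0$ --- is asserted rather than proved, and your first justification for it is invalid. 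A compliant agent in state $[d,U,r_S]$ pays exactly $S_0$ in every period until the road resets to $H$, and pays $\mu_H\geq S_0$ whenever he is later drawn as the experimenter; so Lemma \ref{lemma:mlxl} (stage cost below $S_0$ on risky when $L$) can never produce a coefficient strictly below $S_0$ on the compliance side. The strict gap comes entirely from the option value of being rotated onto the risky road in future $L$-runs, and whether that gain outweighs the experimentation losses is a quantitative balance, not a soft "compliance beats always-safe" argument.

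The missing verification is precisely what the paper's proof consists of: using \eqref{v_bar} and Lemma \ref{lemma:closed}, the normalized limit equals the stationary per-agent average cost, and the strict inequality to check is
\begin{equation*}
\frac{1}{N}\left[\frac{\gamma_L}{\gamma_H+\gamma_L}\,g(1,\mu_H)+\frac{\gamma_H}{\gamma_H+\gamma_L}\,g\bigl(x_L^{\textup{SO}},\mu_L\bigr)\right]<S_0
\quad\Longleftrightarrow\quad
\mu_H<S_0+\frac{\gamma_H}{\gamma_L}\,x_L^{\textup{SO}}\bigl(S_0-\mu_L x_L^{\textup{SO}}\bigr),
\end{equation*}
which must then be reconciled with the upper bound $\mu_H\le S_0+\delta\gamma_H\bigl(\tfrac{S_0}{3}-\mu_L\bigr)$ of Assumption \ref{ass:mul_muh} via $1/\gamma_L>1$, $x_L^{\textup{SO}}\ge 2$ and $x_L^{\textup{SO}}\le \tfrac{S_0}{2\mu_L}+\tfrac12$, giving $x_L^{\textup{SO}}(S_0-\mu_L x_L^{\textup{SO}})\ge S_0-\mu_L\ge \tfrac{S_0}{3}-\mu_L$. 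None of this appears in your proposal, and it cannot be delegated to a generic Abelian/Tauberian step: the claim is simply false without it. Relatedly, your reading of $\gamma_L>0$ as mere "mixing" misses its essential role. At $\gamma_L=0$ one has $p_{d,S}=1$ and the compliance value is exactly $S_0/(1-\delta)$, so \eqref{eq:def_xll_simple} degenerates to $S_0\le\mu_L(d+1)$ independently of $\delta$; then $x_{LL}$ stays near $x_L^{\textup{eq}}$, bounded away from $x_L^{\textup{SO}}$, and the proposition's conclusion fails (this is exactly the phenomenon of Example \ref{example:infty_notic}). So the stationary-average computation is not bookkeeping --- it is the theorem.
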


This full optimality result obtains because for large $\delta $ the policy $%
\pi _{(c,d)}=\pi _{(x_{L}^{\QTR{up}{SO}},x_{L}^{\QTR{up}{SO}})}$ satisfies %
\eqref{eq:def_xll_simple} hence $\pi ^{\ast }$ coincides with $\pi ^{%
\QTR{up}{SO}}$. This result is to be expected. In fact given any time $t$
let $t^{H}$ be the first time the road switches to $\theta =H$ after $t$
(this event happens in finite time since $\gamma _{L}>0$). Then under any
policy $\pi _{(c,d)}$, the cost of any agent is 
\begin{equation*}
\sum_{\tau =t+1}^{t_{H}}\delta ^{\tau -t}\text{cost}_{\tau }+\delta ^{\tau
-t_{H}}\frac{V_{H}^{\pi _{(c,d)}}}{N}.
\end{equation*}%
For $\delta \rightarrow 1$ the first term is negligible; hence any scheme
for which $V_{H}^{\pi }<S_{0}N/(1-\delta )$ (i.e. the cost under full
information) is incentive compatible. Clearly the social optimum meets this
condition and hence it must be incentive compatible.

\subsubsection{Small $\protect\delta$}

Before stating our main result we show that for $\delta$ small, for any
scheme $\pi\in\hat\Pi$ to be incentive compatible it must be $a=b=1$ (thus
justifying our interest in the class of schemes given in Definition \ref{ab1}%
).

\begin{lemma}
\label{lem:small} Suppose that $\delta\le \frac12$ and that Assumptions \ref%
{ass:gamma} and \ref{ass:mul_muh} hold. Then if a scheme $\pi\in\hat\Pi$ is
incentive compatible it must be $a=b=1$.
\end{lemma}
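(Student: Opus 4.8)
The plan is to isolate the experimenter who is recommended the risky road in a period beginning with $\theta_{t-1}=H$, which is exactly the situation governed by the parameters $a$ (on the equilibrium‑path state $[\beta_{t-2},\beta_{t-1}]=[L,H]$) and $b$ (on $[H,H]$). Since both states have $\beta_{t-1}=H$, the relevant one‑step conditional mean of the congestion parameter is $\mu_H=\mathbb{E}[\theta_t\mid\theta_{t-1}=H]$ in either case. Suppose, for contradiction, that some $\pi\in\hat\Pi$ is incentive compatible with $a\ge 2$. Then all $a$ recommended agents follow, the flow on the risky road at time $t$ equals $a$, and each such experimenter incurs expected stage cost $\mu_H\,a$; by Assumption~\ref{ass:mul_muh} we have $\mu_H\ge S_0$, so this stage cost is at least $S_0 a\ge 2S_0$.

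Next I would compare the follow‑value to the single natural deviation available to an agent recommended risky, namely switching to the safe road at time $t$. By Definition~\ref{def:am} such a unilateral deviation is detected (the CP observes flow $a-1$ instead of $a$) and is punished by full information forever after, under which each agent's per‑round cost is $S_0$. Hence deviating yields total discounted cost $S_0+\tfrac{\delta}{1-\delta}S_0=\tfrac{S_0}{1-\delta}$, and incentive compatibility requires the experimenter's equilibrium continuation cost $V_t$ to satisfy $V_t\le \tfrac{S_0}{1-\delta}$.

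The final step is to lower bound $V_t$ strongly enough that this inequality forces $a=1$. Every per‑stage cost an agent can face on the equilibrium path is at least $L$: the safe road costs $S_0>L$, while the risky road always carries flow at least one (schemes in $\hat\Pi$ always experiment) and $\theta\ge L$, so its cost is at least $L$. Therefore $V_t\ge \mu_H a+\tfrac{\delta}{1-\delta}L$, and combining this with the incentive constraint gives $\mu_H a\le \tfrac{S_0-\delta L}{1-\delta}$. Because $\tfrac{S_0-\delta L}{1-\delta}$ is increasing in $\delta$ (its derivative is $\tfrac{S_0-L}{(1-\delta)^2}>0$ since $S_0>3L$) and $\delta\le\tfrac12$, its value is at most $2S_0-L$; together with $\mu_H\ge S_0$ this yields $S_0 a\le 2S_0-L$, which is incompatible with $a\ge2$ since $L>0$. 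Hence $a=1$, and the identical argument applied to the state $[H,H]$ gives $b=1$.

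I expect the delicate point to be securing a \emph{strict} contradiction at the boundary $\delta=\tfrac12$, where the crude bound $V_t\ge \mu_H a\ge 2S_0$ only ties the deviation value $\tfrac{S_0}{1-\delta}=2S_0$ and so leaves open the possibility of indifference; the continuation term $\tfrac{\delta}{1-\delta}L$, obtained from the uniform lower bound $L$ on stage costs, is precisely what upgrades the weak inequality to a strict one. The remaining points needing care are verifying that a switch to the safe road is indeed detected and punished by full information—so the deviation value is exactly $\tfrac{S_0}{1-\delta}$, consistent with the footnote to Definition~\ref{def:am}—and confirming that the conditional mean entering the experimenter's stage cost is $\mu_H$ in both the $[L,H]$ and $[H,H]$ cases, which is immediate from the Markov property since both end in $\theta_{t-1}=H$.
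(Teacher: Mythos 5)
Your proof has the right inequality chain but rests on a claim that is false in general, and the way you justify it misses the central subtlety of the model. Incentive compatibility must be checked at each \emph{agent's} information set, under the agent's own posterior — not under the CP's belief. Your assertion that "each such experimenter incurs expected stage cost $\mu_H a$," justified as "immediate from the Markov property since both end in $\theta_{t-1}=H$," conflates the objective conditional law given $\theta_{t-1}=H$ (equivalently, the CP's information, since $\beta_{t-1}$ is the CP's belief) with the recommended agent's subjective belief. An agent who was on the safe road at $t-1$ does not in general know $\theta_{t-1}$: for instance, an agent who observed flow $1$ at time $t-1$ and then receives $r_R$ cannot tell whether he is part of the exploiting flow $c$ (if $\theta_{t-1}=L$) or a newly selected experimenter (if $\theta_{t-1}=H$); his posterior is a mixture, his subjective expected stage cost can be well below $\mu_H a$, and following can be subjectively attractive. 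So your lower bound $V_t \geq \mu_H a + \frac{\delta}{1-\delta}L$ simply does not hold at every information set at which $r_R$ is received, and the contradiction does not follow as written.

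The missing idea — which is exactly how the paper opens its proof — is that in any scheme in $\hat\Pi$ the experimenter(s) after an observation of $H$ are drawn uniformly at random from \emph{all} $N$ agents (Definition \ref{def:am}). Hence with positive probability the recommendation $r_R$ lands on an agent who was himself on the risky road at $t-1$ and personally observed $H$. Incentive compatibility must hold at that particular information set, and there your computation goes through verbatim: following costs at least $a\mu_H + (\text{positive continuation}) > 2\mu_H \geq 2S_0$ by Assumption \ref{ass:mul_muh} when $a\ge 2$, while deviating to safe costs $\frac{S_0}{1-\delta} \leq 2S_0$ for $\delta \leq \frac{1}{2}$, forcing $a=1$ (and identically $b=1$). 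Your remaining ingredients are sound and parallel the paper's: the deviation value $\frac{S_0}{1-\delta}$ via the full-information punishment is correct, and your tie-breaking at $\delta=\frac{1}{2}$ through the explicit bound $\frac{\delta}{1-\delta}L$ on continuation costs plays the same role as the paper's observation that the continuation costs $\hat v_L,\hat v_H$ are strictly positive.
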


{The intuition for this result is straightforward. Since, after seeing $H$,
the CP selects the next experimenter randomly, in any scheme $\pi $ there is
a positive probability that the selected experimenter knows that the risky
road was $H$ in the previous period. If either $a$ or $b$ are greater than
one, then the expected cost of the experimenter would be greater than $2\mu
_{H}$ (which is the expected stage cost). On the other hand, if the
experimenter deviates, the CP provides full information and can guarantee an
expected cost of {$\frac{S_{0}}{1-\delta }$ from then on. }Under Assumption %
\ref{ass:mul_muh}, and if $\delta \leq \frac{1}{2}$, 
\begin{equation*}
\frac{S_{0}}{1-\delta }\leq 2S_{0}\leq 2\mu _{H}.
\end{equation*}%
Hence having more than one experimenter cannot be incentive compatible.%
\footnote{%
We note that instead if $\delta >\frac{1}{2}$, a scheme with $a$ or $b$
greater than one, might be incentive compatible. Although, sending more than
one agent to experiment always gives a higher stage cost for the CP, it is
unclear under higher $\delta $ whether it may benefit the CP to send a
higher flow after $H$ to drive down $d$ either through raising the cost of
deviation in this setting or through obfuscation of information by making
the flow the same after $H$ and $L$. Overall, when $\delta >\frac{1}{2}$ we
are unable to rule out that a scheme sending more than one agent to
experiment could be incentive compatible and give a lower overall cost.}
Having fixed $a,b$ we now turn to the optimal choice of $c,d$. }

\begin{proposition}
\label{prop:delta_small} Suppose that Assumptions \ref{ass:gamma} and \ref%
{ass:mul_muh} hold and $N \geq 5$. Then

\begin{enumerate}
\item for $\delta$ sufficiently small, $\pi^*$ achieves the minimum social
cost among all the incentive compatible schemes belonging to $\hat{\Pi}$;

\item for $\delta\le \frac12$, the scheme that minimizes the social cost
among all the incentive compatible schemes belonging to $\hat{\Pi}$ is
either $\pi^*$ or $\tilde{\pi}^*:=\pi_{x_{L}^{\QTR{up}{SO}}+1,x_{LL}-1}$.
\end{enumerate}
\end{proposition}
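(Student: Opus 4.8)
The plan is to first shrink the search space and then exploit convexity of the stage cost. Since both parts assume $\delta\le\tfrac12$, Lemma \ref{lem:small} applies and forces $a=b=1$ in any incentive compatible scheme in $\hat\Pi$, so it suffices to minimize the social cost over the two remaining integer parameters $(c,d)$. Starting from $\theta_0=H$, I would decompose the sample path into the periods of Table \ref{fig:scheme_so} and sum the discounted expected stage costs. Because the CP always experiments, the state $\theta_t$ evolves by the exogenous Markov chain independently of $(c,d)$; hence the discounted frequency of ``first-$L$'' stages (flow $c$) and of ``subsequent-$L$'' stages (flow $d$) are constants $W_c,W_d>0$ that do not depend on $c$ or $d$. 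Using $\mathbb{E}[\theta_t\mid\theta_{t-1}=L]=\mu_L$, the expected stage cost in each such stage is $g(c,\mu_L)$ and $g(d,\mu_L)$, so the objective takes the separable form
\begin{equation*}
V_{1,1,c,d}=\text{const}+W_c\,g(c,\mu_L)+W_d\,g(d,\mu_L),
\end{equation*}
where $g(\cdot,\mu_L)$ is a strictly convex parabola (constant second difference $2\mu_L$) minimized at the integer $x_L^{\QTR{up}{SO}}$.

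Next I would pin down the feasible region through \emph{necessary} incentive constraints; Proposition \ref{prop:eq} is used only to certify that the two candidate schemes are themselves IC. The binding constraint is the one from Example \ref{example:infty_notic}: a safe agent who observes flow $d$ infers $\theta_{t-2}=L$ and must not wish to deviate, which is exactly \eqref{eq:def_xll_simple} and yields a lower bound $d\ge d_{\min}(c)$. By convexity of $g(\cdot,\mu_L)$ it is without loss to take $c\ge x_L^{\QTR{up}{SO}}$ and to sit on the frontier $d=d_{\min}(c)$: lowering $c$ below $x_L^{\QTR{up}{SO}}$ both raises the $c$-term and tightens the constraint on $d$, while for fixed $c$ the objective increases with $d$ above $x_L^{\QTR{up}{SO}}$, so the best feasible $d$ is the smallest one, $d_{\min}(c)$. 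The structural heart of the argument, to be read off from the explicit form \eqref{eq:def_xll} of the constraint, is that raising $c$ by one unit slackens the safe-agent constraint by just enough to lower $d_{\min}$ by at most one, i.e. $d_{\min}(c+1)\ge d_{\min}(c)-1$ with $d_{\min}(x_L^{\QTR{up}{SO}})=x_{LL}$.

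Given this one-for-one trade-off I would finish by a discrete-convexity/exchange argument along the frontier. Writing $f(c):=W_c\,g(c,\mu_L)+W_d\,g(d_{\min}(c),\mu_L)$, the constant second difference of $g(\cdot,\mu_L)$ together with the slope-$(-1)$ behaviour of $d_{\min}$ makes $f(c+1)-f(c)$ nondecreasing in $c$; hence $f$ is discretely convex and its minimizer over $c\ge x_L^{\QTR{up}{SO}}$ is attained at one or two consecutive integers. For part 2, I would then show, using $\delta\le\tfrac12$ and $N\ge5$ to bound $W_d/W_c$ and the increments $\mu_L(2c+1)-S_0$, that $f(c+1)-f(c)\ge0$ already for every $c\ge x_L^{\QTR{up}{SO}}+1$, so the minimum sits at $c=x_L^{\QTR{up}{SO}}$ or $c=x_L^{\QTR{up}{SO}}+1$; these are precisely $\pi^*=\pi_{(x_L^{\QTR{up}{SO}},x_{LL})}$ and $\tilde\pi^*=\pi_{(x_L^{\QTR{up}{SO}}+1,x_{LL}-1)}$ (whose IC is checked via Proposition \ref{prop:eq}). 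For part 1, I would use that $W_d/W_c=O(\delta)\to0$, so for $\delta$ small the positive term $W_c\bigl(g(x_L^{\QTR{up}{SO}}+1,\mu_L)-g(x_L^{\QTR{up}{SO}},\mu_L)\bigr)$ dominates the gain from lowering $d$, giving $f(c+1)-f(c)>0$ already at $c=x_L^{\QTR{up}{SO}}$ and singling out $\pi^*$.

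I expect the main obstacle to be the structural claim of the second paragraph: translating the implicit constraint \eqref{eq:def_xll_simple}, whose left side is the continuation value $u([d,U,r_S])$ of following the scheme and whose posterior $p_{d,S}$ both depend on $(c,d)$, into the clean statement that $d_{\min}$ drops by exactly one as $c$ rises by one. In particular one must verify that $\tilde\pi^*$ is actually feasible (that $d=x_{LL}-1$ satisfies the constraint at $c=x_L^{\QTR{up}{SO}}+1$) and that no flatness of $d_{\min}$ breaks the discrete convexity of $f$. This is where the explicit computation, the bounds $\gamma_L,\gamma_H\le\tfrac12$, and $N\ge5$ do the real work.
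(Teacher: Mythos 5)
Your skeleton matches the paper's: Lemma \ref{lem:small} pins down $a=b=1$; the social cost is separable, $V_{1,1,c,d}=\tilde{\tau}\left[g(1,\mu_H)+\delta\gamma_H g(c,\mu_L)+\delta^2\tfrac{1-\gamma_L}{1-\delta(1-\gamma_L)}\gamma_H g(d,\mu_L)\right]$, so your weights $W_c,W_d$ exist with $W_d/W_c=\tfrac{\delta(1-\gamma_L)}{1-\delta(1-\gamma_L)}$; the surviving candidates are $(x_L^{\textup{SO}},x_{LL})$ and $(x_L^{\textup{SO}}+1,x_{LL}-1)$; and your small-$\delta$ argument for part 1 is exactly the paper's. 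The genuine gap is in how you exclude $c\ge x_L^{\textup{SO}}+2$. You propose to sit on the frontier $d=d_{\min}(c)$ and prove the frontier cost nondecreasing by trading the increment $W_c[\mu_L(2c+1)-S_0]$ against the worst-case saving $W_d[\mu_L(2d_{\min}(c)-1)-S_0]$ allowed by $d_{\min}(c+1)\ge d_{\min}(c)-1$. This trade-off cannot be closed from the stated assumptions: near $c=x_L^{\textup{SO}}$ the $c$-increment is of order $\mu_L$ (between $2\mu_L$ and $4\mu_L$), while $d_{\min}(c)$ may be as large as $x_L^{\textup{eq}}\approx S_0/\mu_L$ (Lemma \ref{lemma:xll} only gives $x_{LL}\le x_L^{\textup{eq}}$), so the $d$-side cost increment can be of order $S_0-\mu_L$; since $W_d/W_c=\tfrac{1-\gamma_L}{1+\gamma_L}$ at $\delta=\tfrac12$ can be arbitrarily close to $1$, and $\mu_L$ may be any value in $[L,S_0/3)$, the required inequality fails whenever $\mu_L$ is small relative to $S_0$. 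The root problem is that cost increments in $d$ grow linearly with $d$, so a ``drop $d$ by one'' option is not dominated by weights alone.

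The paper's mechanism avoids this entirely, and it is the ingredient you are missing: it rewrites the IC constraint \eqref{eq:def_xll} in the separable form $f(c)\le g(d)$ (see \eqref{IC_cd2}), where $f$ is a convex quadratic in $c$ whose continuous minimizer lies in $\left[\tfrac{S_0}{2\mu_L},\tfrac{S_0}{2\mu_L}+\tfrac12\right]$. Consequently, for every integer $c\notin\{x_L^{\textup{SO}},x_L^{\textup{SO}}+1\}$ one has $f(c)\ge f(x_L^{\textup{SO}})$, i.e.\ the constraint is \emph{tighter} than at $x_L^{\textup{SO}}$, which forces $d_{\min}(c)\ge x_{LL}$; monotonicity of $g(\cdot,\mu_L)$ above $x_L^{\textup{SO}}$ then gives domination by $\pi^*$ with no increment comparison at all. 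Note that your premise ``raising $c$ slackens the constraint'' is backwards for $c\ge x_L^{\textup{SO}}+1$: by convexity of $f$, $d_{\min}$ is nondecreasing there, and slackening can occur only at the single step $x_L^{\textup{SO}}\to x_L^{\textup{SO}}+1$. It is only for that one step that the paper proves your one-for-one bound ($d_{\min}(x_L^{\textup{SO}}+1)\ge x_{LL}-1$), by comparing \emph{constraint} increments (whose $d$-side increment is essentially $p_{d,S}\mu_L$, independent of $d$) rather than cost increments; this is where $N\ge5$, $\delta\le\tfrac12$ and $\gamma_L,\gamma_H\le\tfrac12$ do the work. A minor point: you need not verify that $\tilde{\pi}^*$ is feasible --- statement 2 is an either/or, and the paper leaves feasibility of $\tilde{\pi}^*$ parameter-dependent.
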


{To understand the previous result recall that the social optimum choice
would be $c=d=x_{L}^{\QTR{up}{SO}}$. Unfortunately, in most cases this
choice is not incentive compatible because of constraint %
\eqref{eq:def_xll_simple} (guaranteeing that agents that are on safe follow
a recommendation of safe). Our first step in the proof of Proposition \ref%
{prop:delta_small} is to show that the incentive compatibility constraint %
\eqref{eq:def_xll_simple} can be rewritten as $f(c)\leq g(d)$ where $f(c)$
is convex in $c$ and is minimized at a value between $x_{L}^{\QTR{up}{SO}}$
and $x_{L}^{\QTR{up}{SO}}+1$. By the integer nature of our problem, this
immediately implies that at optimality $c$ should take one of this two
values (for having a larger value of $c$ would make the constraint %
\eqref{eq:def_xll_simple} harder to satisfy (leading to $d\geq x_{LL}$) and
would thus lead to a scheme with higher social cost than $\pi ^{\ast }$;
recall that $c=x_{L}^{\QTR{up}{SO}}$ would be the optimal choice to minimize
the social cost). }

If the minimizer is $c=x_{L}^{\QTR{up}{SO}}$, then by definition it must be $%
d\geq x_{LL}$. If, instead, $c=x_{L}^{\QTR{up}{SO}}+1$ is the minimizer then
there exist parameters under which $d=x_{LL}-1$ can be incentive compatible
(but we show that no smaller values of $d$ could be).\footnote{%
In fact, in some cases increasing the value of $c$ leads to a smaller
continuation cost for agents that follow the recommendation of safe (since
they have higher probability to be sent to the risky road when the road
changes to $L$ in the next period). When that happens $\tilde{\pi}^{*}$ is
incentive compatible by \hbox{Proposition \ref{prop:eq}.}} The pair $%
c=x_{L}^{\QTR{up}{SO}}+1$ and $d=x_{LL}-1$, defining $\tilde{\pi}^{\ast }$,
may give a lower social cost than $\pi ^{\ast }$ for certain parameter
values (because there are more future rounds with flow $d$ in expectation
than rounds with $c$, hence increasing $c$ slightly to decrease $d$ may be
beneficial).\footnote{%
While $\tilde{\pi}^{*}$ has a higher cost that $\pi ^{*}$ immediately after
the road switches to $L$ (since $g(x_{L}^{\QTR{up}{SO}}+1,\mu _{L})>g(x_{L}^{%
\QTR{up}{SO}},\mu _{L})$), it has lower stage cost for all the subsequent
times (since $g(x_{LL}-1,\mu _{L})<g(x_{LL},\mu _{L})$). For sufficiently
large values of $\delta $ this may reduce the overall cost.} The second
statement of Proposition \ref{prop:delta_small} follows immediately from
these observations. The first statement follows from the observation that,
for $\delta $ small enough, the scheme $\pi ^{\ast }$ must have smaller
social cost than $\tilde{\pi}^{\ast }$ (since it leads to smaller cost for
the stage immediately after the state of the road switches to $L$, and for $%
\delta $ small enough, this dominates the potential future gain of using $%
d=x_{LL}-1$ instead of $d=x_{LL}$).

\section{Conclusion}

New GPS technologies and traffic recommendation systems critically depend on
real-time information about road conditions and delays on a large number of
routes. This information mainly comes from the experiences of drivers.
Consequently, enough drivers have to be induced to experiment with different
roads (even if this involves worse expected travel times for them). This
situation creates a classic experimentation-exploitation trade-off, but
critically\ one in which the party interested in acquiring new information
cannot directly choose to experiment but has to convince selfish, autonomous
agents to do it. This is the problem we investigate in the current paper.

There is by now a large literature on experimentation in economics and
operations research. The main focus is on the optimal amount of
experimentation by trying new or less well-known options in order to acquire
information at the expense of foregoing current high payoffs. The game
theoretic experimentation literature, investigating situations where there
are multiple agents who can generate information for themselves and others,
studies issues of collective learning, free-riding and underexperimentation.
Missing from the previous literature is the main focus of our paper: a
setting in which exploitation of relevant information creates payoff
dependence (for example, via congestion in the context of our routing model)
and the central entity or planner has the incentives for experimentation,
but has to confront the incentive compatibility of the agents, especially in
view of the aforementioned payoff dependence.

We develop a simple model to study these issues, and characterize optimal
recommendation systems first in a two-stage setting and then in an
infinite-horizon environment. Key aspects of our model are congestion
externalities on roads (introducing payoff dependence); a finite number of
agents (so that agents take into account their impact on information as well
as congestion); forward-looking behavior by agents (so that they can be
incentivized by future rewards); and a central planner who can observe
results from experimentation and can make recommendations but has to respect
incentive compatibility (introducing the feature that this is not a direct
model of experimentation). We simplify our analysis by assuming that there
are only two roads and one of them is\ \textquotedblleft
safe\textquotedblright , meaning that the travel time is known,
non-stochastic, and does not depend on the state of nature. This contrasts
with the other, \textquotedblleft risky\textquotedblright\ road, where
travel times depend on the state of nature (on which the central planner is
acquiring information).

We first show that full information, whereby the central planner shares all
the information he acquires with all agents, is generally not optimal. The
reason is instructive about the forces in our model: full information will
make all agents exploit information about favorable conditions on the risky
road, and this will in turn cause congestion on this risky road, reducing
the rewards the experimenter would need to reap in order to encourage his
experimentation. As a result, full information may lead to insufficient or
no experimentation, which is socially costly.

We then proceed to characterizing optimal incentive compatible
recommendation schemes. These typically do not induce full information, but
still share some of the information obtained from the experimentation of few
experimenters (in our model only one experimenter is sufficient because
there is no uncertainty about the state conditional on experimentation).

In the case of infinite-horizon, the underlying state of the risky road
changes according to a Markov chain. An additional issue in this case is
that the incentive compatibility of non-informed agents has to be ensured as
well, since they may decide to disregard the recommendation of the central
planner and choose the risky road when they think travel times are lower
there. This makes the characterization of the optimal recommendation scheme
more challenging. We propose a relatively simple incentive compatible
dynamic scheme and then establish its optimality when the discount factor is
small (in particular less than 1/2) and large enough (limiting to 1).

Our paper highlights the importance of understanding how modern routing
technologies (and perhaps more generally) need to induce sufficient
experimentation and how they can balance the benefits from exploiting new
information and ensuring incentive compatibility of experimentation as well
as incentive compatibility of all non-experimenters. Investigating how these
issues can be navigated in more general settings (for example, with a more
realistic road network and richer dynamic and stochastic elements or in
models of payoff dependence resulting from other considerations) is an
important area for future work.

\newpage \onecolumn\appendix

\begin{center}
\textbf{APPENDIX}
\end{center}

\section{Proofs of Section 3: Two stage example}

\begin{proof}
\textbf{of Lemma \ref{lemma:pure_one_exp}} Suppose that $k \geq 2$ agents
experiment. Then the cost of any of these agents is 
\begin{equation*}
[\textup{expected cost of risky}] = k \mu_{\beta} + \eta_R \ge k \mu_{\beta}
\geq 2 \mu_{\beta} > 2(S_0 + S_1 N), 
\end{equation*}
where we denoted by $\eta_R$ the expected cost in the second round, which is
for sure non-negative and we used Assumption \ref{assumption:2stage}. If
instead the agent switches to safe he will have an expected cost of 
\begin{equation*}
[\textup{expected cost of safe}] = S_0 + S_1(N-k+1) + \eta_S \le S_0 +
S_1(N-1) + S_0 + S_1 N, 
\end{equation*}
where we denoted by $\eta_S$ the expected cost in the second round, which is
at most $S_0 + S_1 N$ as the agent can always play safe in the second round
and in the worst case every other agent is also playing safe. Since $S_0 +
S_1(N-1) + S_0 + S_1 N< 2(S_0 + S_1 N)$, it follows that the cost of
experimenting is greater than the overall cost of taking safe. Thus, it is
never a pure strategy equilibrium for more than one person to experiment --
under any information scheme.
\end{proof}
\begin{proof}
\textbf{of Theorem \ref{thrm:private_vs_full}} Recall we are considering
pure strategy equilibria. Using Lemma~\ref{lemma:pure_one_exp} we can
characterize when it is an agent's best response to experiment under the two
different information schemes.

\textbf{Full information}: If no agent experiments the cost for each agent
is $2(S_0 + S_1 N)$. If any agent experiments all agents learn $\theta$
before round two and there are two possibilities.

\begin{itemize}
\item[1)] If $\theta = H$, everyone takes the safe road in the second round
as $S_0 + S_1 N < H$.

\item[2)] If $\theta = L$, the agents play a pure strategy Nash equilibrium
and split the flow across the two roads, i.e. $x_L^{\textup{eq}}$ take risky. The
expected cost of equilibrium in the second round when $\theta = L$ is
therefore  $g(x_L^{\textup{eq}},L)/N = (x_L^{\textup{eq}}/N) (L x_L^{%
\textup{eq}}) + (N-x_L^{\textup{eq}})/N(S_0 + S_1 (N-x_L^{\textup{eq}}))$.
\end{itemize}

All agents playing safe is an equilibrium if and only if the cost of
switching to experimenting is worse than $2(S_0 + S_1 N)$. The expected cost
of one agent switching is 
\begin{equation*}
\mu_{\beta} + \beta g(x_L^{\textup{eq}},L)/N + (1-\beta) (S_0 + S_1 N) 
\end{equation*}
where the first term is the cost of experimenting in the first round, the
second term is the cost in the second round if $\theta = L$ weighted by $%
\mathbb{P} (\theta = L)$, and similarly the last term is the cost if $\theta
= H$ weighted by $\mathbb{P} (\theta = H)$.

Overall, if 
\begin{equation}  \label{ICA}
2(S_0 + S_1 N)<\mu_{\beta} + \beta g(x_L^{\textup{eq}},L)/N + (1-\beta) (S_0
+ S_1 N),
\end{equation}
no one experiments. Otherwise, one agent experimenting in the first round is
the unique pure strategy Nash equilibrium (recall by Lemma \ref%
{lemma:pure_one_exp} that it is never incentive compatible for more than one
agent to experiment). The total cost of the first round under this
equilibrium is $g(1, \mu_{\beta})$ and if $\theta = H$ the total cost of the
second round is $g(0)$, while if $\theta = L$ the total cost of the second
round is $g(x_L^{\textup{eq}}, L)$. The thresholds $\beta_f$ can be obtained by
imposing equality in \eqref{ICA} and solving for $\beta$.

\textbf{Private information}: No experimentation is an equilibrium if and
only if it is not individually optimal for an agent to play risky. Similarly
to the previous case, this occurs when the expected cost of switching to
playing risky is worse than all agents playing safe, that is, when 
\begin{equation}  \label{ICB}
2(S_0 + S_1 N) < \mu_{\beta} + \beta L + (1-\beta)(S_0 + S_1 N),
\end{equation}
where we used the fact that, under private information, it is a best
response for all the agents that were on safe at time $1$ to remain on safe
at time $2$, while for the experimenter it is a best response to take risky
at time $2$ if he observed $L$ at time $1$ and safe otherwise.

If the above does not hold, then there is an incentive to deviate from all
playing safe and there exists an asymmetric pure strategy Nash equilibrium,
where one agent experiments in the first round. By Lemma \ref%
{lemma:pure_one_exp}, this is the unique pure strategy Nash equilibrium. The
expected cost of the first round is $g(1, \mu_{\beta})$ and the expected cost
of the second round is $\beta g(1, L) + (1-\beta) g(0).$ The thresholds $%
\beta_p$ can be obtained by imposing equality in \eqref{ICB} and solving for 
$\beta$.

Finally, note that $\beta_p\le \beta_f$ if and only if $\frac{g(x_L^{%
\textup{eq}}, L)}{N} \geq L$. Note that by Assumption \ref{assumption:2stage} ${L < S_0 + S_1}$, thus $x_L^{\textup{eq}} \geq 1$. If $x_L^{\textup{eq}} = N$ then $g(x_L^{\textup{eq}}, L) = LN^2$ and the inequality holds. Instead, if $1 \leq x_L^{\textup{eq}} \le N-1$ then 
\begin{align*}
g(x_L^{\textup{eq}},L) &= L (x_L^{\textup{eq}})^2 + (S_0 + S_1 (N-x_L^{\textup{eq}}))(N-x_L^{\textup{eq}})\\
&= L (x_L^{\textup{eq}})^2 + S_0  (N-x_L^{\textup{eq}}) + S_1 (N-x_L^{\textup{eq}})(N-x_L^{\textup{eq}})\\
&\ge L x_L^{\textup{eq}} + (S_0 +S_1) (N-x_L^{\textup{eq}}) \\
&> L x_L^{eq} + L (N-x_L^{\textup{eq}})\\
&\geq LN
\end{align*}
where the first inequality follows $x_L^{\textup{eq}},N-x_L^{\textup{eq}}\ge 1$, while the second follows from $L < S_0 + S_1$.
\end{proof}

\begin{proof}
\textbf{of Corollary \ref{corollary:private_better}} In this interval of
beliefs:

\begin{itemize}
\item under full information there is no experimentation and the cost is $%
2g(0)$.

\item under private information there is experimentation. The experimenter
has a total cost for the two rounds that is less than $2(S_0+S_1N)$,
otherwise he would switch to safe. All the other agents have cost $%
2(S_0+S_1(N-1))< 2(S_0+S_1N).$ Therefore the total cost is strictly less
than $2N(S_0+S_1N)=2g(0)$.
\end{itemize}
\end{proof}

\begin{proof}
\textbf{of Theorem \ref{lemma:so_2stage}} The social optimum is the minimum
total cost for the two rounds. If all agents use the safe road in both
rounds then the total cost is $2g(0)$. If the CP sends \textit{at least one
agent} on the risky road in the first round the CP learns $\theta$ and can
make a decision on how many agents to send on the risky road in the second
round based on the value of $\theta$. Denote these flows by $x_2^{R\mid L}$
and $x_2^{R\mid H}$. Thus, if the CP  experiments with $x_1^R>0$
agents in the first round, he is facing the following optimization problem 
\begin{align*}
\min_{x_1^R, x_2^{R\mid L}, x_2^{R\mid H}} g(x_1^R, \mu_{\beta}) &+ \beta
g(x_2^{R\mid L}, L) + (1-\beta) g(x_2^{R\mid H}, H) \\
\text{s.t.}\qquad 1 &\leq x_1^R \leq N \\
0 &\leq x_2^{R\mid L} \leq N \\
0 &\leq x_2^{R\mid H} \leq N \\
&x_1^R, x_2^{R\mid L}, x_2^{R\mid H} \in \mathbb{Z} _{\geq 0}.
\end{align*}
This minimization can be separated into three optimization problems, one for
each of the flows $x_1^R, x_2^{R\mid L}, x_2^{R\mid H}$.

\begin{itemize}
\item For the first round $x_1^R$ can be obtained by solving 
\begin{align*}
&\min_{x_1^R} \quad \mu_{\beta} (x_1^R)^2 + (S_0 + S_1(N-x_1^R))(N-x_1^R) \\
&\text{s.t.}\qquad 1 \leq x_1^R \leq N \\
&\qquad \qquad x_1^R \in \mathbb{Z} _{\geq 0}.
\end{align*}
Since $S_0 + S_1 N < \mu_{\beta}$, the cost is strictly increasing for $%
x_1^R\ge 1$ and thus the optimal solution is $x_1^R = 1$.

\item For the second round, the two values $x_2^{R\mid L}$ and $x_2^{R\mid H}
$ can be found separately and are just the values that minimize $g(x, L)$
and $g(x, H)$ respectively, which are $x_L^{\textup{SO}}$ and $x_H^{\textup{SO}}$ by
definition.
\end{itemize}

The minimum of all agents playing safe and the objective of the above
minimization problem gives the social optimum cost and the threshold $\beta_{%
\text{SO}}$ is the belief under which the CP is indifferent between
experimentation and all agents playing safe. Note that $\beta_{SO}<\beta_p$
because if experimentation is an equilibrium under private information it
implies it is optimal for the CP to experiment. Specifically if
experimentation is an equilibrium under private information then $\beta$ is
such that 
\begin{equation*}
2 (S_0 + S_1 N) \geq \mu_{\beta} + \beta L + (1-\beta)(S_0 + S_1 N). 
\end{equation*}
This implies 
\begin{align*}
2 (S_0 + S_1 N) &+ 2 (S_0 + S_1 N)(N-1) \\
&\geq \mu_{\beta} + (S_0 + S_1
N)(N-1) + \beta (L + (S_0 + S_1 N)(N-1)) \\
&\qquad \qquad \qquad \qquad + (1-\beta)((S_0 + S_1 N)N) \\
&\geq \mu_{\beta} + (S_0 + S_1(N-1))(N-1) + \beta g(x_L^{\textup{SO}}, L) +
(1-\beta) g(x_H^{\textup{SO}}, H) \\
\iff 2 (S_0 + S_1 N)N &\geq \mu_{\beta} + (S_0 + S_1 (N-1))(N-1) + \beta
g(x_L^{\textup{SO}}, L) + (1-\beta) g(x_H^{\textup{SO}}, H)
\end{align*}
which is the condition for experimentation to be optimal for the CP.
\end{proof}

\begin{proof}
\textbf{of Theorem \ref{thrm:2stageopt}}

Recall by Lemma \ref{lemma:pure_one_exp} that in any equilibrium there is at
most one experimenter in the first round. Hence we can distinguish two cases:

\begin{enumerate}
\item \textbf{No experimentation:} if no agent experiments the social cost
is $2g(0)$;

\item \textbf{One experimenter:} For a recommendation scheme $\pi$ to be
incentive compatible, it must be that $\pi(L)$, $\pi(H)$ are such that an
agent follows the recommendation he is given. We study incentive
compatibility starting from the \textbf{second round}. In this case there
are three type of agents:
\begin{itemize}
\item \textbf{Type 1 (experimenter in round 2)}: The experimenter knows the
value of $\theta$ since he observed it in the first round. We next show that
without loss of optimality we can restrict our attention to recommendation
schemes where it is a best response for the experimenter to take risky in
the second round if $\theta=L$ and safe if $\theta=H$.

\begin{itemize}
\item $\theta = H$: The experimenter's cost on safe is $S_0 + S_1(N-\pi(H))$%
, the cost on risky is $H(\pi(H)+1)$. The conclusion follows since, 
\begin{align*}
S_0 + S_1(N-\pi(H)) \le S_0+S_1N < H \le H(\pi(H)+1),
\end{align*}
where we used the fact that $H> S_0 + S_1 N$ by assumption.

\item $\theta = L$: Let $x_2^{R\mid L}$ be the equilibrium flow on the risky
road in the second round (this a priori may or may not include the
experimenter). For incentive compatibility it must be $x_2^{R\mid L}\le
x^{\textup{eq}}_L$. In fact if that was not the case, consider an agent that was on
safe in the first round and receives a recommendation of risky. This agent
doesn't know $\theta$, but he knows that in both cases ($\theta=L$ or $%
\theta=H$) switching to safe would give a better cost. Hence a scheme that
leads to $x_2^{R\mid L}> x^{\textup{eq}}_L$ is not incentive compatible. We then
distinguish two cases for the experimenter:

\begin{itemize}
\item if the experimenter belongs to the flow $x_2^{R\mid L}$ then deviating
to safe is not convenient because $x_2^{R\mid L}\le x^{\textup{eq}}_L$;

\item if the experimenter is on safe, then either $x_2^{R\mid L}< x^{\textup{eq}}_L$
in which cases it is convenient for the experimenter to deviate to risky or $%
x_2^{R\mid L}=x^{\textup{eq}}_L$.
\end{itemize}

Overall the only case when it might be convenient for the experimenter to
take safe after observing $\theta=L$ is for recommendation schemes such that 
$x_2^{R\mid L}=x^{\textup{eq}}_L$. Note that the social cost of such a scheme is the
same as full information. We are going to show at the end of this proof that
the optimal solution of \eqref{eq:opt2} is weakly less than full information.
\end{itemize}

Overall, the previous discussion shows that we can assume $x_2^{R\mid
L}=\pi(L)+1$ and $x_2^{R\mid H}=\pi(H)$ without loss of optimality. For
simplicity we denote these flows by $x_L$ and $x_H$ in the rest of this
proof.

\item \textbf{Type 2 (recommended safe)}: An agent of this type took safe in
the first round and received a recommendation to take safe, signal $r_S$,
before the second round. His expected cost of following the recommendation
is 
\begin{align*}
S_0 + S_1 (\mathbb{P}(\theta = L \mid r_S)(N-x_L) + \mathbb{P}(\theta = H
\mid r_S)(N-x_H)),
\end{align*}
as the flow he will experience depends on how many agents are being sent to
risky. Deviating gives an expected cost of 
\begin{align*}
\mathbb{P}(\theta = L \mid r_S) L (x_L + 1) + \mathbb{P}(\theta = H \mid
r_S) H (x_H + 1).
\end{align*}
By Bayes rule 
\begin{align*}
\mathbb{P}(\theta = L \mid r_S) &= \frac{\beta \mathbb{P}(r_S \mid \theta=L)%
}{\beta \mathbb{P}(r_S \mid \theta = L) + (1-\beta)\mathbb{P}(r_S\mid \theta
= H)} \\
&= \frac{\beta \frac{N-x_L}{N-1}}{\beta \frac{N-x_L}{N-1}+(1-\beta)\frac{%
N-x_H-1}{N-1}} \\
&= \frac{\beta (N-x_L)}{\beta(N-x_L)+(1-\beta)(N-x_H-1)} \\
\mathbb{P}(\theta = H \mid r_S) &= 1 - \mathbb{P}(\theta=L \mid r_S).
\end{align*}
Thus, the full constraint is 
\begin{align}  \label{agent2:S}
&S_0 + S_1 \left( \frac{\beta (N-x_L)^2}{\beta(N-x_L)+(1-\beta)(N-x_H-1)} + 
\frac{(1-\beta) (N-x_H)(N-x_H-1)}{\beta(N-x_L)+(1-\beta)(N-x_H-1)} \right) \\
&\leq \frac{\beta (N-x_L) L (x_L + 1)}{\beta(N-x_L)+(1-\beta)(N-x_H-1)} + 
\frac{(1-\beta) (N-x_H-1)H (x_H + 1)}{\beta(N-x_L)+(1-\beta)(N-x_H-1)} . 
\notag
\end{align}

\item \textbf{Type 3 (recommended risky)} An agent of this type took safe in
the first round and received a recommendation to take risky, signal $r_R$,
before the second round. His expected cost of following the recommendation
is 
\begin{align*}
\mathbb{P}(\theta = L \mid r_R) L x_L + \mathbb{P}(\theta = H \mid r_R) H x_H
\end{align*}
and deviating gives an expected cost of 
\begin{align*}
S_0 + S_1 (\mathbb{P}(\theta = L \mid r_R)(N-x_L+1) + \mathbb{P}(\theta = H
\mid r_R)(N-x_H+1)).
\end{align*}
By Bayes rule 
\begin{align*}
\mathbb{P}(\theta = L \mid r_R) &= \frac{\beta \mathbb{P}(r_R \mid \theta=L)%
}{\beta \mathbb{P}(r_R \mid \theta = L) + (1-\beta)\mathbb{P}(r_R\mid \theta
= H)} \\
&= \frac{\beta \frac{x_L-1}{N-1}}{\beta \frac{x_L-1}{N-1}+(1-\beta)\frac{x_H%
}{N-1}} \\
&= \frac{\beta (x_L-1)}{\beta(x_L-1)+(1-\beta) x_H} \\
\mathbb{P}(\theta = H \mid r_R) &= 1 - \mathbb{P}(\theta=L \mid r_R).
\end{align*}
Thus, the full constraint is 
\begin{align}  \label{agent3:R}
&\frac{\beta (x_L-1)}{\beta(x_L-1)+(1-\beta)x_H} L x_L + \frac{(1-\beta)x_H}{%
\beta(x_L-1)+(1-\beta)x_H} H x_H \\
&\leq S_0 + S_1 \left( \frac{\beta (x_L-1)}{\beta(x_L-1)+(1-\beta) x_H }%
(N-x_L+1) \right. \\
&\qquad \qquad \qquad \qquad \left. + \frac{(1-\beta)x_H}{\beta(x_L-1)+(1-\beta) x_H}(N-x_H+1) \right)
\notag
\end{align}
\end{itemize}

The constraints above are for the {second round}, we next consider the 
\textbf{first round}. We already know by Lemma \ref{lemma:pure_one_exp} that
it is not convenient for any agent on safe to join the experimenter. Hence
we only need to ensure that it is incentive compatible for the experimenter
to experiment in the first round. Equivalently, we need to show that
experimenting gives a weakly lower cost than all agents playing safe for two
rounds ($2(S_0 + S_1 N)$), which leads to the constraint 
\begin{align}  \label{experiment}
\beta L + (1-\beta) H + \beta L x_L + (1-\beta) (S_0 + S_1(N-x_H)) \leq
2(S_0+S_1 N).
\end{align}

The CP then solves the constrained optimization problem given in %
\eqref{eq:opt2},
where the objective function is the total travel time summed over the two
periods and the IC constraints \eqref{2stage:ic1}, \eqref{2stage:ic2} and %
\eqref{2stage:ic3} can be explicitly rewritten as detailed in (\ref{agent2:S}%
), (\ref{agent3:R}), and (\ref{experiment}) respectively.

Finally it is easy to show that the choices $\pi_2(L)=\pi_2(H)=0$ and, for $%
\beta\ge\beta_f$, $\pi_2(L)=x^{\textup{eq}}_L-1, \pi_2(H)=0$ are feasible (i.e.
satisfy (\ref{agent2:S}), (\ref{agent3:R}), and (\ref{experiment})) and
lead to the same social cost as private and full information respectively,
thus proving that partial information is weakly better than private and full
information.
\end{enumerate}
\end{proof}

\section{Proofs of Section 4: Infinite horizon}

\subsection{Proof of Proposition \protect\ref{prop:so}}

We start by showing that under the given assumptions $x_H^{\textup{SO}} = 0$
and $x_L^{\textup{SO}} \geq 2$.

\begin{itemize}
\item $\boldsymbol{x_H^{\textup{$\mathbf{SO}$}} = 0}:$ Note that 
\begin{align*}
\mathbb{E} _{\theta_t}\left[ g(x, \theta_t) \mid \beta_{t-1}=0 \right]=\mu_H
x^2 +S_0 (N-x)= g(x,\mu_H).
\end{align*}
We next show that under the given assumptions $g(0,\mu_H) \leq g(1,\mu_H)$.
Together with the fact that $g(x,\mu_H)$ is strongly convex in $x$, this
proves the desired statement. Note that 
\begin{equation*}
g(0, \mu_H) \leq g(1, \mu_H) \Leftrightarrow S_0 N \leq \mu_H +S_0(N-1)
\Leftrightarrow S_0 \leq \mu_H, 
\end{equation*}
and the latter inequality holds by Assumption \ref{ass:mul_muh}.

\item $\boldsymbol{x_L^{\textup{\textbf{SO}}} \ge 2}:$ $x_L^{\textup{SO}}$
is the minimizer of $g(x, \mu_L)$, which is strongly convex in $x$. For the
minimizer to be $\ge 2$, the cost at $x=2$ must be strictly less than at $x=1
$, that is, 
\begin{equation*}
g(2, \mu_L)=4 \mu_L + S_0 (N-2) < \mu_L + S_0 (N-1)=g(1, \mu_L)
\end{equation*}
rearranging gives 
\begin{equation*}
\mu_L < \frac{1}{3} S_0, 
\end{equation*}
which holds by Assumption \ref{ass:mul_muh}.
\end{itemize}

Since the CP has full control, \eqref{so} is an optimal control problem and
we can apply the one-step deviation principle to prove optimality. To this
end, we distinguish two cases

\begin{itemize}
\item Consider any $\beta$ such that $\boldsymbol{x^{\textup{SO}}_{\beta}\ge 1}$:%
\newline
Sending a number of agents different from $x_{\beta}^{\textup{SO}}$ doesn't lead to a
profitable deviation. In fact, the stage cost would be higher (since $%
x_{\beta}^{\textup{SO}}$ is the minimizer of $g(x,\mu_{\beta})$) and the continuation
cost would be the same (as no more information can be gained by sending more
agents to the risky road).

\item Consider any $\beta$ such that $\boldsymbol{x^{\textup{SO}}_{\beta}= 0}$:\newline
In this case sending $\pi^{\textup{SO}}(\beta)=1$ agent is not myopically optimal and
the stage cost could be reduced by sending no agent. Nonetheless, we show
that because sending one agent provides information about the state of the
risky road, $\pi^{\textup{SO}}(\beta)=1$ is the best strategy if the CP is forward
looking. There are two possible deviations:

\begin{enumerate}
\item \underline{The CP sends more than one agent:} similarly to the
previous case the stage cost increases and the continuation cost stays the
same. Hence this deviation is not profitable.

\item \underline{The CP does not send any agents:} to analyze this case, let 
$\beta^{\prime }$ be the belief that $\theta_{t+1}=L$ when an agent has
belief $\beta$ that $\theta_{t}=L$ (i.e. $\beta^{\prime
}=\beta(1-\gamma_L)+(1-\beta)\gamma_H$). We start by noting that $%
x^{\textup{SO}}_{\beta} = 0 \Rightarrow x^{\textup{SO}}_{\beta^{\prime }}\le 1 \Rightarrow
\pi^{\textup{SO}}(\beta^{\prime })=1$ (see Lemma \ref{beta} below). The cost that the
CP encounters by sending one agent at time $t$ is 
\begin{align*}
V(\beta) = &\underbrace{\mu_{\beta}+S_0(N-1)}_\textup{stage cost time t} \\
&+\delta \underbrace{\left( \beta^{\prime }(\mu_L
(x^{\textup{SO}}_L)^2+S_0(N-x^{\textup{SO}}_L))+(1-\beta^{\prime })(\mu_H+S_0(N-1))\right)}_%
\textup{stage cost time t+1}+\delta^2\ldots
\end{align*}
while if he deviates and sends no agent at time $t$ the cost is 
\begin{align*}
\tilde V(\beta) &= \underbrace{S_0N}_\textup{stage cost time t}+\delta 
\underbrace{ (\mu_{\beta^{\prime }} \pi^{\textup{SO}}(\beta^{\prime})^2
+S_0(N-\pi^{\textup{SO}}(\beta^{\prime })))}_\textup{stage cost time t+1}%
+\delta^2\ldots \\
&= \underbrace{S_0N}_\textup{stage cost time t}+\delta \underbrace{\left(
\beta^{\prime }(\mu_L +S_0(N-1))+(1-\beta^{\prime })(\mu_H+S_0(N-1))\right)}_%
\textup{stage cost time t+1}+\delta^2\ldots
\end{align*}
where we used $\mu_{\beta^{\prime }}=\beta^{\prime }\mu_L+(1-\beta^{\prime
})\mu_H$ and $\pi^{\textup{SO}}(\beta^{\prime })=1$. Note that we did not report the
stage costs from time $t+2$ on as they are equal under both schemes.
Overall, $\pi^{\textup{SO}}(\beta)=1$ is optimal if $V(\beta)\le \tilde V(\beta)$ or
equivalently, 
\begin{align}
&\mu_{\beta}+S_0(N-1)+ \delta\beta^{\prime }(\mu_L
(x^{\textup{SO}}_L)^2+S_0(N-x^{\textup{SO}}_L)) \le S_0N + \delta \beta^{\prime }(\mu_L
+S_0(N-1))  \notag \\
&\Leftrightarrow \mu_{\beta} \le S_0 + \delta \beta^{\prime
}[g(1,\mu_L)-g(x^{\textup{SO}}_L,\mu_L)].  \label{condition_beta}
\end{align}
Note that $g(1,\mu_L)-g(x^{\textup{SO}}_L,\mu_L)\ge0$ since $x^{\textup{SO}}_L=\arg\min_x
g(c,\mu_L)$. Moreover when $\beta$ increases $\mu_{\beta}$ decreases. Hence it
suffices to prove that \eqref{condition_beta} holds for the smallest
possible value of $\beta$ which is $0$.

We note that $g(1,\mu_L)-g(x^{\textup{SO}}_L,\mu_L) \ge S_0-3\mu_L$ since 
\begin{align*}
g(1,\mu_L)-g(x^{\textup{SO}}_L,\mu_L) &\ge g(1,\mu_L)-g(2,\mu_L) \\
&= \mu_L +S_0(N-1)- 4\mu_L -S_0(N-2)=S_0-3\mu_L.
\end{align*}
Note that $S_0 - 3\mu_L \geq \frac{S_0}{3} - \mu_L\ge 0$ by Assumption \ref%
{ass:mul_muh} and thus a sufficient condition for \eqref{condition_beta} to
hold when $\beta=0$ (and $\beta'=\gamma_H$) is 
\begin{equation*}
\mu_H \le S_0 + \delta \gamma_H \left[ \frac{S_0}{3}-\mu_L\right], 
\end{equation*}
which holds by Assumption \ref{ass:mul_muh}.
\end{enumerate}
\end{itemize}

\begin{lemma}
\label{beta} $x^{\textup{SO}}_{\beta} = 0 \Rightarrow x^{\textup{SO}}_{\beta^{\prime }}\le 1$.
\end{lemma}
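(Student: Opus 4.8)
The plan is to reduce both of the conditions appearing in the statement, $x_\beta^{\textup{SO}}=0$ and $x_{\beta'}^{\textup{SO}}\le 1$, to one‑dimensional threshold inequalities on the one‑step expected congestion, and then to control how that expectation is transported from $\beta$ to $\beta'$ by a single affine identity.

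First I would introduce the shorthand $\bar\mu_\beta := \mathbb{E}[\theta_t\mid\beta_{t-1}=\beta]=\beta\mu_L+(1-\beta)\mu_H$, so that (since $S_1=0$) the expected stage cost is $\mathbb{E}_{\theta_t}[g(x,\theta_t)\mid\beta_{t-1}=\beta]=\bar\mu_\beta x^2+S_0(N-x)=g(x,\bar\mu_\beta)$, which is strongly convex in $x$. As in the proof that $x_H^{\textup{SO}}=0$ and $x_L^{\textup{SO}}\ge 2$, convexity lets me characterise the integer minimizer by comparing adjacent values: $x_\beta^{\textup{SO}}=0\Leftrightarrow g(0,\bar\mu_\beta)\le g(1,\bar\mu_\beta)\Leftrightarrow \bar\mu_\beta\ge S_0$, and $x_{\beta'}^{\textup{SO}}\le 1\Leftrightarrow g(1,\bar\mu_{\beta'})\le g(2,\bar\mu_{\beta'})\Leftrightarrow \bar\mu_{\beta'}\ge S_0/3$. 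Thus the lemma is equivalent to the implication $\bar\mu_\beta\ge S_0 \Rightarrow \bar\mu_{\beta'}\ge S_0/3$.

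The key step is an affine identity relating $\bar\mu_{\beta'}$ to $\bar\mu_\beta$. Substituting the belief update $\beta'=\beta(1-\gamma_L)+(1-\beta)\gamma_H$ into $\bar\mu_{\beta'}=\beta'\mu_L+(1-\beta')\mu_H$ and eliminating $\beta$ via $\beta(\mu_H-\mu_L)=\mu_H-\bar\mu_\beta$, I obtain
\[
\bar\mu_{\beta'}=(1-\gamma_L-\gamma_H)\,\bar\mu_\beta+\gamma_H\mu_L+\gamma_L\mu_H,
\]
which is a routine rearrangement I would not belabour. Now I invoke Assumption \ref{ass:gamma}, which gives $1-\gamma_L-\gamma_H\ge 0$, together with the hypothesis $\bar\mu_\beta\ge S_0$ and the facts $\mu_H\ge S_0$ and $\mu_L\ge L>0$ from Assumption \ref{ass:mul_muh}. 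These yield
\[
\bar\mu_{\beta'}\ \ge\ (1-\gamma_L-\gamma_H)S_0+\gamma_H\mu_L+\gamma_L S_0\ =\ (1-\gamma_H)S_0+\gamma_H\mu_L\ \ge\ (1-\gamma_H)S_0\ \ge\ \tfrac12 S_0\ \ge\ \tfrac13 S_0,
\]
where the last two inequalities use $\gamma_H\le\tfrac12$. Combined with the characterisation above, this gives $x_{\beta'}^{\textup{SO}}\le 1$.

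The only delicate points, and the ones I would verify most carefully, are that the two threshold equivalences point in the right direction and that $1-\gamma_L-\gamma_H\ge0$, so that the bound $\bar\mu_\beta\ge S_0$ may legitimately be substituted into the recursion. Structurally, the implication holds because after one step only the fraction $\gamma_H\le\tfrac12$ of probability mass migrates toward the low value $\mu_L$, while the mass staying at or returning to $H$ keeps the expectation above $\tfrac12 S_0$; the assumption $\mu_H\ge S_0$ is precisely what prevents the $\gamma_L\mu_H$ term from becoming negligible when $\gamma_L+\gamma_H$ is close to one. Beyond checking these sign and monotonicity conditions, the remaining argument is the short computation above.
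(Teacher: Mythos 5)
Your proof is correct, and its skeleton matches the paper's: both reduce $x^{\textup{SO}}_{\beta}=0$ and $x^{\textup{SO}}_{\beta'}\le 1$ to the threshold inequalities $\bar\mu_\beta\ge S_0$ and $\bar\mu_{\beta'}\ge S_0/3$ by comparing the convex stage cost at adjacent integers ($g(0)$ vs.\ $g(1)$, and $g(1)$ vs.\ $g(2)$). Where you genuinely differ is the transport step. The paper proves a multiplicative bound, $3\mu_{\beta'}\ge\tfrac32\mu_\beta$ (one step of the chain at most halves the expected congestion), using only $\mu_L\ge L$ and $\mu_H\ge\tfrac12 H$ (the latter from $\gamma_H\le\tfrac12$); it never invokes the lower bound $\mu_H\ge S_0$. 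You instead use the exact affine recursion $\bar\mu_{\beta'}=(1-\gamma_L-\gamma_H)\bar\mu_\beta+\gamma_H\mu_L+\gamma_L\mu_H$ together with $\mu_H\ge S_0$ from Assumption \ref{ass:mul_muh}, landing at $\bar\mu_{\beta'}\ge(1-\gamma_H)S_0\ge S_0/2$; both routes end with the same $S_0/2$ margin. Your version buys two things: the affine identity makes explicit where Assumption \ref{ass:gamma} enters (through $1-\gamma_L-\gamma_H\ge0$), and it keeps the effective congestion parameter $\beta\mu_L+(1-\beta)\mu_H$ of definition \eqref{xbso} consistent on both sides of the implication, whereas the paper's own proof silently mixes conventions (writing $\mu_{\beta'}=\beta\mu_L+(1-\beta)\mu_H$ but $\mu_\beta=\beta L+(1-\beta)H$), which is only reconciled through the identity $\beta\mu_L+(1-\beta)\mu_H=\beta' L+(1-\beta')H$. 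The paper's argument, in exchange, is slightly more robust: it would survive even without the assumption $\mu_H\ge S_0$.
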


\begin{proof}
A sufficient condition for $x^{\textup{SO}}_{\beta^{\prime }}\le 1$ is 
\begin{equation*}
g(1,\mu_{\beta^{\prime }})= \mu_{\beta^{\prime }}+ S_0(N-1) <
4\mu_{\beta^{\prime }} + S_0(N-2) = g(2,\mu_{\beta^{\prime }})
\Leftrightarrow \mu_{\beta^{\prime }}> \frac{S_0}{3}.
\end{equation*}
We next show that $\mu_{\beta^{\prime }}> \frac{\mu_{\beta}}{3}$. The
conclusion then follows since $x^{\textup{SO}}_{\beta} = 0$ implies 
\begin{equation*}
g(0,\mu_{\beta})= S_0N<\mu_{\beta}+ S_0(N-1)= g(1,\mu_{\beta}) \Leftrightarrow
\mu_{\beta}>S_0.
\end{equation*}
To show $3\mu_{\beta^{\prime }}> {\mu_{\beta}}$ recall that $%
\mu_{\beta^{\prime }}=\beta\mu_L+(1-\beta)\mu_H\ge \beta L+(1-\beta)\mu_H$
and $\mu_H=(1-\gamma_H)H+\gamma_H L \ge \frac12H$. Hence 
\begin{align*}
3\mu_{\beta^{\prime }}&\ge 3\beta L+3(1-\beta)\mu_H \ge \frac32\beta
L+\frac32\beta L+\frac32(1-\beta)H \ge\frac32\mu_{\beta} >\mu_{\beta}.
\end{align*}
\end{proof}

\subsection{Preliminary statements in support of the proof of Proposition 
\protect\ref{prop:eq}}

To prove our main Proposition \ref{prop:eq} we start with some additional
statements. We first prove that the agent's state can be simplified as
detailed in Lemma \ref{lemma:agent_state} in the main text.\newline

\noindent \textbf{Proof of Lemma \ref{lemma:agent_state}:} If all agents are
following the scheme $\pi_{c,d}$ then the flow on the risky road at time $t-1
$ is distinct depending on whether $\theta_{t-2}=H$ or $\theta_{t-2}=L$.
Thus, either an agent was on the risky road at time $t-1$ and observed $%
\theta_{t-1}$ or the agent was on safe and can infer $x_{t-1}$ and (from
that) $\theta_{t-2}$. By the Markov property of $\theta$ and the
stationarity of the recommendation policy no information before $\theta_{t-2}
$ is useful to the agents. Thus, the state $z_t^i$ is a sufficient summary
for any agent to determine his expected ongoing cost, as well as the
information that other agents have. If agents do not follow the scheme, then
all agents receive full information and this state is still sufficient as
every agent and the CP will have symmetric information. \hfill{$\blacksquare$%
}\newline

\noindent From here on we consider the values of $c,d$ fixed (satisfying the
assumptions of \hbox{Proposition \ref{prop:eq}}) and we denote by $u^*(z^i_t)$ the
expected cost under $\pi_{c,d}$ of an agent whose state is $z^i_t$. We note
that the expected cost for any agent that knows that the risky road was high
at the previous step ($\theta_{t-1} = H$) and before receiving a
recommendation for time $t$ is the same, no matter his state. Intuitively,
this is true because according to the recommendation scheme $\pi_{c,d}$ if $%
\theta_{t-1} = H$ then at the next step the CP sends the recommendation $r_R$
to one and only one agent (the experimenter) selected at random among all
the agents independent of previous actions or knowledge.

\begin{lemma}
\label{v} The expected cost 
\begin{equation*}
\bar{v} := \mathbb{E} _{\pi_{c,d}} [ u^*([x_{t-1}, \theta^i_{t-1},
r^i_{t-1}]) \mid x_{t-1}=x,\theta^i_{t-1}=H] 
\end{equation*}
is the same for all $x\ge 1$.
\end{lemma}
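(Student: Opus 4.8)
The plan is to show that, conditional on $\theta_{t-1}=H$, every quantity determining agent $i$'s continuation cost has a law independent of the observed flow $x_{t-1}$, so that averaging $u^*$ over the recommendation $r^i_{t-1}$ gives the same number for each $x\ge1$. First I would unpack the conditioning: since only agents on risky observe the realized state, $\theta^i_{t-1}=H$ forces agent $i$ to have been on risky at $t-1$ with $\theta_{t-1}=H$, and after fixing $x_{t-1}=x$ and $\theta^i_{t-1}=H$ the only residual randomness in $z^i_t=[x_{t-1},\theta^i_{t-1},r^i_{t-1}]$ is $r^i_{t-1}$. By Definition \ref{def:am} with $a=b=1$, when $\theta_{t-1}=H$ the CP issues exactly one $r_R$, drawn uniformly among all $N$ agents independently of histories, so $r^i_{t-1}=r_R$ with probability $1/N$ and
\[
\bar v=\tfrac1N\,u^*([x,H,r_R])+\tfrac{N-1}{N}\,u^*([x,H,r_S]).
\]

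Next I would isolate the role of $x_{t-1}$. On the equilibrium path the time-$(t-1)$ flow equals $\pi(\theta_{t-3},\theta_{t-2})$, so $x_{t-1}$ encodes only $(\theta_{t-3},\theta_{t-2})$: the value $1$ signals $\theta_{t-2}=H$ while a value in $\{c,d\}$ signals $\theta_{t-2}=L$. I claim this information is payoff-irrelevant once $\theta_{t-1}=H$ is fixed. Indeed, by the Markov property of $\{\theta_t\}$ the law of $(\theta_t,\theta_{t+1},\dots)$ depends only on $\theta_{t-1}=H$, hence not on $x_{t-1}$; the recommendation driving the time-$t$ flow is $\pi(\theta_{t-2},\theta_{t-1})=\pi(\theta_{t-2},H)=1$, which is constant in $\theta_{t-2}$ precisely because $a=b=1$; and every later recommendation is generated by the stationary policy applied to $(\theta_{t-1},\theta_t,\dots)$ together with independent uniform selection draws. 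Consequently the joint law of the entire continuation process --- future states, future flows, the identity of each future experimenter, and therefore agent $i$'s own recommendation and action stream (all agents comply under $\pi_{c,d}$) --- is the same for every $x\ge1$. Since agent $i$'s discounted cost from $t$ onward is a fixed function of this process, its expectation $\bar v$ does not depend on $x$.

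I expect the crux to be the rigorous justification that the continuation process is genuinely independent of $x_{t-1}$, which needs two distinct facts combined rather than either one alone: the Markov property, which severs the dependence of future \emph{states} on $\theta_{t-2}$ given $\theta_{t-1}=H$, and the uniform-at-random selection of the experimenter after an $H$-observation, which severs the dependence of agent $i$'s future \emph{recommendations} on his own past and hence on $x_{t-1}$. The role of $a=b=1$ is exactly to stop the one recommendation that formally depends on $\theta_{t-2}$ --- the time-$t$ flow --- from smuggling a hidden dependence on $x_{t-1}$ back in, and I would state this explicitly to make the symmetry argument airtight.
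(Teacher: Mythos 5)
Your proof is correct and follows essentially the same route as the paper's: decompose $\bar v$ as $\tfrac1N u^*([x,H,r_R])+\tfrac{N-1}{N}u^*([x,H,r_S])$ using the uniform-at-random choice of the experimenter, then argue that given $\theta_{t-1}=H$ the flow $x_{t-1}$ is payoff-irrelevant because the time-$t$ flow is $1$ (as $a=b=1$) and the continuation law depends only on $\theta_{t-1}$ by the Markov property. The paper states this more tersely, whereas you make the Markov/stationarity/$a=b=1$ justification explicit, but the underlying argument is identical.
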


\begin{proof}
Whenever $\theta_{t-1} = H$ is observed, a new experimenter is chosen among
all agents. Thus every agent, no matter which road he was on, has an
identical likelihood of being chosen as the experimenter at time $t$. Note
that $\bar{v}$ conditions on the knowledge that $\theta_{t-1} = H$, hence
there is no need for distinguishing states where agents do not know the
state of the road. In other words, the expectation is only over the
recommendation scheme $\pi_{c,d}$ and thus 
\begin{align*}
\bar{v} &= \frac{1}{N} \underbrace{u^*([x_{t-1}, H, r_R])}_{\text{ongoing
cost as the experimenter}} + \frac{N-1}{N} \underbrace{u^*([x_{t-1}, H, r_S])%
}_{\text{ongoing cost on safe}} \\
&= \frac{1}{N} u^*([1, H, r_R]) + \frac{N-1}{N} u^*([1, H, r_S]).
\end{align*}
In the second line we substitute the observed flow with $1$ since it is
unimportant (i.e. these costs are the same for any $x_{t-1} \geq 1$); given
the state $\theta_{t-1} = H$, the flow in the previous round will not effect
the ongoing cost: as at time $t$ one
agent will be on risky and $N-1$ will be on safe.
\end{proof}

Lemma \ref{v} simplifies our analysis because it implies that we can
partition the infinite horizon into consecutive periods by defining the
beginning of a new period as the time immediately after the risky road
switches from $L$ to $H$, see Table \ref{fig:scheme_so} in the main text.
Conditioned on the agents knowing that a new period has begun, their ongoing
cost from that point on is the same (i.e. $\bar v$) independent of their
history. This observation simplifies the analysis of incentive compatibility
and optimality. Table \ref{fig:scheme_so} in the main text illustrates the
flow in the risky road under the social optimum and the scheme described in
Proposition \ref{prop:eq} within one period. Table \ref{fig:scheme_pistar}
illustrates the expected stage cost for agents taking the risky road or the
safe road under the scheme described in Proposition \ref{prop:eq}.

\begin{table}[]
\caption{Expected stage costs if agents follow the recommendation scheme
given in Proposition \protect\ref{prop:eq}}
\label{fig:scheme_pistar}%
\begin{adjustwidth}{-70pt}{-70pt}
\begin{center}
\begin{footnotesize}
\begin{tabular}{|c|c|ccc|c|ccc| }\hline
State of the risky road& H& H &$\hdots$&  L& L& L & $\hdots$&  H\\[0.1cm] \hline
Number of agents on risky & -& 1 &$\hdots$& 1& $c$& $d$ & $\hdots$& $d$ \\[0.1cm] \hline
Expected stage cost on risky & -& $\mu_H$ &$\hdots$& $\mu_H$ & $\mu_L c$& $\mu_L d$ & $\hdots$& $\mu_L d$ \\[0.1cm] \hline
Expected stage cost on safe& -& $S_0$ &$\hdots$& $S_0$ & $S_0$& $S_0$ & $\hdots$& $S_0$ \\[0.1cm] \hline
\end{tabular}
\end{footnotesize}
\end{center}
\end{adjustwidth}
\end{table}

We start our analysis by deriving closed form expressions and relations for
the cost $u^*(z)$ of different states $z$ reached under $\pi_{c,d}$.

\begin{lemma}[Closed form expression of auxiliary cost]
\ \label{lemma:closed}

\begin{enumerate}
\item $u^*([c, L, r_R]) = u^*([d, L, r_R])= \frac{1}{1-\delta (1-\gamma_L)}%
\left( \mu_Ld+\delta \gamma_L \bar v\right)$

\item $u^*([c, L, r_S]) = u^*([d, L, r_S])=\frac{1}{1-\delta(1-\gamma_L)}
(S_0 + \delta \gamma_L \bar{v})$
\end{enumerate}
\end{lemma}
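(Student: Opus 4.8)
The plan is to set up a one-step Bellman recursion for each cost and solve it, after first collapsing the two states in each claim to a single one. The preliminary observation I would make is that the middle coordinate $L$ forces $\theta_{t-1}=L$ (an agent carrying the label $L$ was on risky at $t-1$ and saw the low state), while both $x_{t-1}=c$ and $x_{t-1}=d$ force $\theta_{t-2}=L$: under $\pi_{c,d}$ a flow of $c$ is dispatched only when $[\theta_{t-3},\theta_{t-2}]=[H,L]$ and a flow of $d$ only when $[\theta_{t-3},\theta_{t-2}]=[L,L]$. Thus in both cases $[\theta_{t-2},\theta_{t-1}]=[L,L]$, so the flow sent at time $t$ is $\pi(L,L)=d$ and the sole difference between the $c$- and $d$-states, the value of $\theta_{t-3}$, is irrelevant to every future payoff once $\theta_{t-2}$ is fixed, by the Markov property. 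Since $u^*$ is forward looking this already yields $u^*([c,L,r_R])=u^*([d,L,r_R])$ and $u^*([c,L,r_S])=u^*([d,L,r_S])$, and it remains to compute the two common values.

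For part 1, I would argue that an agent in state $[d,L,r_R]$ follows the recommendation and is one of the $d$ agents on risky, paying stage cost $\theta_t d$; since $\theta_{t-1}=L$ we have $\theta_t=L$ with probability $1-\gamma_L$ and $\theta_t=H$ with probability $\gamma_L$, so the expected stage cost is $((1-\gamma_L)L+\gamma_L H)d=\mu_L d$. On $\{\theta_t=L\}$ the agent has again seen the low state on risky and is again recommended risky, so his state at $t+1$ is once more of the form $[d,L,r_R]$ and the continuation cost is $u^*([d,L,r_R])$; on $\{\theta_t=H\}$ the road has switched from $L$ to $H$, so by Lemma \ref{v} the continuation cost is $\bar v$. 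This yields
\begin{equation*}
u^*([d,L,r_R])=\mu_L d+\delta\big[(1-\gamma_L)\,u^*([d,L,r_R])+\gamma_L\bar v\big],
\end{equation*}
and solving for $u^*([d,L,r_R])$ gives $\big(\mu_L d+\delta\gamma_L\bar v\big)/\big(1-\delta(1-\gamma_L)\big)$.

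Part 2 follows the same template with stage cost $S_0$: an agent who obeys $r_S$ takes safe and, since $S_1=0$, pays $S_0$ regardless of the flow. On $\{\theta_t=L\}$ the road stays low and, because the flow has already stabilized at $d$, the scheme re-recommends risky to exactly the $d$ current risky agents and, as $\pi(L,L)=d$ already equals the risky flow, issues no new risky recommendations, so our agent again receives $r_S$ and stays on the safe track, giving continuation $u^*([d,L,r_S])$; on $\{\theta_t=H\}$ the continuation is again $\bar v$. Hence
\begin{equation*}
u^*([d,L,r_S])=S_0+\delta\big[(1-\gamma_L)\,u^*([d,L,r_S])+\gamma_L\bar v\big],
\end{equation*}
which solves to $\big(S_0+\delta\gamma_L\bar v\big)/\big(1-\delta(1-\gamma_L)\big)$.

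The step I expect to require the most care is the stationarity claim in the safe case. After taking safe the agent's knowledge label degrades from $L$ to $U$, so at $t+1$ his state is formally $[d,U,r_S]$ rather than $[d,L,r_S]$. I would argue this is immaterial for the \emph{conditional} cost: on the event $\{\theta_t=L\}$ his action at every remaining stage of the $L$-phase is pinned to safe by the recommendation (which stays $r_S$ while the flow stays $d$), and the safe stage cost $S_0$ does not depend on the road state, so the expected continuation conditioned on the true path $\{\theta_t=L\}$ equals $u^*([d,L,r_S])$ exactly. Finally I would flag that all of this rests on the standing hypothesis, built into the definition of $u^*$, that the remaining agents follow $\pi_{c,d}$; this is what pins the time-$t$ flow at $d$ and what underpins the use of Lemma \ref{v}.
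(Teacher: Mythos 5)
Your proof is correct and follows essentially the same route as the paper's: observe that under $\pi_{c,d}$ the flow dispatched after $[\theta_{t-2},\theta_{t-1}]=[L,L]$ is $d$ regardless of whether the observed flow was $c$ or $d$ (which collapses the two states), then write the one-step recursion $u^*=\text{stage cost}+\delta\big((1-\gamma_L)u^*+\gamma_L\bar v\big)$ and solve the fixed point. Your extra care about the safe-branch state degrading from $L$ to $U$ (and why the conditional continuation still equals $u^*([d,L,r_S])$) makes explicit a point the paper leaves implicit, but it is the same argument, not a different one.
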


\begin{proof}
\begin{enumerate}
\item The first equality follows from the fact that the flow on the risky
road after $\theta = L$ and the flow $c$ or $d$ is $d$. The second equality
follows from $u^*([d, L, r_R])= \mu_Ld+\delta((1-\gamma_L) u^*([d, L,
r_R])+\gamma_L \bar v)$ (see \eqref{xll_L}).

\item The first equality follows similarly to the above point and the second
follows from $u^*([d, L, r_S])=S_0 +\delta((1-\gamma_L )u^*([d, L,
r_S])+\gamma_L \bar v)$ (see \eqref{xll_L}).
\end{enumerate}
\end{proof}

\begin{lemma}
\label{lemma:mlxl} $\mu_L c \leq \mu_L d \leq S_0$ and $c\ge \frac{S_0}{%
2\mu_L}-\frac{1}{2}$.
\end{lemma}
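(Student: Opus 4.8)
\textbf{Proof plan for Lemma \ref{lemma:mlxl}.} All three bounds follow from the ordering $x_L^{\textup{SO}}\le c\le d\le x_L^{\textup{eq}}$ imposed in condition~1 of Proposition \ref{prop:eq}, together with the defining (in)equalities of the myopic social optimum flow $x_L^{\textup{SO}}$ and the myopic equilibrium flow $x_L^{\textup{eq}}$ after an observation of $L$. Recall that in the infinite-horizon model $S_1=0$, so $g(x,\mu_L)=\mu_L x^2+S_0(N-x)$. The first bound, $\mu_L c\le \mu_L d$, is immediate: condition~1 gives $c\le d$ and $\mu_L\ge L>0$. For the middle bound, $\mu_L d\le S_0$, I would invoke the definition of $x_L^{\textup{eq}}$: after observing $\theta_{t-1}=L$, an agent on the risky road carrying flow $x$ has expected current-stage cost $\mu_L x$, while the safe road costs $S_0$, so the myopic equilibrium flow is the largest integer with $\mu_L x_L^{\textup{eq}}\le S_0$. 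Since condition~1 gives $d\le x_L^{\textup{eq}}$, we get $\mu_L d\le \mu_L x_L^{\textup{eq}}\le S_0$. This is precisely the step where using the conditional mean $\mu_L$ (rather than $L$) in the definition of $x_L^{\textup{eq}}$ is essential, since with $\mu_L\ge L$ the cruder inequality $L\,x_L^{\textup{eq}}\le S_0$ would not deliver $\mu_L d\le S_0$.

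For the lower bound $c\ge \frac{S_0}{2\mu_L}-\frac12$, I would use $c\ge x_L^{\textup{SO}}$ from condition~1 and exploit that $x_L^{\textup{SO}}$ is the integer minimizer of the strongly convex map $x\mapsto g(x,\mu_L)$ (convexity is already used in the proof of Proposition \ref{prop:so}). Writing the forward difference
\begin{equation*}
g(x_L^{\textup{SO}}+1,\mu_L)-g(x_L^{\textup{SO}},\mu_L)=\mu_L\left(2x_L^{\textup{SO}}+1\right)-S_0,
\end{equation*}
optimality forces this quantity to be nonnegative, which rearranges to $x_L^{\textup{SO}}\ge \frac{S_0}{2\mu_L}-\frac12$, and hence $c\ge x_L^{\textup{SO}}\ge \frac{S_0}{2\mu_L}-\frac12$.

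The only genuinely delicate point is the boundary case $x_L^{\textup{SO}}=N$, where the forward-difference comparison with $x_L^{\textup{SO}}+1$ is unavailable. In that case I would instead combine $x_L^{\textup{SO}}\le x_L^{\textup{eq}}$ (the usual fact that the selfish equilibrium weakly overuses the attractive road, visible by comparing the two defining conditions) with $\mu_L x_L^{\textup{eq}}\le S_0<\mu_L(x_L^{\textup{eq}}+1)$ to bound $N$ and verify the stated inequality still holds. Since Assumption \ref{ass:mul_muh} forces $\mu_L<\frac13 S_0$ and $x_L^{\textup{SO}}\ge 2$, the minimizer is interior in the regime of interest, so this case is either vacuous or dispatched by a one-line check. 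Apart from this integer bookkeeping, the lemma is a direct consequence of the definitions of $x_L^{\textup{SO}}$ and $x_L^{\textup{eq}}$ and the ordering guaranteed by Proposition \ref{prop:eq}; I expect the boundary/integer effects to be the main (though minor) obstacle rather than any substantive difficulty.
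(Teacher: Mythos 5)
Your core argument is the same as the paper's: the chain $\mu_L c \le \mu_L d \le S_0$ follows from $c \le d \le x_L^{\textup{eq}}$ together with the equilibrium condition $\mu_L x_L^{\textup{eq}} \le S_0$, and the lower bound on $c$ follows from $c \ge x_L^{\textup{SO}}$ plus the fact that the integer minimizer of $g(\cdot,\mu_L)$ lies within $1/2$ of the continuous minimizer $S_0/(2\mu_L)$ --- your forward-difference computation is just an explicit version of the paper's ``closest integer'' statement. So the proposal is correct, by the same route, in exactly the regime where the paper's proof is.

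The one place you go beyond the paper is the boundary case $x_L^{\textup{SO}} = N$, and there your patch does not work. Assumption \ref{ass:mul_muh} does not force interiority: take $\mu_L = 1$, $S_0 = 100$, $N = 10$ (with $H$, $\gamma_H$, $\delta$ chosen to satisfy the $\mu_H$ bracket); then $\mu_L < S_0/3$ and $x_L^{\textup{SO}} = N = 10 \ge 2$, yet $S_0/(2\mu_L) - 1/2 = 49.5 > N$, so the stated bound on $c$ is false for every feasible $c \le N$. Your fallback inequality $S_0 < \mu_L(x_L^{\textup{eq}}+1)$ fails in precisely this situation, because when $\mu_L N \le S_0$ the equilibrium flow is capped at $x_L^{\textup{eq}} = N$ and no such strict upper inequality is available; there is nothing to salvage, since the claim itself is violated there. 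To be fair, the paper's own proof carries the identical hidden assumption (its assertion that $x_L^{\textup{SO}}$ is the integer closest to $S_0/(2\mu_L)$ also requires the continuous minimizer to lie at or below $N$), so relative to the paper you are not worse off; but your statement that the assumptions make this case ``vacuous or dispatched by a one-line check'' is incorrect --- what is actually needed is an extra standing hypothesis such as $S_0/(2\mu_L) \le N$ (enough agents) for the second claim of the lemma to hold.
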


\begin{proof}
The first chain of inequalities follows immediately from the assumption $c \leq d\le x^{\textup{eq}}_L$ and $\mu_L x_L^{\textup{eq}} \leq S_0$ by the equilibrium condition. Finally, since $x_L^{\textup{SO}}$ is either the closest
integer that to $\frac{S_0}{2\mu_L}$, one gets $c\ge
x_L^{\textup{SO}}\ge \frac{S_0}{2\mu_L}-\frac{1}{2}$.
\end{proof}

\begin{remark}
\label{rmrk:full_info} According to Definition \ref{def:am}, if any agent deviates the punishment is full
information. Specifically, we assume that the CP sends a recommendation of
risky to each agent with probability $\frac{S_0}{N\mu_{\beta}}$, so that the
expected cost on risky is exactly equal to the  fixed cost $S_0$ of the safe
road. The continuation cost after any deviation is therefore 
\begin{equation*}
u^*_{dev}=S_0+\delta S_0+\delta^2S_0+\hdots =\frac{1}{1-\delta} S_0,
\end{equation*}
Independent of the belief $\beta$.
\end{remark}

\begin{lemma}
\label{lemma:punish} The following statements hold:

\begin{enumerate}
\item $\bar{v} < \frac{1}{1-\delta}S_0$

\item $u^*([d, L, r_R]) \leq \frac{1}{1-\delta} S_0$

\item $u^*(1,L) :=\left( \frac{c-1 }{N-1}u^*([1,L,r_R]) +\frac{N-c }{N-1}%
u^*([1,L,r_S]) \right) \leq u^*([d, L, r_S]) \leq \frac{1}{1-\delta}S_0$

\item $u^*(c, L) :=\left( \frac{d - c}{N-c} u^*([c,L,r_R]) + \frac{N-d }{N-c}%
u^*([c,L,r_S])\right) \leq u^*([d, L, r_S]) \leq \frac{1}{1-\delta} S_0$
\end{enumerate}
\end{lemma}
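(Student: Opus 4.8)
The plan is to reduce all four statements to the single strict inequality of Part~1, $\bar v<\frac{S_0}{1-\delta}$, and then obtain the rest by elementary monotonicity. I first record the closed forms of Lemma~\ref{lemma:closed}: writing $A:=u^*([d,L,r_R])=\frac{\mu_L d+\delta\gamma_L\bar v}{1-\delta(1-\gamma_L)}$ and $B:=u^*([d,L,r_S])=\frac{S_0+\delta\gamma_L\bar v}{1-\delta(1-\gamma_L)}$, Lemma~\ref{lemma:mlxl} gives $\mu_L d\le S_0$, hence $B-A=\frac{S_0-\mu_L d}{1-\delta(1-\gamma_L)}\ge 0$, i.e. $A\le B$. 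A direct substitution gives the identity $\frac{S_0+\delta\gamma_L S_0/(1-\delta)}{1-\delta(1-\gamma_L)}=\frac{S_0}{1-\delta}$, so once $\bar v\le\frac{S_0}{1-\delta}$ is known, monotonicity of $A,B$ in $\bar v$ yields $A\le B\le\frac{S_0}{1-\delta}$ (this is Part~2 and the last bound of Parts~3--4). Since $u^*(c,L)$ is the convex combination $\frac{d-c}{N-c}A+\frac{N-d}{N-c}B$ of $A\le B$, the first inequality of Part~4 ($u^*(c,L)\le B$) is immediate; Part~3 follows the same way after noting that $u^*([1,L,r_R])$ and $A$ share the same continuation and differ only in the current stage cost, so $u^*([1,L,r_R])=A-\mu_L(d-c)\le A$, while $u^*([1,L,r_S])=S_0+\delta((1-\gamma_L)u^*(c,L)+\gamma_L\bar v)\le B$ (using $u^*(c,L)\le B$); thus $u^*(1,L)$ is a convex combination of quantities bounded by $B$.

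Next, for Part~1 I would set up the renewal recursion for $\bar v$. Using Lemma~\ref{v} and the transitions of $\pi_{c,d}$, after an $H$ the average agent pays the per-capita stage cost $\bar g_H=S_0+\frac{\mu_H-S_0}{N}$ and, upon a switch to $L$ (probability $\gamma_H$), inherits the per-capita continuation $W:=\frac{c}{N}u^*([1,L,r_R])+\frac{N-c}{N}u^*([1,L,r_S])$, the cost of the configuration with $c$ agents on risky and $N-c$ on safe. This gives $\bar v=\bar g_H+\delta[(1-\gamma_H)\bar v+\gamma_H W]$. Writing $\Delta_v:=\frac{S_0}{1-\delta}-\bar v$, $\Delta_W:=\frac{S_0}{1-\delta}-W$ and using $\frac{S_0}{1-\delta}=S_0+\delta\frac{S_0}{1-\delta}$, this rearranges to
\[
(1-\delta(1-\gamma_H))\,\Delta_v=\delta\gamma_H\,\Delta_W-\frac{\mu_H-S_0}{N},
\]
so $\Delta_v>0$ is equivalent to $\delta\gamma_H\,\Delta_W>\frac{\mu_H-S_0}{N}$.

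Third, I would bound $\Delta_W$. Solving the $L$-regime savings $\frac{S_0}{1-\delta}-u^*(\cdot)$ as a small linear system rooted at $A$ and $B$ shows that each is an affine function of $\Delta_v$ with non-negative intercept and slope $\beta=\partial W/\partial\bar v\in[0,1)$ (a discounted continuation weight). Hence $\Delta_W=\alpha+\beta\Delta_v$ with $\alpha=\Delta_W|_{\Delta_v=0}\ge 0$, and substituting collapses the whole value system into the scalar equation
\[
(1-\delta(1-\gamma_H)-\delta\gamma_H\beta)\,\Delta_v=\delta\gamma_H\,\alpha-\frac{\mu_H-S_0}{N},
\]
whose coefficient exceeds $1-\delta>0$; thus $\Delta_v$ has the sign of $\delta\gamma_H\alpha-\frac{\mu_H-S_0}{N}$. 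Evaluating the intercept, the first $L$-stage saving of each risky agent is exactly $S_0-\mu_L c$, giving $\alpha\ge\frac{c}{N}(S_0-\mu_L c)$. Finally, condition~2 of Proposition~\ref{prop:eq}, $g(c,\mu_L)\le g(2,\mu_L)$, rewrites as $c(S_0-\mu_L c)\ge 2(S_0-2\mu_L)$, and $2(S_0-2\mu_L)>\frac{S_0}{3}-\mu_L$ because $\mu_L<\frac{S_0}{3}$; together with $\mu_H-S_0\le\delta\gamma_H(\frac{S_0}{3}-\mu_L)$ from Assumption~\ref{ass:mul_muh} this gives $\delta\gamma_H\alpha\ge\frac{\delta\gamma_H}{N}c(S_0-\mu_L c)>\frac{\mu_H-S_0}{N}$ whenever $\gamma_H>0$, hence $\Delta_v>0$. (In the degenerate case $\gamma_H=0$ one has $\mu_H=S_0$ and $\pi_{c,d}$ started from $H$ coincides with full information, so the inequality holds with equality.)

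The hard part is the mutual recursion among $\bar v$, $W$, and the $L$-regime costs: one must avoid circular reasoning by collapsing the entire value system into a single scalar equation for $\Delta_v$ and then certifying that the $L$-phase savings intercept $\alpha$ strictly dominates the per-capita experimentation deficit $\frac{\mu_H-S_0}{N}$. The two ingredients that make this close — that condition~2 forces $c(S_0-\mu_L c)$ to be large, and that Assumption~\ref{ass:mul_muh} makes $\mu_H-S_0$ small, both scaled by the same factor $\delta\gamma_H$ — are exactly calibrated so that the comparison goes through with room to spare.
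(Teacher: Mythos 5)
Your proposal is correct, and all four parts rest on the same ingredients as the paper's proof (the closed forms of Lemma \ref{lemma:closed}, the bound $\mu_L c\le\mu_L d\le S_0$ of Lemma \ref{lemma:mlxl}, condition 2 of Proposition \ref{prop:eq}, and Assumption \ref{ass:mul_muh}), but your treatment of Part 1 is organized genuinely differently. The paper substitutes the explicit closed form of $\bar v$ (its equation \eqref{v_bar}) and grinds through a chain of equivalent inequalities, discarding the $d$-terms via $g(d,\mu_L)\le S_0N$ and reducing to $\mu_H<S_0+\delta\gamma_H\bigl(S_0N-g(c,\mu_L)\bigr)$, then bounds $S_0N-g(c,\mu_L)>S_0-\mu_L>\tfrac{S_0}{3}-\mu_L$ using $g(c,\mu_L)\le g(2,\mu_L)<g(1,\mu_L)$. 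You instead work with the deficits $\Delta_v=\tfrac{S_0}{1-\delta}-\bar v$ and $\Delta_W$, exploit the affine dependence of $W$ on $\bar v$ (slope $\beta=\tfrac{\delta\gamma_L}{1-\delta(1-\gamma_L)}<1$, intercept $\alpha=\tfrac{c}{N}(S_0-\mu_Lc)+\tfrac{\delta(1-\gamma_L)}{1-\delta(1-\gamma_L)}\tfrac{d}{N}(S_0-\mu_Ld)\ge\tfrac{c}{N}(S_0-\mu_Lc)$) to collapse the value system to one scalar equation, and read off the sign of $\Delta_v$ from $\delta\gamma_H\alpha-\tfrac{\mu_H-S_0}{N}$; note that $S_0N-g(c,\mu_L)=c(S_0-\mu_Lc)$, so your final numerical comparison is literally the paper's, reached with less algebra and with a cleaner interpretation (L-phase savings versus per-capita experimentation deficit). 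Your renewal recursion $\bar v=\bar g_H+\delta[(1-\gamma_H)\bar v+\gamma_H W]$ with $W=\tfrac{c}{N}u^*([1,L,r_R])+\tfrac{N-c}{N}u^*([1,L,r_S])$ is exactly the paper's recursion once $W$ is expanded, and your reductions of Parts 2--4 (monotonicity in $\bar v$, the identity $\tfrac{S_0+\delta\gamma_LS_0/(1-\delta)}{1-\delta(1-\gamma_L)}=\tfrac{S_0}{1-\delta}$, and convex-combination bounds by $B$) mirror the paper's arguments. One remark: you correctly flag that when $\gamma_H=0$ (so $\mu_H=S_0$ by Assumption \ref{ass:mul_muh}) Part 1 holds only with equality; the paper's proof has the same implicit requirement $\delta\gamma_H>0$ for strictness without saying so, and the case $\delta=0$ is degenerate in the identical way, so this is a point of care on your side rather than a gap.
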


\begin{proof}
\begin{enumerate}
\item 
\begin{align*}
\bar{v} &= \frac{1}{N} \mu_H + \frac{N-1}{N} S_0 + \delta \left( \gamma_H
\left( \frac{c}{N} \mu_L c + \frac{N-c}{N} S_0 \right. \right. \\
&\qquad \left. \left. + \delta \left( \gamma_L \bar{v} + (1-\gamma_L) \left(%
\frac{d}{N} u^*([d, L, r_R]) + \frac{N-d}{N} u^*([d, L, r_S]) \right)
\right) \right) + (1-\gamma_H) \bar{v} \right)
\end{align*}
where $u^*([d, L, r_R])$ and $u^*([d, L, r_S])$ are as defined in Lemma \ref%
{lemma:closed}. To simplify exposition recall that 
\begin{align*}
g(1, \mu_H) &:= \mu_H + (N-1) S_0 \\
g(c,\mu_L) &:= \mu_L c^2 + (N-c) S_0 \\
g(d,\mu_L) &:=\mu_L d^2 + (N-d) S_0
\end{align*}
then $\bar{v}$ can be written as 
\begin{align}  \label{v_bar}
\bar{v} = \frac{1-\delta(1-\gamma_L)}{N(1-\delta)(1-\delta(1-\gamma_H-%
\gamma_L))} \left( g(1, \mu_H) + \delta \gamma_H \left( g(c,\mu_L) + 
\frac{\delta(1-\gamma_L)}{1-\delta(1-\gamma_L)} g(d,\mu_L) \right)\right)
\end{align}
To show the inequality holds we first multiply both sides by $(1-\delta) N$ 
\begin{align*}
&\frac{1-\delta(1-\gamma_L)}{1-\delta(1-\gamma_L-\gamma_H)} \left( g(1, \mu_H) + \delta \gamma_H \left( g(c,\mu_L) + \frac{\delta (1-\gamma_L)}{%
1-\delta(1-\gamma_L)} g(d,\mu_L) \right) \right) < S_0 N \\
\iff &(1-\delta(1-\gamma_L))( g(1, \mu_H) + \delta \gamma_H g(c,\mu_L))
+ \delta^2 \gamma_H (1-\gamma_L) g(d,\mu_L) \\
&\qquad \qquad < (1-\delta(1-\gamma_L-\gamma_H)) S_0 N \\
\iff &(1-\delta(1-\gamma_L))(\mu_H + S_0 (N-1) + \delta \gamma_H g(c,\mu_L))
+ \delta^2 \gamma_H (1-\gamma_L) g(d,\mu_L) \\
&\qquad \qquad < (1-\delta(1-\gamma_L)) S_0 N + \delta \gamma_H S_0 N -
\delta^2 \gamma_H (1-\gamma_L) S_0 N + \delta^2 \gamma_H (1-\gamma_L) S_0 N
\\
\iff &(1-\delta(1-\gamma_L))(\mu_H + \delta \gamma_H g(c,\mu_L)) + \delta^2
\gamma_H (1-\gamma_L) g(d,\mu_L) \\
&\qquad \qquad < (1-\delta(1-\gamma_L)) (S_0 + \delta \gamma_H S_0 N) +
\delta^2 \gamma_H (1-\gamma_L) S_0 N.
\end{align*}
By Lemma \ref{lemma:mlxl}, $g(d,\mu_L) \leq S_0 N$, so it is sufficient to
show 
\begin{align*}
&\mu_H + \delta \gamma_H g(c,\mu_L) < S_0 + \delta \gamma_H S_0 N \iff \mu_H
< S_0 + \delta \gamma_H (S_0 N - g(c,\mu_L)).
\end{align*}
Note that, by assumption, $g(c,\mu_L) \le g(2,\mu_L)< g(1,\mu_L)$ hence 
\begin{equation*}
S_0 N - g(c,\mu_L) > S_0 N - (S_0(N-1) + \mu_L) = S_0- \mu_L > \frac{S_0}{3}%
- \mu_L . 
\end{equation*}
Thus, it is sufficient if 
\begin{equation*}
\mu_H \leq S_0 + \delta \gamma_H \left( \frac{S_0}{3} - \mu_L \right)
\end{equation*}
which holds by Assumption \ref{ass:mul_muh}.

\item The cost $u^*([d, L, r_R])$ is defined in Lemma \ref{lemma:closed}.
Then 
\begin{align*}
u^*([d,L, r_R])= &\frac{1}{1-\delta(1-\gamma_L)} (\mu_L d + \delta \gamma_L \bar{v}) \leq 
\frac{1}{1-\delta} S_0 \\
\iff &(1-\delta)(\mu_L d + \delta \gamma_L \bar{v}) \leq
(1-\delta(1-\gamma_L)) S_0 \\
\iff &(1-\delta)\mu_L d + (1-\delta)\delta \gamma_L \bar{v} \leq (1-\delta)
S_0 + \delta \gamma_L S_0
\end{align*}
and the result holds by part 1 of this lemma and by Lemma \ref{lemma:mlxl}.

\item Expanding the cost $u^*(1, L)$ 
\begin{align*}
u^*(1,L) &= \frac{c-1}{N-1} \mu_L c + \frac{N-c}{N-1} S_0 \\
&\qquad \qquad + \delta \left( (1-\gamma_L) \left( \frac{d-1}{N-1} u^*([d,
L, r_R]) + \frac{N-d}{N-1} u^*([d, L, r_S]) \right) + \gamma_L \bar{v}
\right).
\end{align*}
Note that $u^*([d, L, r_S])=S_0+\delta(1-\gamma_L)u^*([d, L,
r_S])+\delta\gamma_L \bar v.$ Then $u^*(1,L)\le u^*([d,L,r_S])$ holds if 
\begin{align*}
\frac{c-1}{N-1} \mu_L c + \frac{N-c}{N-1} S_0 + \frac{\delta(1-\gamma_L)}{%
1-\delta(1-\gamma_L)} \left( \frac{d-1}{N-1} \mu_L d + \frac{N-d}{N-1} S_0
\right) \leq S_0 + \frac{\delta(1-\gamma_L)}{1-\delta(1-\gamma_L)} S_0
\end{align*}
which is true by Lemma \ref{lemma:mlxl}.

For the second inequality, we need 
\begin{align}  \label{eq:so_v}
&\frac{1}{1-\delta(1-\gamma_L)} (S_0 + \delta \gamma_L \bar{v}) \leq \frac{1%
}{1-\delta} S_0 \\
\iff &(1-\delta)(S_0 + \delta \gamma_L \bar{v}) \leq (1-\delta+\delta
\gamma_L) S_0  \notag
\end{align}
and the result follows from the first point of the lemma.

\item By using Lemma \ref{lemma:closed}, we can expand $u^*(c, L)$ as 
\begin{align*}
u^*(c, L) &= \frac{d-c}{N-c} \left( \frac{1}{1-\delta (1-\gamma_L)}\left(
\mu_Ld+\delta \gamma_L \bar v\right) \right)+ \frac{N-d}{N-c} \left( \frac{1%
}{1-\delta (1-\gamma_L)}\left( S_0+\delta \gamma_L \bar v\right) \right) \\
&= \frac{1}{1-\delta(1-\gamma_L)} \left( \frac{d-c}{N-c} \mu_L d + \frac{N-d%
}{N-c} S_0 + \delta \gamma_L \bar{v} \right),
\end{align*}
It follows by Lemma \ref{lemma:mlxl}, that
\begin{equation*}
u^*(c, L) \leq \frac{1}{1-\delta(1-\gamma_L)} (S_0 + \delta \gamma_L \bar{v}%
)=u^*([d,L,r_S]) 
\end{equation*}
and the result follows from part 3 of this lemma.
\end{enumerate}
\end{proof}

\subsection{Proof of Proposition \protect\ref{prop:eq}}

\label{proof2}

As in the two stage case, we break the agents into \textbf{types based on
individual's information.} We detail the possible states agents reach under $%
\pi_{c,d}$ and show that $\xi_{\pi_{c,d}}$ is an equilibrium by showing that
at each state any deviation would lead to a higher expected cost. Each
equilibrium constraint in this setting is dynamic (i.e. consists of both
stage and continuation cost).

\subsubsection*{Type 1 ($\protect\beta_{t-1}^i = L$, $r_{t-1}^i = r_R$)}

Agents of this type took the risky road at time $t-1$ and observed L. Under $%
\pi_{c,d}$ these agents receive a recommendation to remain on risky, so $%
r_{t-1}^i = r_R$ for all cases below. We further distinguish different
states based on the observed flow on risky at time $t-1$. We present our
results in decreasing order of number of other agents that took risky at $t-1
$. We show that in each case an agent that observes $L$ follows the
recommendation and stays on risky.

\begin{itemize}
\item {$\mathbf{[x_{t-1},\beta^i_{t-1},r^i_{t-1}]=[d, L, r_R]}$:} this agent
is part of what we call the ``incentive compatible'' flow, which is the flow 
$d$. If everybody follows the recommendation then $x_t=d$. Agent $i$'s
expected costs under ${\pi_{c,d}}$ (if he follows or if he deviates) are
therefore

\begin{itemize}
\item[-] \underline{Following:} 
\begin{align}  \label{xll_L}
u^*([d, L, r_R]) &= \underbrace{\mu_L d}_{\substack{\textup{stage cost}\\ \textup{for $t$}}} +
\delta \big( (1-\gamma_L)\underbrace{ u^*([d, L, r_R])}_{\substack{ \textup{ongoing
cost} \\ \textup{ if $\theta_t=L$}}} + \gamma_L \underbrace{\mathbb{E} _{
\pi_{c,d}}[u^*([d, H, r^i])] }_{\textup{ongoing cost if $\theta_t=H$}} \big)%
,  \notag \\
&= \mu_L d + \delta \left( (1-\gamma_L){\ u^*([d, L, r_R])} + \gamma_L {\bar{v%
}} \right),
\end{align}

\item[-] \underline{Deviating to safe:} 
\begin{align}  \label{dev_xll_L}
\underbrace{S_0}_{\textup{stage cost for $t$}} + \underbrace{\delta
u^*_{dev}}_{\textup{ongoing cost}}=\ S_0+ \frac{\delta}{1-\delta} S_0
\end{align}
\end{itemize}

Thus, taking the risky road is an equilibrium if 
\begin{align}  \label{IC:xll_L}
&\mu_L d + \delta \left( (1-\gamma_L){\ u^*([d, L, r_R])} + \gamma_L {\bar{v}%
} \right) \leq S_0 + \frac{\delta}{1-\delta} S_0.
\end{align}

The inequality holds by Lemmas \ref{lemma:mlxl}, \ref{lemma:punish}-1, and %
\ref{lemma:punish}-2.

\item $\mathbf{[x_{t-1},\beta^i_{t-1},r^i_{t-1}]=[c, L, r_R]}$ this agent is
part of the ``exploiters" (agents that are sent to the risky road the first
period after the experimenter saw L). If everybody follows the
recommendation the flow on risky will be $x_t = d$.

By Lemma \ref{lemma:closed}, the costs are the same as in state $[d, L, r_R]$
(as given in (\ref{xll_L}) and (\ref{dev_xll_L})). The equilibrium
constraint is therefore identical to (\ref{IC:xll_L}) and is satisfied.

\item $\mathbf{[x_{t-1},\beta^i_{t-1},r^i_{t-1}]=[1, L, r_R]}$ this agent is
the current experimenter and just saw the road change to ``low". If
everybody follows the recommendation the next flow on risky would be $x_t = c
$. His expected costs are

\begin{itemize}
\item[-] \underline{Following:} 
\begin{align}  \label{eq:cost_1_L_empty}
u^*([1, L, r_R]) &={\ \mu_L c}+ \delta ((1-\gamma_L) {u^*([c, L, r_R])} +
\gamma_L \bar{v}).
\end{align}

\item[-] \underline{Deviating to safe:} 
\begin{align*}
{S_0} + \delta {u^*_{\textup{dev}}} = S_0+ \frac{\delta}{1-\delta} S_0 
\end{align*}
\end{itemize}

The agent follows the recommendation if 
\begin{equation*}
\mu_L c + \delta( (1-\gamma_L) u^*([c, L, r_R]) + \gamma_L \bar{v}) \leq S_0
+ \frac{\delta}{1-\delta} S_0. 
\end{equation*}

This inequality holds by Lemmas \ref{lemma:closed}, \ref{lemma:mlxl}, \ref%
{lemma:punish}-1, and \ref{lemma:punish}-2.
\end{itemize}

\subsubsection*{Type 2 ($\protect\beta_{t-1}^i = H$)}

These agents took risky and saw $H$, this means a new experimenter will be
chosen. They each will receive a recommendation of either $r_R$ or $r_S$.
Since the state is $H$ the flow on risky in the next round is $x_t=1$ no
matter the previous flow and we can divide the states by recommendations

\begin{itemize}
\item $r^i_{t-1}=r_S$. He is not the experimenter.

\begin{itemize}
\item[-] \underline{Following the recommendation and taking safe} 
\begin{align}
u^*([-, H, r_S]) = S_0 + \delta(\gamma_H u^*(1,L) + (1-\gamma_H) \bar{v})
\end{align}

\item[-] \underline{Deviating} 
\begin{align*}
2 \mu_H + \frac{\delta}{1-\delta} S_0
\end{align*}
\end{itemize}

Thus, the incentive constraint holds by Lemma \ref{lemma:punish}-1, \ref%
{lemma:punish}-3, and Assumption \ref{ass:mul_muh}.

\item \textbf{$r^i_{t-1}=r_R$}. This agent has been selected to be the
experimenter. Recall that if the agent deviates then full information is
provided from then on and the cost is $S_0$ at every round (see Remark \ref%
{rmrk:full_info}). The agent's expected costs are

\begin{itemize}
\item[-] \underline{Following the recommendation and taking risky} 
\begin{align}  \label{eq:cost_1_h_r}
u^*([x_{t-1}, H, r_R]) = \mu_H + \delta(\gamma_H u^*([1,L, r_R]) +
(1-\gamma_H) \bar{v})
\end{align}

\item[-] \underline{Deviating} 
\begin{align*}
S_0 + \frac{\delta}{1-\delta} S_0
\end{align*}
\end{itemize}

The incentive constraint can be written 
\begin{align*}
\mu_H + \delta (\gamma_H u^*([1, L, r_R]) + (1-\gamma_H) \bar{v}) \leq S_0 +
\gamma_H \frac{\delta}{1-\delta} S_0 + (1-\gamma_H) \frac{\delta}{1-\delta}
S_0
\end{align*}
and by Lemma \ref{lemma:punish}-1 it suffices to show 
\begin{align*}
&\mu_H + \delta \gamma_H u^*([1, L, r_R]) \leq S_0 + \gamma_H \frac{\delta}{%
1-\delta} S_0.
\end{align*}
Note that 
\begin{align*}
u^*([1, L, r_R]) = \mu_L c + \delta ( (1-\gamma_L) u^*([c, L, r_R]) +
\gamma_L \bar{v} ) \leq \mu_L c + \frac{\delta}{1-\delta} S_0
\end{align*}
where the upper bound follows from Lemma \ref{lemma:closed}-1, \ref%
{lemma:punish}-1 and \ref{lemma:punish}-2. Thus, it is sufficient if 
\begin{align*}
\mu_H &\leq S_0 + \delta \gamma_H (S_0 - \mu_L c) \\
&\leq S_0 + \delta \gamma_H \left( S_0 - \mu_L \left( \frac{S_0}{2 \mu_L} - 
\frac{1}{2} \right) \right) \\
&= S_0 + \delta \gamma_H \frac{1}{2}\left(S_0 + \mu_L \right)
\end{align*}
where the second inequality follows by Lemma \ref{lemma:mlxl}. The result
holds by \hbox{Assumption \ref{ass:mul_muh}.}
\end{itemize}

\subsubsection*{Type 3 ($\protect\beta_{t-1}^i = U, r_{t-1}^i = r_S$)}

These agents took safe at time $t-1$ and received a recommendation to remain
on safe, $r_S$. Note that since these agents took safe at time $t-1$ they do
not know $\theta_{t-1}$. Receiving a recommendation of safe either means
that $\theta_{t-1} = L$ (and the agent continues to be part of the flow on
the safe road) or that $\theta_{t-1} = H$ (a new cycle has begun but the
agent is not the new experimenter). For simplicity we denote the probability
of the first event by $p_{x,S}=\mathbb{P}(\theta_{t-1}=L\mid x_{t-1}=x,
r_{t-1}^i = r_S)$. Note that this probability depends on the flow $x$
observed at $t-1$. The following lemma relates these probabilities for the
cases $x_{t-1}=d$ and $x_{t-1}=c$.

\begin{lemma}
\label{lem:prob} The following statements hold

\begin{enumerate}
\item $p_{d,S}=\frac{1-\gamma_L}{1-\gamma_L+ \gamma_L \frac{N-1}{N}}$

\item $p_{c,S} =\frac{(1-\gamma_L) \frac{N-d}{N-c}}{(1-\gamma_L) \frac{N-d}{%
N-c} + \gamma_L \frac{N-1}{N} }$

\item $p_{1,S}=\frac{\gamma_H \frac{N-c}{N-1}}{(1-\gamma_H) \frac{N-1}{N} +
\gamma_H \frac{N-c}{N-1}}$

\item $p_{c,S} \le p_{d,S}$
\end{enumerate}
\end{lemma}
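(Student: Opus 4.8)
The plan is to obtain each of the three probabilities $p_{d,S}, p_{c,S}, p_{1,S}$ by a direct application of Bayes' rule, and then to read off the ordering $p_{c,S}\le p_{d,S}$ from the algebraic form of the answers. The starting observation is that, under $\pi_{c,d}$ with $a=b=1$ and $1<c\le d$, the flow seen at time $t-1$ is a deterministic function of $(\theta_{t-3},\theta_{t-2})$: a flow of $1$ forces $\theta_{t-2}=H$ (since $a=b=1$), a flow of $c$ forces $[\theta_{t-3},\theta_{t-2}]=[H,L]$, and a flow of $d$ forces $[\theta_{t-3},\theta_{t-2}]=[L,L]$; in the last two cases $\theta_{t-2}=L$. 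Thus conditioning on $x_{t-1}$ pins down $\theta_{t-2}$, and the Markov transition then supplies the prior on $\theta_{t-1}$: for $x_{t-1}\in\{c,d\}$ we have $\mathbb{P}(\theta_{t-1}=L)=1-\gamma_L$ and $\mathbb{P}(\theta_{t-1}=H)=\gamma_L$, while for $x_{t-1}=1$ we have $\mathbb{P}(\theta_{t-1}=L)=\gamma_H$ and $\mathbb{P}(\theta_{t-1}=H)=1-\gamma_H$.

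The crux of the argument, and the step requiring the most care, is computing the likelihood $\mathbb{P}(r_{t-1}^i=r_S\mid x_{t-1},\theta_{t-1})$ for an agent who was on safe at $t-1$, since this is where the mechanics of the recommendation scheme enter. I would split on the value of $\theta_{t-1}$. If $\theta_{t-1}=H$, then by Definition \ref{def:am} a fresh experimenter is drawn uniformly from all $N$ agents, so any particular agent receives $r_S$ with probability $\tfrac{N-1}{N}$, independently of $x_{t-1}$. If instead $\theta_{t-1}=L$, the flow at time $t$ equals $\pi(\theta_{t-2},L)$ (namely $c$ when $\theta_{t-2}=H$, i.e. $x_{t-1}=1$, and $d$ when $\theta_{t-2}=L$, i.e. $x_{t-1}\in\{c,d\}$); the $x_{t-1}$ agents already on risky all receive $r_R$, and the remaining $\pi(\theta_{t-2},L)-x_{t-1}$ recommendations of risky are spread uniformly over the $N-x_{t-1}$ safe agents, so a safe agent receives $r_S$ with probability $\tfrac{N-\pi(\theta_{t-2},L)}{N-x_{t-1}}$. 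Evaluating this gives likelihood $1$ for $x_{t-1}=d$, likelihood $\tfrac{N-d}{N-c}$ for $x_{t-1}=c$, and likelihood $\tfrac{N-c}{N-1}$ for $x_{t-1}=1$.

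Substituting the priors and likelihoods into Bayes' rule in the form $p_{x,S}=\tfrac{\mathbb{P}(r_S\mid\theta_{t-1}=L)\,\mathbb{P}(\theta_{t-1}=L)}{\mathbb{P}(r_S\mid\theta_{t-1}=L)\,\mathbb{P}(\theta_{t-1}=L)+\mathbb{P}(r_S\mid\theta_{t-1}=H)\,\mathbb{P}(\theta_{t-1}=H)}$ then yields statements 1--3 after cancelling the common denominators. For statement 4, I observe that both $p_{d,S}$ and $p_{c,S}$ have the shape $\tfrac{A}{A+B}$ with the same $B=\gamma_L\tfrac{N-1}{N}$ and with numerators $A_d=1-\gamma_L$ and $A_c=(1-\gamma_L)\tfrac{N-d}{N-c}$. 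Since $c\le d$ gives $\tfrac{N-d}{N-c}\le 1$, we have $A_c\le A_d$; because $\tfrac{A}{A+B}$ is increasing in $A$ for $A,B\ge 0$, the inequality $p_{c,S}\le p_{d,S}$ follows at once (with equality when $c=d$). The only genuine obstacle is the careful accounting of the recommendation probabilities in the second paragraph, in particular tracking which agents are forced to $r_R$ versus randomized; everything else is routine conditioning and monotonicity.
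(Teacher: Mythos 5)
Your proposal is correct and follows essentially the same route as the paper's proof: infer $\theta_{t-2}$ from the observed flow to get the Markov prior on $\theta_{t-1}$, compute the likelihoods $\mathbb{P}(r_S\mid\theta_{t-1},x_{t-1})$ from the mechanics of Definition \ref{def:am} (namely $1$, $\tfrac{N-d}{N-c}$, $\tfrac{N-c}{N-1}$ for $\theta_{t-1}=L$ and $\tfrac{N-1}{N}$ for $\theta_{t-1}=H$), apply Bayes' rule, and deduce part 4 from monotonicity of $A/(A+B)$ in $A$ with a common $B=\gamma_L\tfrac{N-1}{N}$. If anything, your derivation of the likelihoods via the general formula $\tfrac{N-\pi(\theta_{t-2},L)}{N-x_{t-1}}$ is spelled out more explicitly than in the paper, which simply states their values.
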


\begin{proof}
\begin{enumerate}
\item Note that the agent can infer $\theta_{t-2} = L$ from the flow being $d
$ on the risky road at time $t-1$. Therefore, by Bayes rule 
\begin{align}  \label{eq:def_tildep}
p_{d,S} &:= \mathbb{P}(\theta_{t-1} = L \mid [d, U, r_S]) \\
&= \frac{(1-\gamma_L) \mathbb{P}(r_S \mid \theta_{t-1} = L, x_{t-1} = d)}{%
(1-\gamma_L) \mathbb{P}(r_S \mid \theta_{t-1} = L, x_{t-1} = d) + \gamma_L 
\mathbb{P}(r_S \mid \theta_{t-1} = H, x_{t-1} = d)}  \notag \\
&= \frac{1-\gamma_L}{1-\gamma_L+ \gamma_L \frac{N-1}{N}}.
\end{align}

\item Note that the agent can infer $\theta_{t-2} = L$ from the flow being $c
$ on the risky road at time $t-1$. By Bayes rule 
\begin{align}  \label{eq:def_hatp}
p_{c,S} &:= \mathbb{P}(\theta_{t-1} = L \mid [c, U, r_S]) \\
&= \frac{(1-\gamma_L) \mathbb{P}(r_S \mid \theta_{t-1} = L, x_{t-1} = c)}{%
(1-\gamma_L) \mathbb{P}(r_S \mid \theta_{t-1} = L, x_{t-1} = c) + \gamma_L 
\mathbb{P}(r_S \mid \theta_{t-1} = H, x_{t-1} = c)}  \notag \\
&= \frac{(1-\gamma_L) \frac{N-d}{N-c}}{(1-\gamma_L) \frac{N-d}{N-c} +
\gamma_L \frac{N-1}{N} }.  \notag
\end{align}

\item Note that the agent can infer $\theta_{t-2} = H$ from the flow being $1
$ on the risky road at time $t-1$. By Bayes rule 
\begin{align*}
p_{1,S} &:= \mathbb{P}(\theta_{t-1} = L \mid [1, U, r_S]) \\
&= \frac{\gamma_H \mathbb{P}(r_S \mid \theta_{t-1} = L, x_{t-1} = 1)}{%
(1-\gamma_H) \mathbb{P}(r_S \mid \theta_{t-1} = H, x_{t-1} = 1) + \gamma_H 
\mathbb{P}(r_S \mid \theta_{t-1} = L, x_{t-1} = 1)} \\
&= \frac{\gamma_H \frac{N-c}{N-1}}{\gamma_H \frac{N-c}{N-1} + (1-\gamma_H) 
\frac{N-1}{N}}.
\end{align*}

\item For positive $\alpha, \beta, \eta$ 
\begin{equation*}
\frac{\alpha}{\alpha+\beta} \geq \frac{\eta}{\eta+\beta} \iff \alpha \geq
\eta. 
\end{equation*}
Let $\alpha := 1-\gamma_L$, $\beta:= \gamma_L\frac{N-1}{N}$, and $\eta :=
(1-\gamma_L)\frac{N-d}{N-c}$. Then $p_{d,S} = \frac{\alpha}{\alpha+\beta}$
and $p_{c,S} = \frac{\eta}{\eta+\beta}$. The fact that $d \geq c$ implies $%
\alpha\geq \eta$ and therefore $p_{d,S}\ge p_{c,S} $.
\end{enumerate}
\end{proof}

The possible states for agents of Type 2 are

\begin{itemize}
\item $\mathbf{[x_{t-1},\beta^i_{t-1},r^i_{t-1}]=[d, U, r_S]}$: these agents
were on safe at time $t-1$, observed flow $d$ and received a recommendation
to remain on safe. Their expected costs are 

\begin{itemize}
\item[-] \underline{Following the recommendation and taking safe} 
\begin{align*}
u^*([d, U, r_S]) = p_{d,S} \underbrace{u^*([d,L,r_S])}_{\textup{cost if }
\theta_{t-1}=L} + (1-p_{d,S}) \underbrace{u^*([-,H,r_S])}_{\textup{cost if }
\theta_{t-1}=H}.
\end{align*}

\item[-] \underline{Deviating to risky} 
\begin{align*}
p_{d,S} \mu_L (d+1) + (1-p_{d,S}) 2 \mu_H + \frac{\delta}{1-\delta} S_0
\end{align*}
\end{itemize}

This inequality holds by assumption \eqref{eq:def_xll_simple}.

\item $\mathbf{[x_{t-1},\beta^i_{t-1},r^i_{t-1}]=[c, U, r_S]}$: The agent's
expected costs are

\begin{itemize}
\item[-] \underline{Following the recommendation and taking safe} 
\begin{align*}
u^*([c,U, r_S]) &= p_{c,S} \underbrace{u^*([d,L,r_S])}_{F_1} + (1-p_{c,S}) 
\underbrace{u^*([-,H,r_S])}_{F_2}.
\end{align*}

\item[-] \underline{Deviating} 
\begin{align*}
&p_{c,S} \underbrace{\left( \mu_L (d+1) + \frac{\delta}{1-\delta} S_0 \right)%
}_{D_1} + (1-p_{c,S}) \underbrace{\left( 2 \mu_H + \frac{\delta}{1-\delta}
S_0 \right)}_{D_2}
\end{align*}
\end{itemize}

By inspection, this case is similar to the previous case ($[d, U, r_S]$).
The only difference is the beliefs ($p_{d,S}$ for the previous case, and $%
p_{c,S}$ for this case). The incentive compatibility constraint %
\eqref{eq:def_xll_simple} for the previous case can be written compactly as 
\begin{equation*}
p_{d,S} F_1+ (1-p_{d,S}) F_2\le p_{d,S} D_1+ (1-p_{d,S}) D_2
\end{equation*}
or equivalently 
\begin{equation*}
p_{d,S} (F_1-D_1)+ (1-p_{d,S}) (F_2-D_2)\le 0. 
\end{equation*}
We next show that $(F_2-D_2)\le(F_1-D_1)$. Since, by Lemma \ref{lem:prob} $%
p_{c,S} \leq p_{d,S}$, by properties of convex combinations, this suffices to
show that 
\begin{equation*}
p_{c,S} (F_1-D_1)+ (1-p_{c,S}) (F_2-D_2)\le p_{d,S} (F_1-D_1)+ (1-p_{d,S})
(F_2-D_2)\le 0,
\end{equation*}
as desired. To show $(F_2-D_2)\le(F_1-D_1)$ we equivalently show $%
(F_2+D_1)\le(F_1+D_2)$. Note 
\begin{align*}
F_2 &= u^*([-,H,r_S]) = S_0 + \delta\left(\gamma_H u^*(1,L)+(1-\gamma_H)\bar
v\right) \\
F_1 &= u^*([d,L,r_S]) = S_0 + \delta((1-\gamma_L )u^*([d, L, r_S])+\gamma_L
\bar v).
\end{align*}
Hence $(F_2+D_1)\le(F_1+D_2)$ can be rewritten as 
\begin{align*}
&S_0 +\delta ( (1-\gamma_H) \bar{v} + \gamma_H u^*(1,L)) + \mu_L (d+1) + 
\frac{\delta}{1-\delta} S_0 \\
&\le S_0 + \delta( \gamma_L \bar{v} + (1-\gamma_L )u^*([d, L, r_S])) + 2
\mu_H + \frac{\delta}{1-\delta} S_0 \\
\iff & \delta(1-\gamma_L-\gamma_H) \bar{v} + \mu_L (d+1) \leq \delta
((1-\gamma_L) u^*([d, L, r_S]) - \gamma_H u^*(1,L)) + 2 \mu_H.
\end{align*}
By Lemma \ref{lemma:mlxl} and by Assumption \ref{ass:mul_muh}, $\mu_L d +
\mu_L \leq S_0 + S_0 \leq 2 \mu_H$ and it is sufficient to show 
\begin{align*}
(1-\gamma_L-\gamma_H) \bar{v} \leq (1-\gamma_L) u^*([d, L, r_S]) - \gamma_H
u^*(1,L).
\end{align*}
The right hand side can be lower bounded using Lemma \ref{lemma:punish}-3 by 
\begin{equation*}
(1-\gamma_L-\gamma_H) u^*([d, L, r_S]) 
\end{equation*}
and incentive compatibility holds if 
\begin{align*}
&\bar{v} \leq u^*([d, L, r_S]) \\
\iff &\bar{v} \leq \frac{1}{1-\delta(1-\gamma_L)} (S_0 + \delta \gamma_L 
\bar{v}) \\
\iff &(1-\delta) \bar{v} \leq S_0
\end{align*}
which holds by Lemma \ref{lemma:punish}-1.

\item {$\mathbf{[x_{t-1},\beta^i_{t-1},r^i_{t-1}]=[1, U, r_S]}$} These
agents were on safe at time $t-1$ and observed $x_{t-1} =1$, thus they infer
that $\theta_{t-2} = H$. The agent's expected costs are

\begin{itemize}
\item[-] \underline{Following the recommendation of safe} 
\begin{align*}
u^*([1, U, r_S]) &= p_{1,S} u^*([1,L,r_S])+ (1-p_{1,S}) u^*([-,H,r_S]).
\end{align*}

\item[-] \underline{Deviating} 
\begin{align*}
&p_{1,S} \left( \mu_L (c+1) \right) + (1-p_{1,S}) 2 \mu_H + \frac{\delta}{%
1-\delta} S_0.
\end{align*}
\end{itemize}

The incentive compatibility constraint can thus be written as 
\begin{align*}
p_{1,S} &\left( S_0 + \delta ((1-\gamma_L) u^*(c,L)+ \gamma_L \bar{v})
\right) + (1-p_{1,S}) (S_0 + \delta (\gamma_H u^*(1,L) + (1-\gamma_H) \bar{v}%
)) \\
&\leq p_{1,S} \left( \mu_L (c+1) \right) + (1-p_{1,S}) \left( 2 \mu_H
\right) + \frac{\delta}{1-\delta} S_0.
\end{align*}
where $u^*(c,L)$ and $u^*(1,L)$ are as defined in Lemma \ref{lemma:punish}.
It follows from Lemma \ref{lemma:punish}-1, \ref{lemma:punish}-3, and \ref%
{lemma:punish}-4 that 
\begin{equation*}
\delta \left( p_{1,S} ((1-\gamma_L) u^*(c,L) + \gamma_L \bar{v}) +
(1-p_{1,S})(\gamma_H u^*(1,L) + (1-\gamma_H) \bar{v}) \right) \leq \frac{%
\delta}{1-\delta} S_0.
\end{equation*}

The result is then proven by the following lemma.

\begin{lemma}
$S_0 \leq p_{1,S} (\mu_L (c+1)) + (1-p_{1,S})(2 \mu_H).$
\end{lemma}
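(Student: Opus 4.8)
The plan is to prove the inequality by substituting the explicit value of $p_{1,S}$ from Lemma \ref{lem:prob}-3, clearing the (strictly positive) denominator, and reducing everything to a single sign condition. Writing $D := (1-\gamma_H)\frac{N-1}{N} + \gamma_H\frac{N-c}{N-1}$, so that $p_{1,S} = \gamma_H\frac{N-c}{N-1}\big/D$ and $1-p_{1,S} = (1-\gamma_H)\frac{N-1}{N}\big/D$, I would multiply the target inequality $S_0 \le p_{1,S}\mu_L(c+1) + (1-p_{1,S})2\mu_H$ through by $D>0$ and rearrange; a short computation shows it is equivalent to
\begin{equation*}
(1-\gamma_H)\tfrac{N-1}{N}\,(2\mu_H - S_0) + \gamma_H\tfrac{N-c}{N-1}\,\bigl(\mu_L(c+1) - S_0\bigr) \ge 0.
\end{equation*}

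Next I would treat the two terms separately. The first is nonnegative because $\mu_H \ge S_0$ by Assumption \ref{ass:mul_muh}, giving $2\mu_H - S_0 \ge S_0 > 0$. If $\mu_L(c+1) \ge S_0$ the second term is also nonnegative and the claim is immediate, so the only genuine work is the case $\mu_L(c+1) < S_0$, where the second term is negative and one must show the favorable branch dominates, namely
\begin{equation*}
(1-\gamma_H)\tfrac{N-1}{N}\,(2\mu_H - S_0) \ge \gamma_H\tfrac{N-c}{N-1}\,\bigl(S_0 - \mu_L(c+1)\bigr).
\end{equation*}

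I would then bound each side by its worst case using the available structure. From Assumption \ref{ass:gamma}, $1-\gamma_H \ge \tfrac12$ and $\gamma_H \le \tfrac12$; from $\mu_H \ge S_0$, $2\mu_H - S_0 \ge S_0$; from $c \ge 1$, $\tfrac{N-c}{N-1} \le 1$; and, crucially, from the lower bound $c \ge \frac{S_0}{2\mu_L} - \tfrac12$ in Lemma \ref{lemma:mlxl}, $\mu_L(c+1) = \mu_L c + \mu_L \ge \frac{S_0}{2} + \frac{\mu_L}{2} > \frac{S_0}{2}$, hence $S_0 - \mu_L(c+1) < \frac{S_0}{2}$. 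These give a left side at least $\tfrac12\cdot\frac{N-1}{N}\,S_0$ and a right side at most $\tfrac12\cdot 1\cdot\frac{S_0}{2} = \frac{S_0}{4}$, so it suffices that $\frac{N-1}{2N} \ge \frac14$, which is equivalent to $N \ge 2$ and therefore holds throughout (recall $N \ge d \ge c \ge 2$).

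The main obstacle is conceptual rather than computational: one cannot simply check the endpoints $p_{1,S}\in\{0,1\}$, since at $p_{1,S}=1$ the right side would collapse to $\mu_L(c+1)$, which Lemma \ref{lemma:mlxl} only forces to exceed $S_0/2$, not $S_0$. The inequality is rescued by the fact that $p_{1,S}$ is bounded away from $1$: an agent observing $x_{t-1}=1$ can infer only $\theta_{t-2}=H$, so the posterior weight on $\theta_{t-1}=L$ is governed by $\gamma_H \le \tfrac12$, exactly the content of Assumption \ref{ass:gamma}. Balancing this bound on $p_{1,S}$ against the guaranteed gap $2\mu_H - S_0 \ge S_0$ from the $H$-branch is the crux, and the small algebraic margin from the $c \ge \frac{S_0}{2\mu_L}-\tfrac12$ bound is precisely what closes it.
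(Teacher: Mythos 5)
Your proposal is correct and follows essentially the same route as the paper's proof: both clear the denominator of $p_{1,S}$, invoke the bound $c \ge \frac{S_0}{2\mu_L}-\frac12$ from Lemma \ref{lemma:mlxl} to get $\mu_L(c+1) \ge \frac{S_0}{2}$, use $\mu_H \ge S_0$ from Assumption \ref{ass:mul_muh} and $\gamma_H \le \frac12$ from Assumption \ref{ass:gamma}, and close with a condition that reduces to $N \ge 2$. The only difference is presentational (your explicit case split on the sign of $\mu_L(c+1)-S_0$ versus the paper's direct substitution of the lower bound $\frac{S_0}{2}$), which changes nothing of substance.
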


\begin{proof}
Plugging in the expression of $p_{1,S}$ derived in Lemma \ref{lem:prob} and
rearranging 
\begin{align*}
\left( (1-\gamma_H) \frac{N-1}{N} + \gamma_H \frac{N-c}{N-1} \right) S_0
\leq \gamma_H \frac{N-c}{N-1} (\mu_L (c+1)) + (1-\gamma_H) \frac{N-1}{N} 2
\mu_H.
\end{align*}
Since $c+1 \geq \frac{S_0}{2 \mu_L} $ (by Lemma \ref{lemma:mlxl}) the
inequality above holds if 
\begin{align*}
&\left( (1-\gamma_H) \frac{N-1}{N} + \gamma_H \frac{N-c}{N-1} \right) S_0
\leq \gamma_H \frac{N-c}{N-1} \frac{S_0}{2} + (1-\gamma_H) \frac{N-1}{N} 2
\mu_H \\
\iff &\left((1-\gamma_H) \frac{N-1}{N} \right) S_0 + \gamma_H \frac{N-c}{N-1}
\frac{S_0}{2} \leq (1-\gamma_H) \frac{N-1}{N} 2 \mu_H.
\end{align*}
By Assumption \ref{ass:mul_muh} $S_0 \leq \mu_H$, so the result holds if 
\begin{equation*}
\gamma_H \frac{N-c}{N-1} \frac{1}{2} \leq (1-\gamma_H) \frac{N-1}{N}. 
\end{equation*}
The last inequality holds since, for all $N\ge2$, 
\begin{equation*}
\gamma_H \frac{N-c}{N-1} \frac{1}{2} < \frac{\gamma_H}{2} \le \frac{%
(1-\gamma_H)}{2} \le (1-\gamma_H)\frac{N-1}{N},
\end{equation*}
where we used $c>1$ and $\gamma_H \leq (1-\gamma_H)$ (since $\gamma_H\le
\frac12$).
\end{proof}
\end{itemize}

\subsubsection*{\textbf{Type 4 ($\protect\beta_t^i = U, r_t^i = r_R$)}}

These agents took safe at time $t-1$ and received a recommendation to take
risky, $r_R$. We show that each of these cases is comparable to one that has
already been detailed. Thus, following the recommendation is optimal. The
possible states of an agent of type 4 are:

\begin{itemize}
\item $\mathbf{[x_{t-1},\beta^i_{t-1},r^i_{t-1}]=[d, U, r_R]}$: receiving a
recommendation of risky means the agent is an experimenter. This is
equivalent to the case $[-, H, r_R]$ as this agent can infer that $%
\theta_{t-1} = H$ since, under $\pi_{c,d}$, the CP does not send a $r_R$ if
the flow is $d$ and $\theta = L$. Specifically, by Bayes rule 
\begin{align*}
&\mathbb{P}(\theta_{t-1} = L \mid [d, U, r_R]) \\[0.2cm]
&\quad = \frac{(1-\gamma_L)\mathbb{P}(r_R \mid \theta_{t-1} = L, x_{t-1} = d)%
}{(1-\gamma_L)\mathbb{P}(r_R \mid \theta_{t-1} = L, x_{t-1} = d) + \gamma_L 
\mathbb{P}(r_R \mid \theta_{t-1} = H, x_{t-1} = d) } = 0.
\end{align*}
Since it is a best response to take risky when the state is $[-, H, r_R]$,
it is also a best response to take risky when the state is $[d, U, r_R]$.

\item $\mathbf{[x_{t-1},\beta^i_{t-1},r^i_{t-1}]=[c, U, r_R]}$: receiving a
recommendation of risky could mean an agent is part of $d$ (if $\theta_{t-1}
= L$) \textbf{or} is the new experimenter (if $\theta_{t-1} = H$). Denote by 
$p_{c,R}$ agent's i belief that $\theta_{t-1} = L$. His expected cost of
following the recommendation is therefore 
\begin{align*}
u^*([c,U, r_R]) &= p_{c,R} u^*([d, L, r_R]) + (1-p_{c,R}) u^*([1, H, r_R]).
\end{align*}
The cost of deviating is the convex combination of deviating under the two
possible states to safe. In both of these cases the best response is to take
risky (see the states $[d, L, r_R]$ and $[1, H, r_R]$ discussed earlier).
Thus, it is a best response for the agent in this state to take risky.

\item $\mathbf{[x_{t-1},\beta^i_{t-1},r^i_{t-1}]=[1, U, r_R]}$: receiving a
recommendation of risky means that the agent is either a part of the
exploiter flow $c$ (if $\theta_{t-1}=L$) \textbf{or} has been selected to be
the next experimenter (if $\theta_{t-1}=H$). Denote by $p_{1,R}$ agent's i
belief that $\theta_{t-1} = L$. His expected cost of following the
recommendation and taking risky can then be written as 
\begin{align*}
u^*([1, U, r_R]) = p_{1,R} u^*([1, L, r_R]) + (1-p_{1,R}) u^*([1, H, r_R]).
\end{align*}
Similarly, the cost of deviating is a convex combination of deviating to
safe from state $[1, L, r_R]$ and deviating to safe from state $[1, H, r_R]$%
. In both of these cases the best response is to take risky. Thus, it is a
best response for the agent in this state to take risky.
\end{itemize}

\subsubsection*{\textbf{Type 5 (off the equilibrium path)}}

Off the equilibrium path the CP provides full information, that is, sends
recommendations of risky to each agent with probability $\frac{S_0}{%
\mu_{\beta} N}$. Note that in the full information regime every agent has the
same belief $\beta$ on the state of the risky road (equal to the belief of
the CP). Moreover the continuation cost for every agent is the same, no
matter his action. Since following the received recommendation is myopically
optimal no agent has an incentive to deviate. At each point every agent's
total expected cost is $1/(1-\delta) S_0$.

\subsection{Proof of Corollary \protect\ref{corollary:star_ic}}

The corollary follows from the following lemma.

\begin{lemma}
\label{lemma:xll} Let $\bar{x}_{LL}$ be the smallest integer $d$ that
satisfies \eqref{eq:def_xll_simple} when $c=x_{L}^{\textup{SO}}$, then 
\begin{equation}\label{xll_xeq}
\bar{x}_{LL} \le x^{\textup{eq}}_L.
\end{equation}
\end{lemma}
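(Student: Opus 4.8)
The plan is to prove the statement by exhibiting feasibility: I will show directly that the value $d=x_L^{\textup{eq}}$ satisfies the incentive‑compatibility constraint \eqref{eq:def_xll_simple} when $c=x_L^{\textup{SO}}$. Since $\bar{x}_{LL}$ is by definition the \emph{smallest} integer $d$ for which \eqref{eq:def_xll_simple} holds (with $c=x_L^{\textup{SO}}$), producing a single feasible value immediately yields $\bar{x}_{LL}\le x_L^{\textup{eq}}$ and simultaneously certifies that $\bar{x}_{LL}$ is well defined.

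First I would record two preliminary facts. The pair $(c,d)=(x_L^{\textup{SO}},x_L^{\textup{eq}})$ must lie in the admissible range, so I verify $x_L^{\textup{SO}}\le x_L^{\textup{eq}}$: because $x_L^{\textup{SO}}$ minimizes $g(\cdot,\mu_L)$, comparing $g(x_L^{\textup{SO}},\mu_L)$ with $g(x_L^{\textup{SO}}-1,\mu_L)$ gives $\mu_L(2x_L^{\textup{SO}}-1)\le S_0$, and since $x_L^{\textup{SO}}\ge 2$ by Proposition \ref{prop:so} this forces $\mu_L x_L^{\textup{SO}}\le S_0$; thus $x_L^{\textup{SO}}$ already satisfies the equilibrium inequality that characterizes $x_L^{\textup{eq}}$ as the largest feasible flow, so $x_L^{\textup{SO}}\le x_L^{\textup{eq}}$. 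Consequently assumptions~1 and~2 of Proposition \ref{prop:eq} hold for this pair (assumption~2 because $x_L^{\textup{SO}}$ minimizes $g(\cdot,\mu_L)$), which makes the bounds of Lemma \ref{lemma:punish} available. The second fact is the defining property of the equilibrium flow, namely that an agent on safe does not wish to join the risky road, i.e. $\mu_L(x_L^{\textup{eq}}+1)\ge S_0$.

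The core step is to bound the two sides of \eqref{eq:def_xll_simple} at $d=x_L^{\textup{eq}}$. The left‑hand side (cost of following the recommendation of safe) is the convex combination
\[
u^*([d,U,r_S]) = p_{d,S}\,u^*([d,L,r_S]) + (1-p_{d,S})\,u^*([-,H,r_S]).
\]
By Lemma \ref{lemma:punish}-3, $u^*([d,L,r_S])\le \tfrac{1}{1-\delta}S_0$; and expanding $u^*([-,H,r_S])=S_0+\delta(\gamma_H u^*(1,L)+(1-\gamma_H)\bar v)$ and applying Lemma \ref{lemma:punish}-1 together with Lemma \ref{lemma:punish}-3 gives $u^*([-,H,r_S])\le S_0+\delta\tfrac{1}{1-\delta}S_0=\tfrac{1}{1-\delta}S_0$ as well, so the entire following cost is at most $\tfrac{1}{1-\delta}S_0$. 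For the right‑hand side (cost of deviating), I lower bound the stage term: using $\mu_L(x_L^{\textup{eq}}+1)\ge S_0$ and $2\mu_H\ge S_0$ (from $\mu_H\ge S_0$ in Assumption \ref{ass:mul_muh}), the convex combination obeys $p_{d,S}\mu_L(d+1)+(1-p_{d,S})2\mu_H\ge S_0$, whence the deviation cost is at least $S_0+\tfrac{\delta}{1-\delta}S_0=\tfrac{1}{1-\delta}S_0$. Chaining the two bounds shows the following cost does not exceed the deviating cost, so $d=x_L^{\textup{eq}}$ satisfies \eqref{eq:def_xll_simple}, completing the argument.

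The main point to get right — and the only place circularity could enter — is that the punishment bounds I invoke (Lemma \ref{lemma:punish}-1 and \ref{lemma:punish}-3) depend only on assumptions~1–2 of Proposition \ref{prop:eq} and on Lemma \ref{lemma:mlxl}, and not on \eqref{eq:def_xll_simple} itself; otherwise I would be assuming the very inequality I want to verify. Once that is secured, the remainder is the clean observation that at the equilibrium flow the marginal agent's expected stage cost on risky is exactly large enough ($\ge S_0$) to dominate the at‑most‑$\tfrac{1}{1-\delta}S_0$ cost of remaining on safe, which is precisely what makes $d=x_L^{\textup{eq}}$ feasible.
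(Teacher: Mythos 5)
Your proof is correct and takes essentially the same route as the paper's: both arguments exhibit $d=x_L^{\textup{eq}}$ as feasible for \eqref{eq:def_xll_simple} by bounding the cost of following the safe recommendation above by $\tfrac{S_0}{1-\delta}$ via Lemma \ref{lemma:punish}, and bounding the deviation cost below by $\tfrac{S_0}{1-\delta}$ using $\mu_L(x_L^{\textup{eq}}+1)\ge S_0$ and $\mu_H\ge S_0$, so that minimality of $\bar{x}_{LL}$ gives $\bar{x}_{LL}\le x_L^{\textup{eq}}$. Your two added checks --- that $x_L^{\textup{SO}}\le x_L^{\textup{eq}}$ (so the pair is admissible) and that Lemma \ref{lemma:punish} rests only on assumptions 1--2 of Proposition \ref{prop:eq} rather than on \eqref{eq:def_xll_simple} --- are correct and make explicit what the paper leaves implicit, but they do not change the argument.
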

\begin{proof}
Note that 
\begin{align}
u^*([d, U, r_S])&=p_{d,S} \frac{S_0 + \delta \gamma_L \bar{v} }{%
1-\delta(1-\gamma_L)} + (1-p_{d,S}) \left( S_0 + \delta \left( \gamma_H
u^*(1,L) + (1-\gamma_H) \bar{v} \right) \right)  \notag \\
&=p_{d,S} \left(S_0 + \frac{\delta (S_0(1-\gamma_L)+ \gamma_L \bar{v}) }{%
1-\delta(1-\gamma_L)} \right) + (1-p_{d,S}) \left( S_0 + \delta \left(
\gamma_H u^*(1,L) + (1-\gamma_H) \bar{v} \right) \right). \notag
\end{align}
Hence by Lemma \ref{lemma:punish}
\begin{align}
u^*([d, U, r_S])\le p_{d,S} \left(S_0 + \frac{\delta S_0 }{1-\delta} \right)
+ (1-p_{d,S}) \left(S_0 + \frac{\delta S_0 }{1-\delta} \right)= S_0 + \frac{%
\delta S_0 }{1-\delta}.  \notag
\end{align}
A sufficient condition for \eqref{eq:def_xll_simple} to hold is therefore, 
\begin{equation*}
S_0 < p_{d,S} \mu_L (d+1)+(1-p_{d,S} )2\mu_H. 
\end{equation*}
We next show that $d=x_L^{\textup{eq}}$ satisfies this condition and thus %
\eqref{eq:def_xll_simple}. To this end, recall that $\mu_L (x_L^{\textup{eq}}+1)>S_0$ and $%
\mu_H \geq S_0$ hence 
\begin{equation*}
p_{d,S} \mu_L (d+1)+(1-p_{d,S} )2\mu_H > p_{d,S} S_0+(1-p_{d,S} )S_0=S_0, 
\end{equation*}
as desired.
\end{proof}

\subsection{Proof of Proposition \protect\ref{prop:ratio}}

\noindent The proposition follows from the following lemma.

\begin{lemma}
As $\delta \rightarrow 1$, $x_{LL} \rightarrow x_L^{\textup{SO}}$.
\end{lemma}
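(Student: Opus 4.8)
The plan is to exploit the fact that $x_{LL}=\max\{x_L^{\textup{SO}},\bar x_{LL}\}$ is integer-valued and never below $x_L^{\textup{SO}}$, so the asserted limit is nothing but the existence of some $\delta_0<1$ with $x_{LL}=x_L^{\textup{SO}}$ for every $\delta\in(\delta_0,1)$. Since $\bar x_{LL}$ is by definition the smallest integer $d$ for which $(c,d)=(x_L^{\textup{SO}},d)$ satisfies \eqref{eq:def_xll_simple}, it therefore suffices to show that for $\delta$ close enough to $1$ the pair $c=d=x_L^{\textup{SO}}$ itself satisfies \eqref{eq:def_xll_simple}; this would give $\bar x_{LL}\le x_L^{\textup{SO}}$ and hence $x_{LL}=x_L^{\textup{SO}}$. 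Note that for this choice $\pi_{c,d}$ reduces exactly to the social optimum scheme $\pi^{\textup{SO}}$.

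First I would multiply \eqref{eq:def_xll_simple} by $(1-\delta)$ and let $\delta\to1$. On the right-hand side the continuation term $\tfrac{\delta}{1-\delta}S_0$ dominates while the stage terms $p_{d,S}\mu_L(d+1)+(1-p_{d,S})2\mu_H$ are bounded (with $p_{d,S}$ in fact independent of $\delta$ by Lemma \ref{lem:prob}), so $(1-\delta)$ times the right-hand side converges to $S_0$. On the left-hand side I expand $u^*([d,U,r_S])=p_{d,S}u^*([d,L,r_S])+(1-p_{d,S})u^*([-,H,r_S])$ and use the closed forms of Lemma \ref{lemma:closed} together with \eqref{v_bar}; each normalized cost $(1-\delta)u^*(\cdot)$ then converges to the long-run average per-agent cost of $\pi^{\textup{SO}}$, the same limit for every starting state (by the standard Abelian limit for the irreducible, aperiodic two-state chain, irreducibility and aperiodicity holding since $\gamma_L,\gamma_H\in(0,\tfrac12]$, or directly from the rational expressions in $\delta$). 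Call this limit $\hat c_{\textup{SO}}$. Thus $(1-\delta)$ times the left-hand side tends to $\hat c_{\textup{SO}}$, and \eqref{eq:def_xll_simple} holds for all $\delta$ near $1$ as soon as $\hat c_{\textup{SO}}<S_0$.

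The main obstacle is precisely this strict inequality $\hat c_{\textup{SO}}<S_0$: invoking point~1 of Lemma \ref{lemma:punish}, which gives $\bar v<\tfrac{S_0}{1-\delta}$ for each fixed $\delta$, only yields the non-strict limit $\hat c_{\textup{SO}}\le S_0$, so I would instead compute $\hat c_{\textup{SO}}$ explicitly. Taking the $\delta\to1$ limit of \eqref{v_bar} with $c=d=x_L^{\textup{SO}}$ gives
\begin{equation*}
\hat c_{\textup{SO}}=\frac{\gamma_L\,g(1,\mu_H)+\gamma_H\,g(x_L^{\textup{SO}},\mu_L)}{N(\gamma_L+\gamma_H)},
\end{equation*}
which is exactly the stationary-distribution average of the per-agent stage costs $g(1,\mu_H)/N$ (state $H$) and $g(x_L^{\textup{SO}},\mu_L)/N$ (state $L$). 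Substituting $g(x_L^{\textup{SO}},\mu_L)=\mu_L(x_L^{\textup{SO}})^2+S_0(N-x_L^{\textup{SO}})$ and $g(1,\mu_H)=\mu_H+S_0(N-1)$ reduces the target $\hat c_{\textup{SO}}<S_0$ to
\begin{equation*}
\gamma_L(\mu_H-S_0)<\gamma_H\,x_L^{\textup{SO}}\bigl(S_0-\mu_L x_L^{\textup{SO}}\bigr).
\end{equation*}

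To close this I would bound the two sides using the standing hypotheses. Assumption \ref{ass:mul_muh} gives $\mu_H-S_0\le\delta\gamma_H(\tfrac{S_0}{3}-\mu_L)\le\gamma_H(\tfrac{S_0}{3}-\mu_L)$ and $\mu_L<\tfrac{S_0}{3}$, so together with $\gamma_L\le\tfrac12$ the left-hand side is at most $\tfrac{\gamma_H S_0}{6}$; on the other hand $x_L^{\textup{SO}}\ge2$ (Proposition \ref{prop:so}) and, $x_L^{\textup{SO}}$ being the nearest integer to $\tfrac{S_0}{2\mu_L}$, one gets $\mu_L x_L^{\textup{SO}}\le\tfrac{S_0}{2}+\tfrac{\mu_L}{2}<\tfrac{2S_0}{3}$, whence the right-hand side exceeds $\tfrac{2\gamma_H S_0}{3}$. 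Since $\tfrac{\gamma_H S_0}{6}<\tfrac{2\gamma_H S_0}{3}$, the inequality holds and $\hat c_{\textup{SO}}<S_0$. Feeding this back into the rescaling argument of the second paragraph yields that $c=d=x_L^{\textup{SO}}$ satisfies \eqref{eq:def_xll_simple} for $\delta$ near $1$, hence $\bar x_{LL}\le x_L^{\textup{SO}}$ and $x_{LL}=x_L^{\textup{SO}}$, completing the proof.
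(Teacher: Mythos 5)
Your proof is correct and takes essentially the same route as the paper's: both check that $c=d=x_L^{\textup{SO}}$ satisfies \eqref{eq:def_xll_simple} for $\delta$ near $1$ by multiplying the constraint by $(1-\delta)$ and reducing the limit to the strict stationary-average inequality $\frac{\gamma_L g(1,\mu_H)+\gamma_H g(x_L^{\textup{SO}},\mu_L)}{N(\gamma_L+\gamma_H)}<S_0$ (exactly the paper's condition $\lim_{\delta\to1}T_5<S_0$), verified from Assumption \ref{ass:mul_muh} together with $x_L^{\textup{SO}}\ge 2$ and $\mu_L x_L^{\textup{SO}}\le\frac{S_0}{2}+\frac{\mu_L}{2}$. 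Your explicit appeal to integrality of $x_{LL}$, the Abelian/stationary-average framing of the limit, and your slightly different final numerical bounds are cosmetic variations on the paper's term-by-term computation.
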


\begin{proof}
Recall $x_{LL}$ is the smallest integer $d$ (weakly greater than $x_L^{\textup{%
SO}}$) that satisfies \eqref{eq:def_xll_simple} for $c=x_L^{\textup{SO}}$. Specifically, plugging in values from Lemma \ref{lemma:closed}, taking the safe road (i.e.
following the recommendation) is the best response if 
\begin{align}
u^*([d, U, r_S]) &=p_{d,S} \frac{S_0 + \delta \gamma_L \bar{v} }{%
1-\delta(1-\gamma_L)} + (1-p_{d,S}) \left( S_0 + \delta \left( \gamma_H
u^*(1,L) + (1-\gamma_H) \bar{v} \right) \right)  \notag \\
&\leq p_{d,S} \mu_L (d+1) + (1-p_{d,S}) 2 \mu_H + \frac{\delta}{1-\delta}
S_0.  \label{eq:def_xll}
\end{align}
We let $c=d = x_L^{\textup{SO}}
$ in \eqref{eq:def_xll} and show as $\delta \rightarrow 1$ the constraint
holds. By Lemma \ref{lemma:punish}.3 and Lemma \ref{lemma:closed} it holds
\begin{align*}
u^*(1,L) &= \frac{x_L^{\textup{SO}}-1}{N-1} \left( \frac{1}{1-\delta(1-\gamma_L)} (\mu_L x_L^{\textup{SO}} +  \delta \gamma_L \bar{v}) \right) + \frac{N-x_L^{\textup{SO}}}{N-1} \left( \frac{1}{1-\delta(1-\gamma_L)} (S_0 + \delta \gamma_L \bar{v})\right) \\
&= \frac{1}{1-\delta(1-\gamma_L)} \left( \frac{x_L^{\textup{SO}}-1}{N-1} \mu_L x_L^{\textup{SO}} + \frac{N-x_L^{\textup{SO}}}{N-1} S_0 \right) + \bar{v} \left( \frac{\delta \gamma_L}{1-\delta(1-\gamma_L)} \right).
\end{align*}
Hence \eqref{eq:def_xll}  becomes 
\begin{equation} \label{eq11}
\begin{aligned} 
&\underbrace{p_{d,S} \left( \frac{1}{1-\delta(1-\gamma_L)} S_0 \right) + (1-p_{d,S})
\left( S_0 + \frac{\delta \gamma_H}{1-\delta(1-\gamma_L)} \left( \frac{x_L^{%
\textup{SO}}-1}{N-1} \mu_L x_L^{\textup{SO}} + \frac{N-x_L^{\textup{SO}}}{%
N-1} S_0 \right) \right)}_{T_1} \\
&+ \bar{v} \underbrace{\left( p_{d,S} \frac{\delta \gamma_L}{1-\delta(1-\gamma_L)} +
(1-p_{d,S}) \left( \frac{ \delta^2 \gamma_H \gamma_L}{1-\delta(1-\gamma_L)}  +
\delta (1-\gamma_H)\right) \right)}_{T_2} \\
&\leq \underbrace{p_{d,S} \left( \mu_L (x_L^{\textup{SO}} +1) \right) + (1-p_{d,S})
\left( 2 \mu_H \right)}_{T_3} + \underbrace{\frac{\delta}{1-\delta} S_0}_{T_4}.
\end{aligned}
\end{equation}
It follows from \eqref{v_bar} with $g(c,\mu_L)=g(d,\mu_L)=g(x_L^{\textup{SO}},\mu_L)$ that
\begin{align*}
\bar{v} = \frac{1}{(1-\delta)} \underbrace{ \frac{1-\delta(1-\gamma_L)}{N(1-\delta(1-\gamma_H-%
\gamma_L)} \left( \mu_H + (N-1) S_0 + \frac{\delta
\gamma_H}{1-\delta(1-\gamma_L)} \left(  \mu_L
(x_L^{\textup{SO}} )^2+(N-x_L^{\textup{SO}}) S_0 \right) \right)}_{T_5}.
\end{align*}
Substituting in \eqref{eq11} and multiplying both sides by $(1-\delta)$ yields 
$$ T_1(1-\delta) + T_5 T_2 \le T_3 (1-\delta)+\delta S_0.$$
Since $T_1$ and $T_3$ are finite and $\lim_{\delta\rightarrow 1}T_2= 1$,
 when $\delta
\rightarrow 1$ a sufficient condition for \eqref{eq11} to hold is 
\begin{align*}
\lim_{\delta \rightarrow 1} T_5 := &\frac{\gamma_L}{(\gamma_H+ \gamma_L)N} \left(\mu_H + (N-1)
S_0 + \frac{\gamma_H}{\gamma_L} \left(  \mu_L
(x_L^{\textup{SO}})^2 + (N-x_L^{\textup{SO}}) S_0 \right) \right) <
S_0 \\
\iff & \gamma_L \mu_H + \gamma_L S_0 (N-1) + \gamma_H (\mu_L (x_L^{\textup{%
SO}})^2 + S_0 (N-x_L^{\textup{SO}})) < (\gamma_H + \gamma_L) S_0 N \\
\iff & \mu_H < S_0 + \frac{\gamma_H}{\gamma_L} (  - \mu_L (x_L^{%
\textup{SO}})^2+S_0x_L^{\textup{SO}}) \\
&\quad = S_0 + \frac{\gamma_H}{\gamma_L} x_L^{\textup{SO}} (S_0 - \mu_L
x_L^{\textup{SO}}).
\end{align*}
The inequality holds by Assumption \ref{ass:mul_muh} if 
\begin{equation*}
\frac{\gamma_H}{\gamma_L} x_L^{\textup{SO}} (S_0 - \mu_L x_L^{\textup{SO}}) >
\gamma_H \left(\frac{S_0}{3} - \mu_L \right)
\end{equation*}
this follows from $1/ \gamma_L > 1$ and 
\begin{align*}
x_L^{\textup{SO}} (S_0 - \mu_L x_L^{\textup{SO}}) \geq x_L^{\textup{SO}}
\left( S_0 - \frac{S_0}{2} - \frac{\mu_L}{2} \right) = \frac{x_L^{\textup{SO%
}}}{2} (S_0 - \mu_L) \geq S_0 - \mu_L \ge \frac{S_0}{3} - \mu_L
\end{align*}
where the first inequality follows from $x_L^{\textup{SO}} \leq \frac{S_0}{%
2 \mu_L} + \frac{1}{2}$ (can be proven similarly as in Lemma~\ref{lemma:closed}) and the second inequality follows from $x_L^{\textup{SO%
}} \geq 2$.
\end{proof}

\subsection{Proof of Proposition \protect\ref{prop:delta_small}}

We have $\pi^*=\pi_{1,1,x_{L}^{\textup{SO}},x_{LL}}$ and $%
V^*=V_{1,1,x_{L}^{\textup{SO}},x_{LL}}$. We denote an arbitrary optimal incentive
compatible scheme (OICS) in $\hat \Pi$ as $\bar \pi:=\pi_{\bar a,\bar b,\bar
c,\bar d}$ with social cost $\bar V$. We proceed in steps.

\begin{enumerate}
\item \textbf{Proof of Lemma \ref{lem:small}: Any OICS $\boldsymbol{\bar \pi}$ must satisfy} $\boldsymbol{%
\bar a,\bar b \leq 1}$\newline
For any recommendation in $\hat{\Pi}$ the experimenters, i.e. the agents
chosen to take risky at time $t$ when $\theta_{t-1} = H$ are selected at random from all
agents. Thus, with positive probability the agent or agents chosen as part
of $a$ or $b$ know that $\theta_{t-1} = H$. We next show that if $a$ or $b$ is greater than one,
then it is not incentive compatible for an agent to follow this
recommendation when $\delta \leq 1/2$. Specifically, following gives cost of
at least 
\begin{equation*}
2 \mu_H + \delta ((1-\gamma_H) \hat{v}_L + \gamma_H \hat{v}_H) 
\end{equation*}
where $\hat{v}_L $ and $\hat{v}_H$ are some positive continuation costs.
Deviating to safe has cost 
\begin{equation*}
\frac{1}{1-\delta} S_0 \le 2S_0 \quad \textup{for} \quad \delta \leq 1/2. 
\end{equation*}
Since $\mu_H \geq S_0$ and the continuation costs $\hat{v}_L $ and $\hat{v}_H
$ are positive it is not incentive compatible for the agent to follow and
take risky when $a,b\ge2$. Thus, for incentive compatibility to hold at most
one agent can be sent to risky.


\item \textbf{Any OICS satisfies $\bar c=x^{\textup{SO}}_L$ or $\bar c=x^{\textup{SO}}_L+1$.} {%
In this proof we assume $x_{LL}>x^{\textup{SO}}_L$ (if not $\pi^*$ coincides with the
social optimum and is thus already an OICS) hence $x_{LL}$ is the minimum
integer satisfying \eqref{eq:def_xll} for $c=x_{L}^{\textup{SO}}$.} The incentive
compatibility constraint \eqref{eq:def_xll} can be equivalently rewritten as 
\begin{equation}  \label{IC_cd}
\begin{aligned} &p_{d,S} u^*([d,L,r_S]) + (1-p_{d,S}) \left( S_0 + \delta
\left( (1-\gamma_H)\bar{v} + \gamma_H \left( \frac{c-1}{N-1} \mu_L c +
\frac{N-c}{N-1} S_0 \right. \right. \right. \\ &\qquad \qquad \left. \left.
\left. + \delta \left( \gamma_L \bar{v} + (1-\gamma_L) \left(
\frac{d-1}{N-1} u^*([d,L,r_R]) + \frac{N-d}{N-1} u^*([d,L,r_S]) \right)
\right) \right) \right) \right) \\ &\leq p_{d,S} \mu_L (d+1) + (1-p_{d,S}) 2
\mu_H + \frac{\delta}{1-\delta} S_0. \end{aligned}
\end{equation}
Recall from Lemma \ref{lem:prob} that
\begin{align} \label{eq:pds}
p_{d,S}=\frac{1-\gamma_L}{1-\gamma_L+ \gamma_L \frac{N-1}{N}}
\end{align}
does not depend on $d$, hence within this proof to avoid confusion we denote this by $p_S$.
By substituting the expressions of $u^*([d,L,r_S])$ and $u^*([d,L,r_R])$
computed in Lemma \ref{lemma:closed} and the expression of $\bar{v}$ given
in \eqref{v_bar} the LHS of the IC constraint \eqref{IC_cd} can be rewritten
as 
\begin{align*}
&p_{S} u^*([d,L,r_S]) + (1-p_{S}) \left( S_0 + \delta \left( (1-\gamma_H)%
\bar{v} + \gamma_H \left( \frac{c-1}{N-1} \mu_L c + \frac{N-c}{N-1} S_0
\right. \right. \right. \\
&\qquad \qquad \left. \left. \left. + \delta \left( \gamma_L \bar{v} + \frac{%
(1-\gamma_L)}{1-\delta(1-\gamma_L)} \left( \frac{d-1}{N-1} \mu_L d + \frac{%
N-d}{N-1} S_0 + \delta \gamma_L \bar{v} \right) \right) \right) \right)
\right)  \notag \\
&= p_{S} \frac{1}{1-\delta(1-\gamma_L)} \left( S_0 + \delta \gamma_L \bar{v%
} \right) + (1-p_{S}) \left( S_0 + \delta \left( (1-\gamma_H)\bar{v} + \gamma_H \left( \frac{c-1}{N-1} \mu_L c \right. \right. \right. \\
& \qquad \qquad \left. \left. \left. + \frac{N-c}{N-1} S_0 + \frac{\delta}{1-\delta(1-\gamma_L)}
\left( \gamma_L \bar{v} + (1-\gamma_L) \left( \frac{d-1}{N-1} \mu_L d + 
\frac{N-d}{N-1} S_0 \right) \right) \right) \right) \right) \\
&= \frac{p_{S} S_0}{1-\delta(1-\gamma_L)} + (1-p_{S}) \left( S_0 + \delta \gamma_H \left( \frac{c-1}{N-1}
\mu_L c + \frac{N-c}{N-1} S_0 \right. \right. \\
&\qquad \qquad \left. \left. + \frac{\delta (1-\gamma_L)}{1-\delta(1-\gamma_L)} \left( \frac{d-1}{N-1} \mu_L d + \frac{N-d}{N-1} S_0
\right) \right) \right) \\
&\qquad \qquad + \bar{v} \left( \underbrace{p_{S} \frac{\delta \gamma_L}{%
1-\delta(1-\gamma_L)} + (1-p_{S}) \delta \left((1-\gamma_H) + \gamma_H 
\frac{\delta \gamma_L}{1-\delta(1-\gamma_L)} \right)}_{\tau} \right) \\
\end{align*}
\begin{align*}
&=\frac{p_{S} S_0}{1-\delta(1-\gamma_L)} + (1-p_{S}) \left( S_0 + \delta \gamma_H \left( \frac{c-1}{N-1}
\mu_L c + \frac{N-c}{N-1} S_0 +\right. \right. \\
& \qquad \qquad \left. \left. \frac{\delta (1-\gamma_L)}{1-\delta(1-\gamma_L)} \left( \frac{d-1}{N-1} \mu_L d + \frac{N-d}{N-1} S_0
\right) \right) \right) \\
&\qquad \qquad + {\frac{\tau \tilde \tau}{N}\left[ g(1,\mu_H)
+\delta\gamma_H g(c,\mu_L)+\delta^2 \frac{(1-\gamma_L)}{1-\delta(1-\gamma_L)}%
\gamma_H g(d,\mu_L) \right]},
\end{align*}
for 
\begin{equation}\label{tau}
\begin{aligned}
\tilde{\tau}:=&\frac{1-\delta(1-\gamma_L)}{(1-\delta)(1-\delta(1-\gamma_H -\gamma_L))}\\
\tau :=& p_{S} \frac{\delta \gamma_L}{1-\delta(1-\gamma_L)} +
(1-p_{S}) \delta \left((1-\gamma_H) + \gamma_H \frac{\delta \gamma_L}{%
1-\delta(1-\gamma_L)} \right) 
\end{aligned}
\end{equation}
independent of $a,b,c,d$. Note that this is separable in $c$ and $d$. {%
Specifically, by bringing all the terms depending on c on the LHS and all
the terms depending on $d$ on the RHS, we can rewrite the IC constraint %
\eqref{IC_cd} as $f(c)\le g(d)$ with} 
\begin{equation}  \label{IC_cd2}
\begin{aligned} f(c)&:={ \frac{\delta(1-p_{S}) \gamma_H}{N-1}
\underbrace{\left( (c-1) \mu_L c + (N-c) S_0\right)}_{=:f_1(c)}
+\frac{\tau \tilde \tau}{N} \delta\gamma_H \underbrace{g(c,\mu_L)}_{=:f_2(c)} +k}\\
g(d)&:=p_{S} \mu_L (d+1) - \delta(1-p_{S}) \gamma_H \frac{\delta
(1-\gamma_L)}{1-\delta(1-\gamma_L)} \left( \frac{d-1}{N-1} \mu_L d +
\frac{N-d}{N-1} S_0 \right) \\ &\qquad \qquad - \frac{\tau \tilde \tau}{N} \delta^2
\frac{(1-\gamma_L)}{1-\delta(1-\gamma_L)}\gamma_H g(d,\mu_L)\\
k&:={\frac{p_{S} S_0}{1-\delta(1-\gamma_L)} + (1-p_{S}) S_0}+
\frac{\tau \tilde \tau}{N} g(1,\mu_H)-(1-p_{S}) 2 \mu_H -\frac{\delta}{1-\delta} S_0.
\end{aligned}
\end{equation}
Note that $k$ is a constant independent of $c$ and $d$. The functions $f_1(c)
$ and $f_2(c)$ are quadratic, convex and, disregarding the integer
constraint, are minimized when $c = \frac{\mu_L + S_0}{2 \mu_L}= \frac{S_0}{%
2 \mu_L} + \frac{1}{2}$ and when $c = \frac{S_0}{2 \mu_L}$, respectively.
This means that, disregarding the integer constraint, $f(c)$ is minimized for some $\tilde c^*\in \left[ \frac{S_0}{2
\mu_L}, \frac{S_0}{2 \mu_L} + \frac{1}{2} \right].$ Let $c^*$ be the integer
minimizer of $f(c)$ (i.e. the integer closest to $\tilde c^*$). Recall that 
$x_{L}^{\textup{SO}}$ is the integer minimizer of $g(c,\mu_L)$ (i.e. the integer
closest to $\frac{S_0}{2 \mu_L}$).

There are two possible cases:

\begin{enumerate}
\item \textit{if there exists an integer $\hat c$ in the interval $\left[ 
\frac{S_0}{2 \mu_L}, \frac{S_0}{2 \mu_L} + \frac{1}{2} \right]$ then $%
c^*=\hat c=x_{L}^{\textup{SO}}$.}\newline
In fact, $|\hat c-\frac{S_0}{2 \mu_L}|\le \frac12 \Rightarrow \hat
c=x_{L}^{\textup{SO}} $ and $|\hat c-\tilde c^*|\le \frac12 \Rightarrow \hat c=c^*$;

\item \textit{if there is no integer in the interval $\left[ \frac{S_0}{2
\mu_L}, \frac{S_0}{2 \mu_L} + \frac{1}{2} \right]$ then either $%
c^*=x_{L}^{\textup{SO}}$ or $c^*=x_{L}^{\textup{SO}}+1$.}\newline
In fact, let $\hat c^-$ be the largest integer smaller than $\frac{S_0}{2
\mu_L}$ and $\hat c^+$ be the smallest integer larger than $\frac{S_0}{2
\mu_L}+\frac12$ (i.e. $\hat c^-+1$). Since there is no integer in $\left[ 
\frac{S_0}{2 \mu_L}, \frac{S_0}{2 \mu_L} + \frac{1}{2} \right]$ it must be $%
|\hat c^- -\frac{S_0}{2 \mu_L}|\le \frac12 \Rightarrow \hat c^-=x_{L}^{\textup{SO}}$.
On the other hand $c^*$ is equal to either $\hat c^-$ or $\hat c^+$
depending on whether $\tilde c^*$ is smaller or larger than $\frac{\hat
c^-+\hat c^+}{2}$.
\end{enumerate}

Thus, $f(c)$ is minimized at $c^* =
x_L^{\textup{SO}}$ or $c^* = x_L^{\textup{SO}}+1$ and for any $\tilde
c\neq \{x_L^{\textup{SO}}, x_L^{\textup{SO}}+1\}$, we get $f(\tilde c)\ge
f(x_L^{\textup{SO}})$.

We now prove that if $(\tilde c,\tilde d)$ satisfies \eqref{eq:def_xll} with 
$c\neq \{x_L^{\textup{SO}}, x_L^{\textup{SO}}+1\}$ then $\tilde d \ge
x_{LL}$. In fact it must be $g(\tilde d) \ge f(\tilde c)\ge f(x_L^{\textup{SO}})$.
Since $x_{LL}$ is the minimum integer satisfying $g(d)\ge f(x_L^{\textup{SO}}) $ it
must be $\tilde d \ge x_{LL}$.

Hence any IC scheme with $\tilde{c} \neq \{x_L^{\textup{SO}}, x_L^{\textup{SO}}+1\}$
has higher cost then $\pi^*$ (since $\tilde{c} \geq x_L^{\textup{SO}}$ and $\tilde{d} \geq x_{LL}$) and cannot be optimal.\footnote{Recall that by \eqref{v_bar} the cost is $\tilde{\tau} \left[ g(1,\mu_H)
+\delta\gamma_H g(c,\mu_L)+\delta^2 \frac{(1-\gamma_L)}{1-\delta(1-\gamma_L)} \gamma_H g(d,\mu_L) \right]$ and $g(c, \mu_L)/g(d,\mu_L)$ are minimized for $c=x_L^{\textup{SO}}/d=x_L^{\textup{SO}}$ and strictly increasing functions for larger values of $c/d$.}

\item \textbf{Any OICS satisfies $\bar d\ge x^{LL}-1$ }

From the previous point we know, that in any OICS it must be $\bar c\in
\{x_L^{\textup{SO}}, x_L^{\textup{SO}}+1\}$.
\begin{itemize}
\item If $\bar c = x_L^{\textup{SO}}$ then by definition $x_{LL}$ is the
minimum value of $\bar d$ to maintain incentive compatibility.

\item If instead $\bar c = x_L^{\textup{SO}} + 1$ we show that to maintain
incentive compatibility it must be $\bar d \geq x_{LL}-1$. In fact suppose
by contradiction that $f( x_L^{\textup{SO}} + 1)\le g(x_{LL}-2)$, then we
show that under our assumptions $f( x_L^{\textup{SO}})\le g(x_{LL}-1)$,
which is absurd since $x_{LL}$ is the minimum integer satisfying the IC
constraint. To this end it suffices to show  
\begin{equation*}
f( x_L^{\textup{SO}})-f( x_L^{\textup{SO}} + 1) \le
g(x_{LL}-1)-g(x_{LL}-2). 
\end{equation*}
Note that
\begin{equation} \label{step}
\begin{aligned}
&f( x_L^{\textup{SO}})-f( x_L^{\textup{SO}} + 1) \le
g(x_{LL}-1)-g(x_{LL}-2) \\
\iff &\frac{\delta (1-p_{S}) \gamma_H}{N-1} \left( S_0 - 2 x_L^{%
\textup{SO}} \mu_L + \frac{\delta (1-\gamma_L)}{1-\delta(1-\gamma_L)} (S_0
- 2(x_{LL}-2) \mu_L) \right) \\
&+ \frac{\tau \tilde{\tau}}{N} \delta \gamma_H \left( S_0 - (2 x_L^{\textup{%
SO}} + 1) \mu_L + \frac{\delta (1-\gamma_L)}{1-\delta(1-\gamma_L)} (S_0 -
(2(x_{LL}-2)+1)\mu_L) \right) \\
& \qquad \leq p_{S} \mu_L.
\end{aligned}
\end{equation}

Now, by $x_{LL} \geq x_L^{\textup{SO}} + 1$\footnote{Recall that if $x_{LL} = x_L^{\textup{SO}}$ then $\pi^*$ is equivalent to the social optimum and it is therefore the OICS.} and $x_L^{\textup{SO}} \geq 
\frac{S_0}{2 \mu_L} - \frac{1}{2}$ (see Lemma \ref{lemma:mlxl}) the following two inequalities hold 
\begin{align*}
S_0 -
(2(x_{LL}-2)+1)\mu_L& \leq S_0 -
2(x_{LL}-2)\mu_L \leq S_0 - 2
\mu_L (x_L^{\textup{SO}}-1) \\
& \leq S_0 - S_0 + \mu_L +2\mu_L= 3\mu_L \\
 S_0 - (2x_L^{\textup{SO}}+1) \mu_L &\leq S_0 - 2 \mu_L x_L^{\textup{SO}}
\leq \mu_L
\end{align*}
and by $\delta \leq 1/2$ 
\begin{equation*}
\frac{\delta(1-\gamma_L)}{1-\delta(1-\gamma_L)} \leq 1.
\end{equation*}
Thus, the left hand side of \eqref{step} can be upper bounded by 
\begin{align*}
4\mu_L\left(\frac{\delta(1-p_{S}) \gamma_H}{N-1}  + \frac{\tau \tilde{\tau}%
}{N} \delta \gamma_H \right).
\end{align*}
So it is sufficient if 
\begin{align}
&4 \left( \frac{\delta (1-p_{S})\gamma_H}{N-1} + \frac{\tau \tilde{%
\tau}}{N} \delta \gamma_H \right) \leq p_{S} \nonumber \\
\iff &4 \delta \gamma_H \left( (1-p_{S}) + \frac{N-1}{N} \tau \tilde{%
\tau} \right) \leq (N-1) p_{S} \nonumber \\
\overset{\eqref{tau}}{\iff} &4 \delta \gamma_H \left( (1-p_{S}) + \frac{N-1}{N} \tilde{\tau}
\left( p_{S} \frac{\delta \gamma_L}{1-\delta(1-\gamma_L)} + \right. \right. \nonumber \\
& \qquad \qquad \left. \left. (1-p_{S}) \delta \left( (1-\gamma_H) + \gamma_H \frac{\delta \gamma_L
}{1-\delta(1-\gamma_L)} \right) \right) \right) \leq (N-1) p_{S} \nonumber \\
\overset{\eqref{eq:pds}}{\iff} & 4 \delta \gamma_H \left(\gamma_L \frac{N-1}{N} + \frac{N-1}{N} \tilde{%
\tau} \left( \frac{\delta \gamma_L (1-\gamma_L)}{1-\delta(1-\gamma_L)} \right. \right. \nonumber \\ &
\left. \left. \qquad \qquad + \gamma_L \frac{N-1}{N} \delta \left((1-\gamma_H) + \gamma_H \frac{\delta
\gamma_L }{1-\delta(1-\gamma_L)} \right)\right) \right) \leq (N-1) (1-\gamma_L) \nonumber \\
\iff &4 \delta \gamma_H \left( \gamma_L + \tilde{\tau} \underbrace{\left( \frac{\delta
\gamma_L (1-\gamma_L)}{1-\delta(1-\gamma_L)} + \gamma_L \frac{N-1}{N} \delta
\left((1-\gamma_H) + \gamma_H \frac{\delta \gamma_L }{%
1-\delta(1-\gamma_L)} \right)\right)}_{\hat \tau} \right)\nonumber \\
&\qquad \qquad \leq (1-\gamma_L) N \nonumber \\
\iff & 4 \delta \gamma_H \left( \gamma_L + \tilde{\tau} \hat\tau\right) \le (1-\gamma_L)N. \label{formula1}
\end{align}
where the second equivalence comes from plugging in $\tau$ and the third from plugging in $p_{S}$ and multiplying by $\left( 1-\gamma_L + \gamma_L \frac{N-1}{N} \right)$.

Now, we show $\tilde{\tau} \leq 2$. By \eqref{tau}\begin{align*}
&\tilde{\tau}=\frac{1-\delta(1-\gamma_L)}{(1-\delta)(1-\delta(1-\gamma_H -\gamma_L))} =\frac{1-\delta(1-\gamma_L)}{(1-\delta(1-\gamma_H))(1-\delta(1-\gamma_L)) -
\delta^2 \gamma_H \gamma_L} \leq 2 \\
&\iff  1-\delta(1-\gamma_L) \leq 2 (
(1-\delta(1-\gamma_H))(1-\delta(1-\gamma_L)) - \delta^2 \gamma_H \gamma_L) \\
&\iff  1 \leq 2 \left( (1-\delta(1-\gamma_H)) - \frac{\delta^2 \gamma_H
\gamma_L}{1-\delta(1-\gamma_L)} \right) \\
&\iff  1 \leq 2 \left(1-\delta + \delta \gamma_H \left( 1- \frac{\delta
\gamma_L}{1-\delta(1-\gamma_L)} \right) \right).
\end{align*}
By $\frac{\delta \gamma_L}{1-\delta(1-\gamma_L)} \leq 1$ the right hand side
can be lower bounded by $2(1-\delta)$. Then by $\delta \leq 1/2$ the
inequality holds. Now we bound $\hat{\tau}$ by $1$ 
\begin{align*}
&\frac{\delta \gamma_L (1-\gamma_L)}{1-\delta(1-\gamma_L)} + \gamma_L \frac{%
N-1}{N} \delta \left((1-\gamma_H) + \gamma_H \frac{\delta \gamma_L }{%
1-\delta(1-\gamma_L)} \right) \\
&= \frac{\delta\gamma_L(1-\gamma_L)+ \delta \gamma_L \frac{N-1}{N}
((1-\gamma_H)(1-\delta(1-\gamma_L))+\delta\gamma_H \gamma_L)}{1-\delta(1-\gamma_L)}
\\
&\leq \frac{\delta(1-\gamma_L)+ \delta \gamma_L}{1-\delta(1-\gamma_L)} = 
\frac{\delta}{1-\delta(1-\gamma_L)} \leq 1.
\end{align*}

Plugging in the bounds for $\tilde{\tau}$ and $\hat{\tau}$ in \eqref{formula1} leads to the sufficient condition
\begin{align*}
4 \delta \gamma_H (\gamma_L + 2) \leq (1-\gamma_L) N.
\end{align*}
Then by $\delta \leq 1/2, \gamma_H \leq 1/2, \gamma_L \leq 1/2$ it is
sufficient if $N \geq 5$, which is true by assumption.
\end{itemize}
\end{enumerate}

We are now ready to prove the two main statements:\newline

\textbf{Proof of statement 2:}\newline
Overall, we know that $\pi^*$ is incentive compatible and achieves minimum
cost among the IC schemes with $\bar c=x_{L}^{\textup{SO}}$. From the points above,
we know that the only other possibility is $\bar c=x_{L}^{\textup{SO}}+1$ in which
case $\bar d\ge x_{LL}-1$. Any choice of $\bar d\ge x_{LL}$ leads to higher
cost than $V^*$ hence the only possibility left is $\tilde{\pi}^*$. If $%
\tilde{\pi}^*$ is IC and has cost $\tilde{V}^*$ lower than $V^*$ then that
is the OICS otherwise $\pi^*$ is. Whether that happens or not depends on the
chosen parameters. \newline

\textbf{Proof of statement 1:}\newline
We next show that for $\delta\rightarrow 0$, $\tilde{V}^*>V^*$. To this end,
recall the expression for $\bar v$ in \eqref{v_bar} and $\tilde \tau$ in \eqref{tau}. Then 
\begin{align*}
V^*&:= \tilde \tau \left[ g(1,\mu_H) +\delta\gamma_H
g(x_{L}^{\textup{SO}},\mu_L)+\delta^2 \frac{(1-\gamma_L)}{1-\delta(1-\gamma_L)}%
\gamma_H g(x_{LL},\mu_L) \right], \\
\tilde{V}^*&:= \tilde \tau \left[ g(1,\mu_H) +\delta\gamma_H
g(x_{L}^{\textup{SO}}+1,\mu_L)+\delta^2 \frac{(1-\gamma_L)}{1-\delta(1-\gamma_L)}%
\gamma_H g(x_{LL}-1,\mu_L) \right],
\end{align*}
For $\delta\rightarrow 0$ the terms in $\delta^2$ are negligible and the
conclusion follows by $g(x_{L}^{\textup{SO}},\mu_L)<g(x_{L}^{\textup{SO}}+1,\mu_L).$

\bibliographystyle{plainnat}
\bibliography{routing_bib}

\end{document}